\newcolumntype{L}{>$l<$}
\newcommand{\myitem}[1][]{
  \protected@edef\@currentlabel{#1}%
\item[#1]
}
\newcommand{\moreless}[2]{#1}
\newcommand{\techreport}[1]{\moreless{#1}{}}
\newcommand{\ifspace}[1]{}
\newcommand{\m}[1]{\mathsf{#1}}
\newcommand{\defFunc}[2]{\newcommand{#1}{\mathsf{#2}}}
\newcommand{\Nat}{\mathbb{N}}
\newcommand{\Int}{\mathbb{Z}}
\newcommand{\impl}{\Rightarrow}
\renewcommand{\iff}{\Leftrightarrow}
\newcommand{\dom}{\operatorname{\mathsf{dom}}}
\newcommand{\pto}{\rightharpoonup}
\newcommand{\defeq}{\coloneqq}
\newcommand{\defiff}{\vcentcolon\iff}
\newcommand{\paren} [1] {\ensuremath{ \left( {#1} \right) }}
\newcommand{\bracket}[1]{\left[#1\right]}
\newcommand{\tuple}[1]{\ensuremath{\langle #1 \rangle}}
\newcommand{\set}[1]{\ensuremath{\left\{#1\right\}}}
\newcommand{\setcomp}[2]{\ensuremath{\left\{#1\;\middle|\;#2\right\}}}
\newcommand{\todo}[1]{}
\newcommand{\ftodo}[1]{}
\newcommand{\rSc}[1]{\S\ref{#1}}
\newcommand{\rF}[1]{Fig.~\ref{#1}}
\newcommand{\rL}[1]{Lemma~\ref{#1}}
\newcommand{\rT}[1]{Theorem~\ref{#1}}
\newcommand{\rE}[1]{Example~\ref{#1}}
\newcommand{\code}[1]{\textnormal{\texttt{#1}}}
\newcommand{\graph}{G}
\newcommand{\graphPlus}{\uplus}
\newcommand{\graphEmpty}{G_{\mathsf{e}}}
\newcommand{\nodeDom}{\mathcal{N}}
\newcommand{\inflowEmpty}{\inflow_{\mathsf{e}}}
\newcommand{\flowEmpty}{\flow_{\mathsf{e}}}
\newcommand{\intDom}{D}
\newcommand{\intLeq}{\sqsubseteq}
\newcommand{\intJoin}{\sqcup}
\newcommand{\intMeet}{\sqcap}
\newcommand{\intPlus}{+}
\newcommand{\intBigPlus}{\sum}
\newcommand{\intMult}{\cdot}
\newcommand{\intEquiv}{\approx}
\newcommand{\intLessEquiv}{\precsim}
\newcommand{\intComp}{\oplus}
\newcommand{\intVar}{d}
\newcommand{\zerofun}{\mathbf{0}}
\newcommand{\inflowOfInt}[1]{{#1}^\inflow}
\newcommand{\inflowClassOfInt}[1]{{#1}^\inflowClass}
\newcommand{\nodelabelOfInt}[1]{{#1}^\nodelabel}
\newcommand{\flowmapOfInt}[1]{{#1}^\flow}
\newcommand{\mapEmpty}{\epsilon}
\newcommand{\interfaces}{\mathsf{FI}}
\newcommand{\interface}{I}
\defFunc{\interfaceFn}{int}
\newcommand{\interfaceEmpty}{\interface_e}
\newcommand{\logic}{\rgsep[\mathsf{\interfaces}]}
\newcommand{\inflowEquiv}{\sim}
\newcommand{\inflowClass}{\mathit{In}}
\newcommand{\inflowDom}{\mathsf{IN}}
\newcommand{\graphIn}{H}
\newcommand{\graphInDom}{\mathsf{FG}}
\newcommand{\graphInComp}{\bullet}
\newcommand{\graphInEmpty}{H_{\mathsf{e}}}
\newcommand{\nodelabel}{a}
\newcommand{\nodelabelBot}{\nodelabel_e}
\newcommand{\nodelabelling}{\lambda}
\newcommand{\nodelabelDom}{A}
\newcommand{\nodelabelJoin}{\sqcup}
\newcommand{\nodelabelLeq}{\sqsubseteq}
\newcommand{\edgelabel}{\varepsilon}
\defFunc{\init}{init}
\defFunc{\capacity}{cap}
\defFunc{\netflow}{flow}
\defFunc{\flowmap}{flm}
\defFunc{\goodCondition}{good}
\defFunc{\pathCount}{pc}
\newcommand{\KS}{\mathsf{KS}}
\newcommand{\inflow}{\mathit{in}}
\defFunc{\edgeset}{es}
\defFunc{\contents}{c}
\defFunc{\lock}{lock}
\defFunc{\inset}{ins}
\defFunc{\keyset}{ks}
\defFunc{\outflow}{outf}
\defFunc{\pathset}{path}
\defFunc{\flowFn}{flow}
\newcommand{\flow}{f}
\defFunc{\inflows}{In}
\defFunc{\composition}{comp}
\defFunc{\projection}{proj}
\newcommand{\rgsep}{\mathsf{RGSep}}
\newcommand{\vars}{\mathsf{Var}}
\newcommand{\values}{\mathsf{Val}}
\newcommand{\addrs}{\mathsf{Addr}}
\newcommand{\states}{\mathsf{State}}
\newcommand{\cmds}{\mathsf{Com}}
\newcommand{\stateEmpty}{\sigma_e}
\newcommand{\heap}{h}
\newcommand{\heapEmpty}{\heap_e}
\newcommand{\interp}{i}
\newcommand{\nodeMap}{r}
\newcommand{\nodeMapEmpty}{\nodeMap_e}
\newcommand{\annot}[1]{\textcolor{blue}{\left\{\begin{aligned}#1\end{aligned}\right\}}}
\newcommand{\hoareTriple}[3]{\annot{#1} \; #2 \; \annot{#3}}
\newcommand{\denotation}[1]{\llbracket #1 \rrbracket}
\newcommand{\sepincl}{\mathrel{-\mkern-9mu*\mkern-7mu*}}
\newcommand{\magicwand}{\mathrel{-\mkern-6mu*}}
\newcommand{\septract}{\mathrel{-\mkern-2.5mu\circledast}}
\newcommand{\ite}{\mathsf{ITE}}
\newcommand{\listPred}{\mathsf{ls}}
\newcommand{\lseg}{\mathsf{lseg}}
\newcommand{\emp}{\mathit{emp}}
\newcommand{\true}{\mathit{true}}
\newcommand{\nullVal}{\mathit{null}}
\newcommand{\graphPred}{\mathsf{Gr}}
\newcommand{\nodePred}{\mathsf{N}}
\newcommand{\emptyFn}{\epsilon}
\newcommand{\goesto}{\rightarrowtail}
\newcommand{\abort}{\mathbf{abort}}
\newcommand{\skipCommand}{\text{\texttt{\textbf{skip}}}}
\newcommand{\syncCommand}{\text{\texttt{\textbf{sync}}}}
\newcommand{\markCommand}{\text{\texttt{\textbf{mark}}}}
\newcommand{\unmarkCommand}{\text{\texttt{\textbf{unmkark}}}}
\newcommand{\reduces}[2]{\xrightarrow[#2]{#1}}
\newcommand{\freeListHead}{\mathit{fh}}
\newcommand{\freeListTail}{\mathit{ft}}
\newcommand{\mainListHead}{\mathit{mh}}
\newcommand{\unmarked}{\diamondsuit}
\lstdefinelanguage{SPL}{
  morekeywords={acc, struct,if,else,returns,procedure,requires,ensures,:=,var,
    new,old,free,implicit,modifies,call,locals,assume,assert,choose,havoc,ghost,
    predicate,function,invariant,while, return,atomic, sync, mark, unmark},
  deletekeywords={union,int},
  numbers=left,
  xleftmargin=2em,
  escapeinside={@}{@},
  numberstyle=\tiny,
  basicstyle=\footnotesize\ttfamily,
  columns=flexible,
  morecomment=*[s][\color{darkgray}]{/*}{*/},
  morecomment=*[l][\color{darkgray}]{//},
  mathescape=true,
}
\tikzset{%
  array/.style={matrix of nodes,nodes={draw, minimum size=5mm, anchor=center},column sep=-\pgflinewidth, row sep=-\pgflinewidth, nodes in empty cells,anchor=center},
  ptr/.style={*->, shorten <=-(1.8pt+1.4\pgflinewidth)},
  edge/.style={->},
  fedge/.style={->, dashed},
  unode/.style={circle, draw=black, thick, minimum size=8mm},
  mnode/.style={circle, draw=black, thick, fill=gray!20, minimum size=8mm},
  gnode/.style={circle, draw=black, thick, minimum size=1cm},
  pnode/.style={circle, draw=black, thick, dotted, minimum size=1cm},
  inflow/.style={circle, fill=none, inner sep=0pt, minimum size=5mm, font=\normalsize},
  stackVar/.style={circle, fill=none, inner sep=0pt, minimum size=8mm, font=\normalsize},
  phantomNode/.style={circle, fill=none, inner sep=0pt, minimum size=0pt}
}
\begin{document}

\title[Go with the Flow: Compositional Abstractions for Concurrent
Data Structures\techreport{ (Extended)}]
{Go with the Flow: Compositional Abstractions for Concurrent Data Structures\techreport{ (Extended Version)}}

\titlenote{This work is funded in parts by NYU WIRELESS and by the \grantsponsor{GS100000001}{National Science Foundation}{http://dx.doi.org/10.13039/100000001} under grants~\grantnum{GS100000001}{MCB-1158273},~\grantnum{GS100000001}{IOS-1339362},~\grantnum{GS100000001}{MCB-1412232}, and~\grantnum{GS100000001}{CCF-1618059}.}



\author{Siddharth Krishna}
\affiliation{
  \institution{New York University}            
  \country{USA}
}
\email{siddharth@cs.nyu.edu}          

\author{Dennis Shasha}
\affiliation{
  \institution{New York University}            
  \country{USA}
}
\email{shasha@cims.nyu.edu}         

\author{Thomas Wies}
\affiliation{
  \institution{New York University}            
  \country{USA}
}
\email{wies@cs.nyu.edu}         


\begin{abstract}
  Concurrent separation logics have helped to significantly simplify
  correctness proofs for concurrent data structures.
  However, a recurring problem in such proofs is that data
  structure abstractions that work well in the sequential setting are much
  harder to reason about in a concurrent setting due to complex
  sharing and overlays.
  To solve this problem, we propose a novel approach to abstracting
  regions in the heap by encoding the data structure invariant into a
  local condition on each individual node. This condition may depend
  on a quantity associated with the node that is computed as a fixpoint over the
  entire heap graph. We refer to this quantity as a \emph{flow}. Flows
  can encode both structural properties of the heap (e.g. the
  reachable nodes from the root form a tree) as well as data
  invariants (e.g. sortedness). We then introduce the notion of a \emph{flow
    interface}, which expresses the relies and guarantees that a heap
  region imposes on its context to maintain the local flow invariant
  with respect to the global heap.
  Our main technical result is that this notion leads to a new
  semantic model of separation logic. In this model, flow interfaces
  provide a general abstraction mechanism for describing complex data
  structures. This abstraction mechanism admits proof rules that generalize over a wide variety of data structures.
  To demonstrate the versatility of our approach, we show
  how to extend the logic RGSep with flow interfaces.
  We have used this new logic to prove linearizability and memory
  safety of nontrivial concurrent data structures. In particular, we
  obtain parametric linearizability proofs for concurrent dictionary
  algorithms that abstract from the details of the underlying data
  structure representation. These proofs cannot be easily expressed
  using the abstraction mechanisms provided by existing separation logics.
\end{abstract}


 \begin{CCSXML}
<ccs2012>
<concept>
<concept_id>10003752.10003790.10002990</concept_id>
<concept_desc>Theory of computation~Logic and verification</concept_desc>
<concept_significance>500</concept_significance>
</concept>
<concept>
<concept_id>10003752.10003790.10011742</concept_id>
<concept_desc>Theory of computation~Separation logic</concept_desc>
<concept_significance>500</concept_significance>
</concept>
<concept>
<concept_id>10003752.10003809.10011778</concept_id>
<concept_desc>Theory of computation~Concurrent algorithms</concept_desc>
<concept_significance>500</concept_significance>
</concept>
<concept>
<concept_id>10003752.10003790.10003806</concept_id>
<concept_desc>Theory of computation~Programming logic</concept_desc>
<concept_significance>300</concept_significance>
</concept>
</ccs2012>
\end{CCSXML}

\ccsdesc[500]{Theory of computation~Logic and verification}
\ccsdesc[500]{Theory of computation~Separation logic}
\ccsdesc[500]{Theory of computation~Concurrent algorithms}
\ccsdesc[300]{Theory of computation~Programming logic}


\keywords{memory safety, linearizability, flow
  interfaces, separation algebra}

\maketitle

\section{Introduction}

With the advent of concurrent separation logics (CSLs), we have
witnessed substantial inroads into solving the difficult problem of
concurrent data structure
verification~\cite{DBLP:conf/concur/OHearn04,
  Bornat:2005:PAS:1040305.1040327, DBLP:conf/concur/VafeiadisP07,
  viktor-thesis, DBLP:conf/ecoop/Dinsdale-YoungDGPV10,
  DBLP:conf/popl/Dinsdale-YoungBGPY13, DBLP:conf/ecoop/PintoDG14,
  DBLP:conf/esop/NanevskiLSD14, Jung:2015:IMI:2676726.2676980,
  Dodds:2016:VCS:2866613.2818638, DBLP:conf/esop/PintoDGS16,
  DBLP:conf/popl/GuKRSWWZG15}. CSLs
provide compositional proof rules that untangle the complex
interference between concurrent threads operating on shared memory
resources.
At the heart of these logics lies separation logic (SL)~\cite{SL1, SL2}.
The key ingredient of SL is the separating
conjunction operator, which allows the global heap memory to be split
into disjoint subheaps. An operation working on one of the subheaps
can then be reasoned about compositionally in isolation using so-called frame rules
that preserve the invariant of the rest of the heap. To support
reasoning about complex data structures (lists, trees, etc.),
SL is typically extended with predicates that are
defined inductively in terms of separating conjunction to express
invariants of unbounded heap regions. However, inductive
predicates are not a panacea when reasoning about
concurrent data structures.

One problem with inductive predicates is that the recursion scheme
must follow a traversal of the data structure in the heap that visits
every node exactly once. Such definitions are not well-suited for
describing data structures with unbounded sharing and
overlays of multiple data structures with separate roots whose
invariants have mutual
dependencies~\cite{Hobor:2013:RSD:2429069.2429131}. Both of these
features are prevalent in concurrent algorithms (examples include
B-link trees~\cite{Lehman:1981:ELC:319628.319663} and non-blocking
lists with explicit memory management~\cite{DBLP:conf/wdag/Harris01}).

Another challenge is that proofs involving inductive predicates rely
on lemmas that show how the predicates compose, decompose, and
interact. For example, SL proofs of algorithms that manipulate linked
lists often use the inductive predicate $\lseg(x,y)$, which
denotes subheaps containing a list segment from $x$ to $y$. A common
lemma about $\lseg$ used in such proofs is that two disjoint list
segments that share an end and a start point compose to a larger list
segment (under certain side conditions). Unfortunately, these lemmas
do not easily generalize from one data structure to another since the
predicate definitions may follow different traversal patterns and
generally depend on the data structure's implementation details.
Hence, there is a vast literature describing techniques to derive such
lemmas automatically, either by expressing the predicates in decidable
fragments of SL~\cite{BerdineETAL04DecidableFragmentSeparationLogic,
  CooketALFragmentSepLog, DBLP:conf/pldi/PerezR11,
  DBLP:conf/atva/BouajjaniDES12, DBLP:conf/atva/IosifRV14,
  DBLP:conf/cav/PiskacWZ13, DBLP:conf/aplas/TatsutaLC16,
  DBLP:conf/vmcai/ReynoldsIS17, DBLP:conf/nfm/EneaLSV17} or by using
heuristics~\cite{DBLP:conf/cav/NguyenC08,
  DBLP:conf/cade/BrotherstonDP11, DBLP:conf/pldi/Chlipala11,
  DBLP:conf/pldi/PekQM14, DBLP:conf/atva/EneaSW15}. However, these
techniques can be brittle, in particular, when the predicate
definitions involve constraints on data.

What appears to be missing in existing SL variants is an abstraction
mechanism for heap regions that is agnostic to specific traversal
patterns, yet can still express inductive properties of the
represented data structure.
To address this shortcoming, we here take a radically different
approach to abstraction in separation logic. Instead of relying on
data-structure-specific inductive predicates, we introduce a new
abstraction mechanism that specifies inductive structural and data
properties uniformly but independently of each other,
without fixing a specific traversal strategy.

As a first step, we describe data structure invariants in a uniform
way by expressing them in terms of a condition local to each node of
the heap graph. However, this condition is allowed to depend
on a quantity associated with the node that is computed inductively over the entire
graph. We refer to this inductively defined quantity as a \emph{flow}.
An example of a flow is the
function $\pathCount$ that maps each node $n$ to the number of paths from
the root of the data structure to $n$.
The property that a given heap graph is a tree can then be
expressed by the condition that for all nodes $n$,
$\pathCount(n)=1$. Flows may also depend on the data stored in the heap
graph. For instance, in a search data structure that implements a
dictionary of key/value pairs, we can define the \emph{inset
  flow}. The inset of a node $n$ is a set of
keys. Intuitively, a key $k$ is in the inset of $n$ if and only if a search for $k$ that starts
from the root of the data structure will have to traverse the node
$n$. The inset flow can be used to express the invariants of search
data structure algorithms in a way that abstracts from the concrete
data structure implementation (i.e., whether the concrete data
structure is a list, a tree, or some more complicated structure with
unbounded sharing)~\cite{dennis-keyset}.

We would like to reason compositionally about flows in SL using
separating conjunctions. A separating conjunction
is defined in terms of a composition operator
on the semantic models of SL that yields a so-called \emph{separation
  algebra}~\cite{DBLP:conf/lics/CalcagnoOY07}. For the standard heap
graph model of SL, composition is disjoint graph union:
$G = G_1 \uplus G_2$. To enable local reasoning about flows, the
composition operator needs to ensure that the flow on the resulting
graph $G$ maintains the local conditions with respect to the flows on
its constituents $G_1$ and $G_2$. However, this is typically not the
case for disjoint union. For instance, consider again our
example of the path-counting flow where the flow condition bounds the
path count of each node. Even if $G_1$ and $G_2$ individually satisfy
the bound on the path count, $G_1 \uplus G_2$ may not since the union
of the graphs can create cycles.

We therefore define a disjoint union operator on graphs that
additionally asserts that the two subgraphs respect their mutual
constraints on the flow of the composite graph. Our key technical
contribution is to identify a class of flows
that give rise to a separation algebra with this
new composition operator. This class is closed under product (which
enables reasoning about data structure overlays) and subsumes many
natural examples of flows, including the ones given above.

To express abstract SL predicates that describe unbounded graph
regions and their flows, we introduce the notion of a \emph{flow
  interface}. A flow interface of a graph $G$ expresses the
constraints on $G$'s contexts that $G$ relies upon in order to satisfy
its internal flow conditions, as well as the
guarantees that $G$ provides its contexts so that they can satisfy
their flow conditions. In the example of path-counting flows used to
express ``treeness'', the rely of the flow interface specifies how
many paths $G$ expects to exist from the global root in the composite
graph to each of the roots in $G$ (there can be many roots in $G$
since it may consist of disconnected subtrees), and the guarantee
specifies how many paths there are between each pair of root and sink
node in $G$.

The algebraic properties of flows that guarantee that flow composition
is well-defined also give rise to generic proof rules for reasoning
about flow interfaces.  These include rules that allow a flow
interface to be split into arbitrary chunks which can be modified and
recomposed, enabling reasoning about data structure algorithms
that do not follow a fixed traversal strategy.

To demonstrate the usefulness of our new abstraction mechanism, we
have instantiated rely/guarantee separation logic
($\rgsep$)~\cite{DBLP:conf/concur/VafeiadisP07, viktor-thesis} with
flow interfaces (other CSL flavors can be extended in a similar
fashion). We have used the new logic to obtain simple correctness
proofs of intricate concurrent data structure algorithms such as the
Harris list with explicit memory
management~\cite{DBLP:conf/wdag/Harris01}, which cannot be easily
verified with existing logics. Moreover, we show how the new logic can
be used to formalize the edgeset framework for verifying concurrent
dictionary data structures~\cite{dennis-keyset}. We apply this
formalization to proving
linearizability~\cite{DBLP:journals/toplas/HerlihyW90} of an algorithm
template for concurrent dictionary implementations. The template
abstracts from the specifics of the underlying data structure
representation allowing it to be refined to diverse concrete
implementations (e.g. using linked lists, B+ trees, etc.). By using
flow interfaces, the correctness proof of the template can also
abstract from the concrete implementation details, enabling proof
reuse across the different refinements. We are not aware of any other
logic that provides an abstraction mechanism to support this style of
proof modularization at the level of data structure algorithms.

\techreport{
This is an extended version of a conference paper~\cite{flows-popl-paper}, containing the following additional material: \rSc{sec-logic} contains formal definitions of the syntactic shorthands we use in our logic, the formal semantics of our new ghost commands, and some additional lemmas for proving entailments; \rSc{sec-harris-simple-appendix} contains proof sketches of the full insert procedure and the delete procedure on the Harris list; \rSc{sec-btree} contains an instantiation of our dictionary framework to the B+ tree data structure; and \rSc{sec-stability-proof} contains an example stability proof, needed to prove safety under interference of other threads, in our logic.
}


\section{Motivating Example}
\label{sec-overview}

Our aim is to prove memory safety and functional properties of concurrent data structures using local reasoning.
We begin with a quick reminder of the standard way of specifying data structures in separation logic using inductive predicates.
Using the example of the Harris list~\cite{DBLP:conf/wdag/Harris01}, we see the limitations of the standard approach when dealing with concurrent data structures.

\label{sec-inductive-preds}

Separation logic (SL) is an extension of Hoare logic built for reasoning about programs that access and mutate data stored on the heap.
Assertions in SL describe partial heaps (usually represented as a partial function from program locations to values), and can be interpreted as giving the program the permission to access the described heap region.
Assertions are composed using the separating conjunction $P * Q$ which states that $P$ and $Q$ hold for disjoint regions of the heap.
This enables local reasoning by means of a \emph{frame rule} that allows one to remove regions of the heap that are not modified by a program fragment while reasoning about it.

The standard way of describing dynamic data structures (i.e. structures covering an unbounded and statically-unknown set of locations) in SL is using inductive predicates.
For example, one can describe a singly-linked null-terminated acyclic list using the predicate
\[ \listPred(x) \defeq x\!=\!\nullVal \land \emp \,\lor\, \exists y.\; x \mapsto y * \listPred(y). \]
A heap $h$ consists of a list from $x$ if either $x = \nullVal$ and $h$ is empty ($\emp$) or $h$ contains a location $x$ whose value is $y$ ($x \mapsto y$) and a \emph{disjoint} null-terminated list beginning at $y$.
The fact that the list is acyclic is enforced by the separating conjunction, which implies that $x$ is not a node in $\listPred(y)$.

One can write similar inductive predicates to describe any other data structure that can be unrolled and decomposed into subpredicates that do not share nodes.
Sequential data structures such as sorted lists, doubly-linked lists, and binary search trees have been successfully verified in this manner.
However, concurrent data structures often use intricate sharing and overlays.

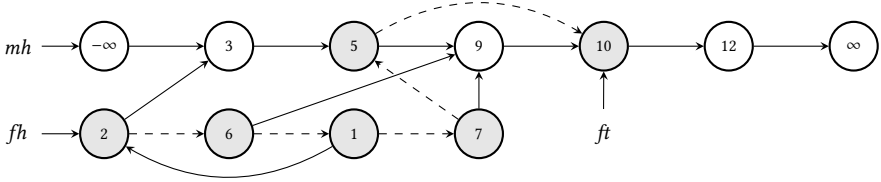
\begin{figure}[t]
  \centering
  \begin{tikzpicture}[>=stealth, font=\footnotesize, scale=0.8, every node/.style={scale=0.8}]
    \node[stackVar] (hd) {$\mainListHead$};
    \node[unode, right=.5cm of hd] (inf) {$-\infty$};
    \node[unode, right= of inf] (n3) {$3$};
    \node[mnode, right= of n3] (n5) {$5$};
    \node[unode, right= of n5] (n9) {$9$};
    \node[mnode, right= of n9] (n10) {$10$};
    \node[unode, right= of n10] (n12) {$12$};
    \node[unode, right= of n12] (Inf) {$\infty$};
    
    \node[stackVar, below=.5cm of hd] (fr) {$\freeListHead$};
    \node[mnode, right=.5cm of fr] (n2) {$2$};
    \node[mnode, right= of n2] (n6) {$6$};
    \node[mnode, right= of n6] (n1) {$1$};
    \node[mnode, right= of n1] (n7) {$7$};
    \node[stackVar, right= of n7] (ft) {$\freeListTail$};
    
    \draw[edge] (hd) to (inf);
    \draw[edge] (inf) to (n3);
    \draw[edge] (n3) to (n5);
    \draw[edge] (n5) to (n9);
    \draw[edge] (n9) to (n10);
    \draw[edge] (n10) to (n12);
    \draw[edge] (n12) to (Inf);

    \draw[edge] (fr) to (n2);
    \draw[fedge] (n2) to (n6);
    \draw[edge] (n2) to (n3);
    \draw[fedge] (n6) to (n1);
    \draw[edge] (n6) to (n9);
    \draw[fedge] (n1) to (n7);
    \draw[fedge] (n7) to (n5);
    \draw[edge] (n1) to[bend left=30] (n2);
    \draw[edge] (n7) to (n9);
    \draw[fedge] (n5) to[bend left=30] (n10);
    \draw[edge] (ft) to (n10);
   
  \end{tikzpicture}
  \caption{A potential state of the Harris list with explicit memory
    management. \code{fnext} pointers are shown with dashed edges, marked nodes are shaded gray, and null pointers are omitted for clarity.}
  \label{fig-harris}
\end{figure}

For instance, consider the concurrent non-blocking linked list algorithm due to \citet{DBLP:conf/wdag/Harris01}.
This algorithm implements a set data structure as a sorted list, and uses atomic compare-and-swap (CAS) operations to allow a high degree of parallelism.
\techreport{As with the sequential linked list, }Harris' algorithm inserts a new key $k$ into the list by finding nodes $k_1, k_2$ such that $k_1 < k < k_2$, setting $k$ to point to $k_2$, and using a CAS to change $k_1$ to point to $k$ only if it was still pointing to $k_2$.
However, a similar approach fails for the delete operation.
If we had consecutive nodes $k_1, k_2, k_3$ and we wanted to delete $k_2$ from the list (say by setting $k_1$ to point to $k_3$), there is no way to ensure with a single CAS that $k_2$ and $k_3$ are also still adjacent (another thread could have inserted or deleted in between them).

Harris' solution is a two step deletion: first atomically mark $k_2$ as deleted (by setting a mark bit on its successor field) and then later remove it from the list using a single CAS.
After a node is marked, no thread can insert or delete to its right, hence a thread that wanted to insert $k'$ to the right of $k_2$ would first remove $k_2$ from the list and then insert $k'$ as the successor of $k_1$.

As described so far, this data structure can still be specified and verified using inductive predicates as the underlying shape is still a standard linked list.
Things get complicated when we want to free deleted nodes.
Since marked nodes can still be traversed by other threads, there may still be suspended threads accessing a marked node even after we remove it from the list.
Thus, we cannot free the node immediately.
A common solution is the so-called drain technique which maintains a
second ``free list'' to which marked nodes are added before
they are unlinked from the main list. These nodes are then labeled with a
timestamp indicating when they were unlinked.
When the starting timestamp of all active threads is greater than the
timestamp of a node in the free list, the node can be safely freed.
This leads to the kind of data structure shown in \rF{fig-harris}, where each node has two pointer fields: a \code{next} field for the main list and an \code{fnext} field for the free list (shown as dashed edges).
Threads that have been suspended while holding
a reference to a node that was added to the free list can simply
continue traversing the \code{next} pointers to find their
way back to the unmarked nodes of the main list.

Even if our goal is to verify seemingly simple properties such as that the Harris list is memory safe and not leaking memory, the proof will rely on the following nontrivial invariants
\begin{enumerate}[(a)]
\item \label{it-two-lists} The data structure consists of two (potentially
  overlapping) lists: a list on \code{next} edges beginning at $\mainListHead$ and one on \code{fnext} edges beginning at $\freeListHead$.
\item \label{it-null-list} The two lists are null terminated and
  $\freeListTail$ is an element in the free list.
\item \label{it-free-list} The \code{next} edges from nodes in the
  free list point to nodes in the free list or main list.
\item \label{it-marked} All nodes in the free list are marked.
\end{enumerate}

One may be tempted to describe the Harris list simply as two lists
(using the inductive predicate $\listPred$ defined above),
each using a different successor field. However, the two lists may
overlap and so cannot be described by a separating conjunction of the
two predicates. For the same reason, we can also not describe the
entire structure with a
single inductive predicate that uses separating conjunction because the
predicate must describe each node exactly
once\footnote{This is essentially because the disjointness of $*$
  makes $x \mapsto \_ * x \mapsto \_$ unsatisfiable}.
Here, we could have arbitrarily many nodes in the free list with \code{next} edges to the same node in the main list (imagine a series of successive deletions of the predecessor of a node, such as $6, 7, 5$ in \rF{fig-harris}).
Because of this, we cannot describe the list nodes using
separating conjunction as there is no way of telling if we are ``double-counting'' them.
Also, using an inductive list predicate to describe the main list
would work well only for reasoning about traversals starting from $\mainListHead$, since that is the only possible unrolling of the predicate.
By contrast, threads may enter the main list at arbitrary points in the list by following \code{next} edges from the free list.

To further complicate matters, the \code{next} edges of free list
nodes can even point backward to predecessors in the free list (for e.g. $1$ points back to $2$ in \rF{fig-harris}, creating a cycle).
This means that even ramification-style approaches that use the
overlapping conjunction~\cite{Hobor:2013:RSD:2429069.2429131} would
not be able to specify this structure without switching to a coarse abstraction
of an arbitrary potentially-cyclic graph (akin to indexed separating conjunctions~\cite{SL2,Yang01ShorrWaite}). However, a coarse graph
abstraction could not easily capture key invariants such as that
every node must be reachable from $\mainListHead$ or $\freeListHead$, which
are critical for proving the absence of memory leaks.

In the rest of this paper, we show how to solve these problems and obtain a compositional SL-based framework for reasoning about complex data structures like the Harris list.
In \rSc{sec-interfaces} we define flows, show how they can be used to encode data structure invariants (both shape and data properties), build a flow-based semantic model for separation logic, and define an abstraction, flow interfaces, to reason compositionally about flow-based properties.
\rSc{sec-expressivity} demonstrates the expressivity of our flow-based approach by showing how to describe many intricate data structures.
We then extend a concurrent separation logic with flow interfaces in
\rSc{sec-logic}, show some lemmas to prove entailments in
\rSc{sec-logic-lemmas}, and use them in \rSc{sec-harris} to
verify memory safety and absence of memory leaks of a skeleton of the Harris
list insert operation that abstracts from some of the algorithm's
details but whose proof
still relies on the above invariants \ref{it-two-lists} to~\ref{it-marked}.
Our lemmas for reasoning about flows are
data-structure-agnostic, i.e., they generalize across
many different data structures.
To demonstrate this feature, we show how to formalize the edgeset framework for concurrent dictionaries using flows in \rSc{sec-dictionaries} and obtain a template algorithm with an implementation-agnostic proof of memory safety and linearizability.

\section{Preliminaries}

\paragraph{Notation.}
The term $\ite(b, t_1, t_2)$ denotes $t_1$ if condition $b$ holds and
$t_2$ otherwise.  If $f$ is a function from $A$ to $B$, we write
$f[x \goesto y]$ to denote the function from $A \cup \set{x}$ defined
by $f[x \goesto y](z) \defeq \ite(z = x, y, f(z))$.  We use
$\set{x_1 \goesto y_1, \dotsc, x_n \goesto y_n}$ for pairwise
different $x_i$ to denote the function
$\emptyFn[x_1 \goesto y_1]\dotsm[x_n \goesto y_n]$, where $\emptyFn$
is the function on an empty domain.  If $f \colon A \times B \to C$,
then we also write $\dom_1(f), \dom_2(f)$ for $A, B$ respectively. For
$f \colon A \to B$ and $C \subseteq A$ we write $f|_C \colon C \to B$
for the function obtained from $f$ by restricting its domain to $C$.
For $f_1 \colon A \to B$ and $f_2 \colon C \to B$ if
$A \cap C = \emptyset$ then we write $f_1 \uplus f_2$ for the function
$f \colon A \cup C \to B$ given by
$f(x) \defeq \ite(x \in A, f_1(x), f_2(x))$.  Moreover, we denote by
$f(C)$ the set $\setcomp{f(c)}{c \in C}$.

\paragraph{Semirings, $\omega$-cpos, and continuous functions.} A
semiring $(\intDom, \intPlus, \intMult, 0, 1)$ is a set $\intDom$
equipped with binary operators
$\intPlus, \intMult \colon \intDom \times \intDom \to \intDom$. The
operation $\intPlus$ is called addition, and the operation $\intMult$
multiplication. The two operators must satisfy the following properties:
(1) $(\intDom, \intPlus, 0)$ is a commutative monoid with identity
$0$; (2) $(\intDom, \intMult, 1)$ is a monoid with identity $1$; (3)
multiplication left and right distributes over addition; and (4)
multiplication with $0$ annihilates $\intDom$, i.e., $0 \intMult d = d
\intMult 0 = 0$ for all $d \in \intDom$.

An $\omega$-complete partial order ($\omega$-cpo) is a set $\intDom$
equipped with a partial order $\sqsubseteq$ on $\intDom$ such that all
increasing chains in $\intDom$ have suprema in $\intDom$. A function
$f \colon D_1 \to D_2$ between two $\omega$-cpos $(D_1, \sqsubseteq_1)$ and
$(D_2, \sqsubseteq_2)$ is continuous if $f(\bigsqcup_1 X) =
\bigsqcup_2 f(X)$ for any increasing chain $X \subseteq D_1$. Here,
$\bigsqcup_i X$ denotes the supremum of a chain $X$ in $D_i$.

\section{The Flow Framework}
\label{sec-interfaces}

This section presents the formal treatment of our flow framework.

Our theory is parameterized by the domain over which
flows range. This domain is equipped with operations for calculating
flows, which must satisfy certain algebraic properties.

\begin{definition}[Flow Domain]
  A \emph{flow domain}
  $(\intDom, \intLeq, \intJoin, \intPlus, \intMult, 0, 1)$ is a
  positive partially ordered semiring that is $\omega$-complete. That
  is,
  \begin{enumerate*}
  \item $(\intDom, \intLeq)$ is an $\omega$-cpo and $\intJoin$ its join;
  \item $(\intDom, \intPlus, \intMult, 0, 1)$ is a semiring;
  \item $\intPlus$ and $\intMult$ are continuous with respect to $\intLeq$; and
  \item $0$ is the smallest element of $\intDom$.
  \end{enumerate*}
  We identify a flow domain with its support set $\intDom$.
\end{definition}

\begin{example}
\label{ex-intDom-nat-infty}
The natural numbers extended with infinity, $\Nat^\infty \defeq (\Nat
\cup \set{\infty}, \leq, \max, +, \cdot, 0, 1)$, form a flow domain
and so does $(\intDom, \intLeq, \intJoin, \intJoin, \intMeet, \bot,
\top)$ for any completely distributive lattice $(\intDom, \intLeq, \intJoin, \intMeet, \bot, \top)$.
\end{example}

In our formalization, we consider graphs whose nodes are labeled from
some set of node labels $\nodelabelDom$ that encode relevant
information contained in each node (e.g. the node's content or the id
of the thread holding a lock on the node). We will later build
abstractions of graphs that compute summaries of these node
labels. For this purpose, we require that $\nodelabelDom$ is equipped
with a partial order $\nodelabelLeq$ that induces a join-semilattice
$(\nodelabelDom, \nodelabelLeq, \nodelabelJoin, \nodelabelBot)$ with
join $\nodelabelJoin$ and smallest element $\nodelabelBot$.

For the remainder of this section we fix a flow domain
$(\intDom, \intLeq, \intJoin, \intPlus, \intMult, 0, 1)$ and node
label domain $(\nodelabelDom, \nodelabelLeq, \nodelabelJoin, \nodelabelBot)$. 
We use the same symbols for the partial order and join operator
on the two domains. However, it will always be clear from the context
which one is meant. All definitions are implicitly parameterized by
$\intDom$ and $\nodelabelDom$.

\subsection{Flows}
\label{sec-flow-graphs}

\begin{figure}[t]
  \centering
  \begin{tikzpicture}[>=stealth, scale=0.7, every node/.style={scale=0.7}, font=\footnotesize]
    \def\ysep{-1.8}
    \def\xsep{1.5}
    \def\xshift{3.5}

    \node[gnode] (r) {$n_0\colon 1$};
    \node[gnode] (n0) at ($(r) + (\xsep, 0)$) {$n_1\colon 1$};
    \node[gnode] (n1) at ($(n0) + (\xsep, 0)$) {$n_2\colon 1$};
    \node[gnode] (n00) at ($(n0) + (0, \ysep)$) {$n_3\colon 1$};
    \node[gnode] (n01) at ($(n0) + (\xsep, \ysep)$) {$n_4\colon 1$};
    \node[gnode] (n010) at ($(n01) + (\xsep, 0)$) {$n_5\colon 1$};
    \node[gnode] (n10) at ($(n1) + (\xsep, 0)$) {$n_6\colon 1$};
    \draw[edge] (r) to (n0);
    \draw[edge] (r) to[bend left=45] (n1);
    \draw[edge] (n0) to (n00);
    \draw[edge] (n0) to (n01);
    \draw[edge] (n01) to (n010);
    \draw[edge] (n1) to (n10);
    \node[inflow] (r-in) at ($(r.north west) + (-.5, .5)$) {$1$};
    \draw[edge, bend right] (r-in) to (r);

    \begin{scope}[on background layer]
      \def\s{.2}
      \draw[draw=blue!50, rounded corners, thick, fill=blue!10]
        ($(n0.north west) + (-\s, \s)$) -- ($(n1.north east) + (\s, \s)$) -- ($(n01.south east) + (\s, -\s)$) -- ($(n01.south west) + (0, -\s)$) -- ($(n0.south west) + (-\s, 0)$) -- cycle;
    \end{scope}

    \node[gnode, right=(\xshift) of n0] (n0') {$n_1\colon 1$};
    \node[gnode, right=(\xshift) of n1] (n1') {$n_2\colon 1$};
    \node[pnode, right=(\xshift) of n00] (n00') {};
    \node[gnode, right=(\xshift) of n01] (n01') {$n_4\colon 1$};
    \node[pnode, right=(\xshift) of n010] (n010') {};
    \node[pnode, right=(\xshift) of n10] (n10') {};
    \draw[edge] (n0') to (n00');
    \draw[edge] (n0') to (n01');
    \draw[edge] (n01') to (n010');
    \draw[edge] (n1') to (n10');
    \node[inflow] (n0-in') at ($(n0'.north west) + (-.5, .5)$) {$1$};
    \draw[edge, bend right] (n0-in') to (n0');
    \node[inflow] (n1-in') at ($(n1'.north west) + (-.5, .5)$) {$1$};
    \draw[edge, bend right] (n1-in') to (n1');

    \node[gnode, right=2.5*\xshift of r] (r'') {$n_0\colon 1$};
    \node[pnode, right=2.5*\xshift of n0] (n0'') {};
    \node[pnode, right=2.5*\xshift of n1] (n1'') {};
    \node[gnode, right=2.5*\xshift of n00] (n00'') {$n_3\colon 1$};
    \node[gnode, right=2.5*\xshift of n010] (n010'') {$n_5\colon 1$};
    \node[gnode, right=2.5*\xshift of n10] (n10'') {$n_6\colon 1$};
    \draw[edge] (r'') to (n0'');
    \draw[edge] (r'') to[bend left=45] (n1'');
    \node[inflow] (r-in'') at ($(r''.north west) + (-.5, .5)$) {$1$};
    \draw[edge, bend right] (r-in'') to (r'');
    \node[inflow] (n00-in'') at ($(n00''.north west) + (-.5, .5)$) {$1$};
    \draw[edge, bend right] (n00-in'') to (n00'');
    \node[inflow] (n010-in'') at ($(n010''.north west) + (-.5, .5)$) {$1$};
    \draw[edge, bend right] (n010-in'') to (n010'');
    \node[inflow] (n10-in'') at ($(n10''.north west) + (-.5, .5)$) {$1$};
    \draw[edge, bend right] (n10-in'') to (n10'');

    \node[phantomNode, font=\huge] at ($($(n10)!0.5!(n010)$)!0.5!($(n0')!0.5!(n00')$)$) {$=$};
    \node[phantomNode, font=\huge] at ($($(n10')!0.5!(n010')$)!0.35!($(n0'')!0.5!(n00'')$)$) {$\graphInComp$};
  \end{tikzpicture}
  \caption{A decomposition of an inflow/graph pair
    $(\inflow, G)$ into the boxed blue region $(\inflow_1, G_1)$ and its context $(\inflow_2, G_2)$. All shown edges have edge label $1$, missing edges have label $0$, and the path-counting flow is used. Non-zero inflows are shown as curved arrows at each node, nodes are labeled with names and the resulting flows, and sink nodes are shown as dotted circles.}
  \label{fig-flow-graph-decomp}
\end{figure}
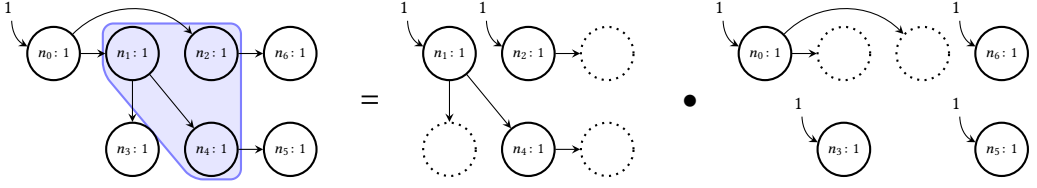

Flows are calculated over directed graphs that serve as an
intermediate abstraction
of heap graphs. The graphs are partial, i.e., they may have outgoing
edges to nodes that are not itself part of the graph. The edges are
labeled by elements of the flow domain and the nodes with node
labels. Formally, given a (potentially infinite) set of nodes $\nodeDom$, a \emph{(partial) graph}
$G = (N, N^o, \nodelabelling, \edgelabel)$ consists of a finite set of
nodes $N \subseteq \nodeDom$, a finite set of \emph{sink nodes} $N^o \subseteq \nodeDom$ disjoint from $N$,
a node labeling function $\nodelabelling \colon N \to \nodelabelDom$,
and an edge function
$\edgelabel \colon N \times (N \cup N^o) \rightarrow \intDom$. Note
that the edge function is total on $N \times N^o$. The absence
of an edge between two nodes $n,n'$ is indicated by
$\edgelabel(n,n') = 0$. We let $\dom(\graph) = N$ and sometimes identify
$\graph$ and $\dom(\graph)$ to ease notational burden. The (unique) graph
defined over the empty set of nodes and sinks is denoted by $\graphEmpty$.
We define the disjoint union of two
graphs $\graph_1 \graphPlus \graph_2$ in the expected way:
\begin{align*}
  \graph_1 \graphPlus \graph_2 &\defeq
                   \begin{cases}
                     (N_1 \cup N_2, (N_1^o \setminus N_2) \cup (N_2^o \setminus N_1), \nodelabelling_1 \uplus \nodelabelling_2, \edgelabel') & \text{if } N_1 \cap N_2 = \emptyset \\
                     \text{undefined} & \text{otherwise}
                   \end{cases} \\
  \text{where } \edgelabel'(n_1, n_2) &\defeq
                            \begin{cases}
                              \edgelabel_1(n_1, n_2) & n_1 \in N_1
                              \land n_2 \in N_1 \cup N_1^o \\
                              \edgelabel_2(n_1, n_2) & n_1 \in N_2
                              \land n_2 \in N_2 \cup N_2^o \\
                              0 & \text{otherwise.}
                            \end{cases}
\end{align*}
A graph $\graph_1$ is a subgraph of another graph $\graph$, denoted
$\graph_1 \subseteq \graph$, if and only if there exists $\graph_2$ such that
$\graph = \graph_1 \graphPlus \graph_2$.

\begin{example}
The graph $\graph_1$ in the middle of \rF{fig-flow-graph-decomp} is $(\set{n_1, n_2, n_4}, \set{n_3, n_5, n_6}, \nodelabelling_1, \edgelabel_1)$ for some $\nodelabelling_1$ and an $\edgelabel_1$ that sets all pairs in $\set{(n_1, n_3), (n_1, n_4), (n_2, n_6)}$ to $1$ and all other pairs to $0$.
Similarly, $\graph_2$ on the right is $(\set{n_0, n_3, n_5, n_6}, \set{n_1, n_2}, \nodelabelling_2, \edgelabel_2)$ for appropriate $\nodelabelling_2$ and $\edgelabel_2$, and the composed graph $\graph = \graph_1 \graphPlus \graph_2$ is shown on the left of \rF{fig-flow-graph-decomp}.
\end{example}

Let $\graph = (N, N^o, \nodelabelling, \edgelabel)$ be a graph. A \emph{flow} of
$\graph$ is a function $\netflow(\inflow, \graph) \colon N \to D$ that is calculated as a certain
fixpoint over $\graph$'s edge function starting from a given
\emph{inflow} $\inflow \colon N \to D$ going into $\graph$.
To motivate the definition of $\netflow(\inflow, \graph)$, let us use
the flow domain $\Nat^\infty$ from \rE{ex-intDom-nat-infty}. We can
use this flow domain to define a flow that counts the number of paths
between a fixed node and any other node.
First, view the edge function $\edgelabel$ as an adjacency matrix $E$ indexed by the nodes $N \cup N^o$.
If one looks at the matrix product $E \cdot E = E^2$ (using the addition and multiplication
operations of the flow domain), then the $(n, n')$-th entry corresponds to the sum over the products of edge labels on all paths of length 2 between $n$ and $n'$.
For $\graph$ from \rF{fig-flow-graph-decomp}, for example, the $(n_0, n_3)$-th entry of $E^2$ is $1$ as there is exactly $1$ non-zero length $2$ path, while the $(n_0, n_1)$-th entry is $0$ as there is no non-zero length $2$ path.
Extending this idea, for any graph where edges are labeled from $\set{0, 1}$, the $(n, n')$-th entry of matrix $E^l$ is the number of $l$ length paths from $n$ to $n'$.
Thus, the matrix $C = I + E + E^2 + \dots$ tells us the number of all paths between two nodes.
Given a root node $r$, we can look at the $r$-th row of this matrix to get the number of paths to any node from the root.
To generalize this to multiple roots, we can calculate the number of paths to any node by looking at $\vec{\inflow} \cdot C$ where $\vec{\inflow}$ is a row vector with a $1$ for every root node and $0$ otherwise.

Formally, first let the \emph{capacity}
$\capacity(\graph)\colon N \times (N \cup N^o) \rightarrow \intDom$ be the least fixpoint of the following equation:
\begin{align*}
  \capacity(\graph)(n, n') &=  \init(n, n') \intPlus \intBigPlus_{n'' \in \graph} \edgelabel(n, n'') \intMult \capacity(\graph)(n'', n') \quad
  \text{where} \quad \init(n, n') \defeq \ite(n = n', 1, 0).
\end{align*}
Note that the fixpoint exists because $D$ is $\omega$-complete and
$\intPlus$ and $\intMult$ are continuous. Then define
\begin{align*}
  \netflow(\inflow, \graph)(n) &\defeq \intBigPlus_{n' \in \graph} \inflow(n') \intMult \capacity(\graph)(n', n).
\end{align*}

\begin{example}
  In \rF{fig-flow-graph-decomp}, the capacity of $\graph_2$ is equal to its edge function, the capacity of $\graph_1$ is its edge function with $(n_1, n_5)$ additionally $1$, and $\capacity(\graph)$ is $1$ for all pairs $(n, n')$ such that $n$ is an ancestor of $n'$ in the tree.
  For the inflow $\inflow(x) = \ite(x = n_0, 1, 0)$, $\netflow(\inflow, \graph) = (\lambda x. 1)$.
\end{example}

Finally, a key property of flows is that they can be composed by
considering the product flow on the product of their flow domains.
\begin{lemma}
  \label{lem-product-flow}
  Let $(\intDom_1, \intLeq_1, \intJoin_1, \intPlus_1, \intMult_1, 0_1, 1_1)$ and $(\intDom_2, \intLeq_2, \intJoin_2, \intPlus_2, \intMult_2, 0_2, 1_2)$ be flow domains.
  $(\intDom_1 \times \intDom_2, \intLeq, \intJoin, \intPlus, \intMult,
  (0_1, 0_2), (1_1, 1_2))$, where $\intLeq, \intJoin, \intPlus$, and $\intMult$ operate on each component respectively, is a flow domain.
  Moreover, the flow on the product domain is the product of the flows on each component.
\end{lemma}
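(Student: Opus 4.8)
The plan is to treat the two assertions in turn: that the componentwise product is again a flow domain, and that the flow decomposes as a product. The first is a routine verification that each of the four defining conditions lifts componentwise, so I would go through them quickly. The product order $\intLeq$, join $\intJoin$, and operations $\intPlus, \intMult$ all act on each coordinate independently, so an increasing chain $\set{(x^{(i)}_1, x^{(i)}_2)}_i$ in $\intDom_1 \times \intDom_2$ projects to increasing chains in each factor; since each $\intDom_j$ is an $\omega$-cpo these projections have suprema $s_j$, and $(s_1, s_2)$ is easily checked to be the supremum of the original chain (so suprema, like joins, are computed coordinatewise). The semiring axioms -- commutativity and associativity of $\intPlus$, the monoid laws for $\intMult$, distributivity, and annihilation by $(0_1, 0_2)$ -- hold in the product exactly because they hold in each factor. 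That $(0_1, 0_2)$ is least follows from $0_j$ being least in $\intDom_j$ together with the coordinatewise order. Finally, continuity of $\intPlus$ and $\intMult$ in the product follows from the coordinatewise computation of suprema: for a chain $X$ with projections $X_1, X_2$ one has $\intBigJoin X = (\intBigJoin X_1, \intBigJoin X_2)$, so applying an operation with a fixed argument and invoking continuity in each factor gives the required commutation with $\intBigJoin$.

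The substantive part is the flow decomposition. Write the edge function of a product-domain graph as $\edgelabel = (\edgelabel_1, \edgelabel_2)$ and the inflow as $\inflow = (\inflow_1, \inflow_2)$, where subscript $j$ denotes the $j$-th coordinate, and let $G_j$ be the graph with edge labeling $\edgelabel_j$. The key step is to show that the capacity decomposes, i.e. $\capacity(\graph)(n, n') = (\capacity(G_1)(n, n'), \capacity(G_2)(n, n'))$, which I would establish via the Kleene characterization of the least fixpoint. The capacity operator $F(C)(n, n') \defeq \init(n, n') \intPlus \intBigPlus_{n''} \edgelabel(n, n'') \intMult C(n'', n')$ is continuous (by condition 3) on the pointwise-ordered function space, whose least element is the constant $(0_1, 0_2)$ map. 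Since $\init$ decomposes (using $1 = (1_1, 1_2)$ and $0 = (0_1, 0_2)$) and $\intPlus, \intMult, \intBigPlus$ act coordinatewise, the operator factors as $F(C_1, C_2) = (F_1(C_1), F_2(C_2))$, where $F_j$ is the capacity operator of $\intDom_j$. Hence each Kleene approximant satisfies $F^k(\zerofun) = (F_1^k(\zerofun), F_2^k(\zerofun))$ by induction (with $\zerofun$ the least element of the relevant function space), and taking suprema coordinatewise yields $\capacity(\graph) = (\capacity(G_1), \capacity(G_2))$. The flow decomposition is then immediate by unfolding $\netflow(\inflow, \graph)(n) = \intBigPlus_{n'} \inflow(n') \intMult \capacity(\graph)(n', n)$ and again using that $\intPlus$ and $\intMult$ are coordinatewise, giving $\netflow(\inflow, \graph)(n) = (\netflow(\inflow_1, G_1)(n), \netflow(\inflow_2, G_2)(n))$.

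The main obstacle is precisely the least-fixpoint decomposition: one must ensure that the least fixpoint taken in the product genuinely coincides with the pair of least fixpoints in the factors, rather than being merely some common fixpoint. This is exactly what the Kleene-iteration argument secures, since the product's least element, its chain suprema, and the operator $F$ all respect the coordinatewise structure, and the continuity established in the first part is what makes Kleene's theorem applicable in the product in the first place. Everything else is pure bookkeeping.
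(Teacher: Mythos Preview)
The paper does not actually provide a proof of this lemma; it is stated and then the text moves directly on to \rSc{sec-expressivity}. Your argument is correct and supplies precisely the details the paper omits: the componentwise verification of the flow-domain axioms is routine, and your Kleene-iteration argument for the capacity decomposition is the right way to handle the least-fixpoint subtlety you identify (a generic fixpoint in the product need not project to least fixpoints in the factors, but the Kleene iterates do, and continuity lets you pass to the limit).
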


\subsection{Expressivity}
\label{sec-expressivity}

We now give a few examples of flows to demonstrate the range of data structures that can be described using local conditions on flows.

The shape of all common structures that can be described with inductive predicates can be described using flows, e.g. lists (singly and doubly linked, cyclic), trees, and nested combinations of these.
By considering products with flows for data properties, we can also describe structures such as sorted lists, binary heaps, and search trees.
Going beyond inductive predicates, the flow framework can describe structures with unbounded sharing and irregular traversals (e.g. Harris list), overlays (e.g. threaded and B-link trees), as well as those with irregular shape (e.g. DAGs and arbitrary graphs).

In this section, we assume that the local condition on the flow of each node is specified by a \emph{good condition}.
This is a predicate $\goodCondition(n, \inflow, \nodelabel, \flow)$ taking as
arguments the node $n$, the inflow function $\inflow$ on the singleton graph containing $n$, the node label $\nodelabel$ of $n$, and the edge function $\flow$ of that singleton graph.
The singleton inflow of $n$ is equal to the flow at $n$ in the global graph representing the data structure of interest containing $n$.
This fact, as well as the reason for this formalization, will be seen in \rSc{sec-flow-interfaces}.
We also assume that we can specify the global inflow of our data structure.

\subsubsection{Path-counting Flow}
\label{example:pathcounting}
Consider the flow domain $\Nat^\infty$ from \rE{ex-intDom-nat-infty}, with an arbitrary node label domain.

\begin{lemma}
  For any graph $\graph$, if the edge label of every edge $(n, n')$ is interpreted as the number of edges between $n$ and $n'$, then $\capacity(\graph)(n, n')$ is the number of paths from $n$ to $n'$.
  If the inflow $\inflow$ takes values in $\set{0,1}$ then the path-counting flow $\netflow(\inflow, \graph)(n)$ is the number of paths from nodes in $\setcomp{n'}{\inflow(n') = 1}$ to $n$.
\end{lemma}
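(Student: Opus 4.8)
The plan is to reduce both statements to the standard fact that powers of an adjacency matrix count walks, using Kleene's fixpoint theorem to identify the least fixpoint defining $\capacity$ with an infinite matrix series. Throughout, ``path'' is read as a walk, i.e. an edge sequence whose vertices may repeat, which is consistent with cyclic graphs producing infinitely many such objects.

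First I would fix notation for the combinatorial quantity of interest. For $n \in N$, $n' \in N \cup N^o$, and $l \geq 0$, let $P_l(n,n')$ denote the number of paths of length exactly $l$ from $n$ to $n'$, where such a path is a vertex sequence $n = v_0, v_1, \dots, v_l = n'$ with $v_0, \dots, v_{l-1} \in N$, counted with multiplicity $\prod_{i=0}^{l-1} \edgelabel(v_i, v_{i+1})$ so that the interpretation of $\edgelabel(a,b)$ as a number of parallel edges is respected. Note $P_0(n,n') = \init(n,n')$, and the restriction $v_0, \dots, v_{l-1} \in N$ reflects that only internal nodes carry outgoing edges. The total number of paths from $n$ to $n'$ is then $\intBigPlus_{l \geq 0} P_l(n,n')$, computed in $\Nat^\infty$, so it is $\infty$ precisely when there are infinitely many paths (e.g. through a cycle).

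Next, since $\Nat^\infty$ is $\omega$-complete and $\intPlus, \intMult$ are continuous, Kleene's theorem gives $\capacity(\graph) = \intBigJoin_k F^k(\zerofun)$, where $F$ is the continuous map $F(C)(n,n') \defeq \init(n,n') \intPlus \intBigPlus_{n'' \in \graph} \edgelabel(n,n'') \intMult C(n'',n')$ and $\zerofun$ is the everywhere-$0$ function. I would then prove by induction on $k$ that $F^k(\zerofun)(n,n') = \intBigPlus_{l=0}^{k-1} P_l(n,n')$. The base case is the empty sum. For the step, substituting the induction hypothesis and distributing reduces matters to the identity $\intBigPlus_{n'' \in N} \edgelabel(n,n'') \intMult P_l(n'',n') = P_{l+1}(n,n')$, which is exactly the recurrence that prepending an edge $n \to n''$ (with its multiplicity) to a length-$l$ path from $n''$ yields a length-$(l+1)$ path from $n$, and every such path arises uniquely this way. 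Taking the supremum over $k$ and using that the partial sums form an increasing chain whose join in $\Nat^\infty$ is the infinite series yields $\capacity(\graph)(n,n') = \intBigPlus_{l \geq 0} P_l(n,n')$, proving the first statement. For the second statement I would specialize the definition of $\netflow$: since $\inflow$ takes values in $\{0,1\}$, each term $\inflow(n') \intMult \capacity(\graph)(n',n)$ equals $\capacity(\graph)(n',n)$ when $\inflow(n') = 1$ and $0$ otherwise, so $\netflow(\inflow,\graph)(n) = \intBigPlus_{n' : \inflow(n') = 1} \capacity(\graph)(n',n)$, and by the first statement each summand counts the paths from $n'$ to $n$, giving the total number of paths into $n$ from $\setcomp{n'}{\inflow(n')=1}$.

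The main obstacle is the inductive step identifying $F^k(\zerofun)$ with the truncated path series: the bookkeeping must get the length-indexing right (that the $k$-th iterate accounts for exactly the paths of length $< k$) and must correctly justify the combinatorial recurrence for $P_{l+1}$ in the presence of edge multiplicities and sink nodes. The remaining care is the passage to the limit, where one must observe that suprema in $\Nat^\infty$ of the increasing partial-sum chain coincide with the (possibly infinite) series, so that cyclic graphs correctly produce $\capacity = \infty$.
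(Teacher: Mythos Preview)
The paper does not give a formal proof of this lemma; it is stated without proof, and the nearest thing to an argument is the informal matrix-power-series discussion preceding the definition of $\capacity$ (viewing $\edgelabel$ as an adjacency matrix $E$ and observing that $C = I + E + E^2 + \dotsb$ counts walks). Your proposal is precisely the rigorous version of that intuition: you invoke Kleene's theorem to identify the least fixpoint with $\bigsqcup_k F^k(\zerofun)$, and your induction $F^k(\zerofun)(n,n') = \sum_{l<k} P_l(n,n')$ is exactly the statement that the $k$-th Kleene iterate is the $k$-th partial sum of the series. So your approach and the paper's intended argument coincide.

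One small point worth flagging: your recurrence $\sum_{n'' \in N} \edgelabel(n,n'')\,P_l(n'',n') = P_{l+1}(n,n')$ is clean for $n' \in N$ but breaks at $l=0$ when $n' \in N^o$, since then no $n'' \in N$ equals $n'$ and the left side vanishes while $P_1(n,n') = \edgelabel(n,n')$ need not. This is not really your fault: the paper's fixpoint equation, read literally with the sum over $n'' \in \graph = N$, has the same defect (it forces $\capacity(n,n') = 0$ for sink targets), which is at odds with the examples. In any case this does not affect the second statement about $\netflow$, which only exercises $\capacity$ on $N \times N$, and for that your argument is complete.
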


We can use this flow to describe shapes such as lists and trees.
For instance, a graph is a tree if and only if every node has exactly one path from the root.
Thus, if a graph $\graph$ satisfies the good condition
\[ \goodCondition_{\m{tr}}(n, \inflow, \_, \_) \defeq \inflow(n) = 1\]
at all nodes $n$, given a global inflow satisfying $\inflow(n) = \ite(n = n_0, 1, 0)$, then $\graph$ must be a tree rooted at $n_0$.
As a singly-linked list is a tree where every node has exactly one outgoing edge, we can describe it using the good condition
\[ \goodCondition_{\m{ls}}(n, \inflow, \_, \flow) \defeq \exists n'.\; \inflow(n) = 1 \land (\flow = \emptyFn \lor \flow = \set{(n, n') \goesto 1}) \]
where $\emptyFn$ stands for the empty function (i.e. the node has no
outgoing edges). 
A graph $\graph$ satisfying $\goodCondition_{\m{ls}}$ under a global
inflow $\inflow$ with $\inflow(n) = \ite(n = n_0, 1, 0)$ for all $n$ must be a list beginning at $n_0$.
To model a null-terminated list, we must further require that the entire graph $\graph$ has no outgoing edges.
This can be done using our flow interface abstraction in \rSc{sec-flow-interfaces}.
If we want to model a list that terminates at a specific node, say
$\freeListTail$ as in
the free list shown in Fig.~\ref{fig-harris}, we can replace the last clause of $\goodCondition_{\m{ls}}$ with $\flow = \ite(n = \freeListTail, \emptyFn, \set{(n, n') \goesto 1})$.

Similarly, one can describe cyclic lists by requiring that each node has exactly one outgoing edge and that the path count of each node is $\infty$.
To describe a tree of a particular arity, like a binary tree, one can
constrain the path count to be $1$ and add a further condition restricting the number of outgoing edges at
each node appropriately.
For overlaid structures like the threaded tree, which is a tree where all elements also form a list (in an arbitrary order), we can use the product of two path-counting flows, one from the root of the tree and one from the head of the list.

\subsubsection{Last-edge Flow}
We next show a flow domain where the flow along a path is equal to the label of the last edge on the path.

\begin{lemma}
Given an $\omega$-cpo $(\intDom, \intLeq, \intJoin)$ with smallest element $0$, let $1$ be a fresh element. There exists a flow domain $(\intDom \uplus \set{1}, \intLeq, \intJoin, \intJoin, \rtimes, 0, 1)$, where multiplication is the projection operator, defined as $d_1 \rtimes d_2 \defeq \ite(d_1 = 0, 0, \ite(d_2 = 1, d_1, d_2))$ and the ordering is extended with $0 \intLeq 1$.
\end{lemma}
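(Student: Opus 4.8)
The plan is to verify directly that the proposed tuple $(\intDom \uplus \set{1}, \intLeq, \intJoin, \intJoin, \rtimes, 0, 1)$ satisfies the four defining conditions of a flow domain, treating the fresh identity $1$ and the annihilator $0$ as the distinguished cases throughout. Since the multiplication $\rtimes$ is visibly non-commutative (it keeps the ``last'' factor), I would be careful to discharge the left and right distributivity laws separately rather than by symmetry.

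For condition (1), I would first argue that $(\intDom \uplus \set{1}, \intLeq)$ is an $\omega$-cpo. Extending the order only by $0 \intLeq 1$ leaves $1$ comparable solely to $0$ (below it) and incomparable to every proper element; consequently an increasing chain either lies entirely in $\intDom$, where a supremum exists because $(\intDom, \intLeq)$ is an $\omega$-cpo, or is a chain over $\set{0,1}$, whose supremum is $0$ or $1$. Hence all chains have suprema, and I would extend $\intJoin$ to the new element by the forced values $1 \intJoin 1 = 1$ and $0 \intJoin 1 = 1$, so that $\intJoin$ remains the join. Condition (4), that $0$ is least, is immediate from $0$ being least in $\intDom$ together with $0 \intLeq 1$.

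The substance is condition (2), that $(\intDom \uplus \set{1}, \intJoin, \rtimes, 0, 1)$ is a semiring. That $(\intDom \uplus \set{1}, \intJoin, 0)$ is a commutative monoid is inherited from the join being associative, commutative, and idempotent with $0$ as unit. For the multiplicative monoid I would read off from the definition of $\rtimes$ that $1$ is a two-sided identity ($1 \rtimes d = d$ and $d \rtimes 1 = d$) and that $0$ is a two-sided annihilator, and then prove associativity by a case split on which arguments equal $0$ or $1$: any occurrence of $0$ collapses both groupings to $0$, any occurrence of $1$ can be deleted as a unit, and for three proper factors both $(a \rtimes b) \rtimes c$ and $a \rtimes (b \rtimes c)$ reduce to the rightmost factor $c$. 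The remaining obligations are the two distributivity laws, again by case analysis on whether each argument is $0$, $1$, or proper.

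I expect distributivity to be the main obstacle, since it is exactly where the special clause that makes $1$ a right identity meets the join. Stripping that clause away, the ``pure projection'' $d_1, d_2 \mapsto \ite(d_1 = 0, 0, d_2)$ distributes on both sides cleanly, so the only delicate configurations are those in which the identity $1$ appears as a summand on the side being multiplied (e.g. evaluating $a \rtimes (1 \intJoin c)$ against $(a \rtimes 1) \intJoin (a \rtimes c)$); these are the cases I would check most carefully, and they are where any side conditions on how proper elements of $\intDom$ join would have to be pinned down for the law to go through. Finally, for condition (3) I would verify that $\intJoin$ and $\rtimes$ are continuous by showing they commute with suprema of increasing chains, reusing the chain classification from condition (1) and the observation that the defining tests ``$=0$'' and ``$=1$'' are stable along chains (a chain hits $1$ only if it is eventually constant at $1$), so the case defining $\rtimes$ cannot switch in the limit.
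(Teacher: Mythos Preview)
The paper does not supply a proof of this lemma; it is asserted in the expressivity discussion without argument. So there is nothing to compare against, and your plan of a direct axiom-by-axiom verification is the only reasonable route.

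That said, there is a genuine gap in your proposal, and it is exactly the case you flag as ``delicate'' but do not resolve. With the order extended only by $0 \intLeq 1$, the fresh element $1$ is incomparable to every proper $c \in \intDom \setminus \{0\}$, so $\{1,c\}$ has no upper bound in $\intDom \uplus \{1\}$ and $1 \intJoin c$ simply does not exist. But $\intJoin$ is the semiring addition, which must be total; hence as literally stated the structure is not even a semiring. Worse, no choice of value for $1 \intJoin c$ can rescue left distributivity: for proper $a,c$ one needs
\[
a \rtimes (1 \intJoin c) \;=\; (a \rtimes 1) \intJoin (a \rtimes c) \;=\; a \intJoin c,
\]
so if $1 \intJoin c = v$ then $\ite(v=1,a,v) = a \intJoin c$ must hold for \emph{every} proper $a$, which is impossible unless $\intDom$ is degenerate. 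Your remark that ``side conditions on how proper elements of $\intDom$ join would have to be pinned down'' misdiagnoses the issue: the obstruction is the join of $1$ with proper elements, not joins within $\intDom$, and no side condition on $\intDom$ alone fixes it.

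The upshot is that the lemma as written is not literally correct, and you should say so rather than hedge. What the paper actually uses this construction for (labelling edges from $\intDom$ and using $1$ only as the formal identity appearing in $\init$) avoids ever forming $1 \intJoin c$ for proper $c$, so the intended applications go through; but the clean statement that the tuple is a flow domain does not.
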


This flow has many applications; for instance, we can start with a lattice of types and use the last-edge flow to enforce that all reachable nodes are well-typed in the presence of type-unsafe operations.
Another example is doubly linked lists, which can be described by using a product of two path-counting flows, one counting paths from the head and the other from the tail, and a last-edge flow to ensure that for any node \code{n}, \code{n.next.prev} is equal to \code{n}.

We can also use this flow to encode nested combinations of structures such as trees of lists.
We start with a path-counting flow to enforce that the entire graph forms a tree and then take a product with the last-edge flow domain built from the $\omega$-cpo on $\set{0, d_T, d_L}$ where $0$ is the smallest element and the others are unordered.
We then use the good condition to label all edges to tree nodes with $d_T$, all edges to list nodes with $d_L$, and enforce that the label on the incoming edge matches the type of the node and that list nodes have no edges to tree nodes.

\subsubsection{Upper and Lower-bound Flows}
We next turn to data properties like sortedness.
Consider the lower-bound flow domain formed by the integers extended
with positive and negative infinities: $(\Int \cup \set{-\infty,
  \infty}, \geq, \min, \min, \max, \infty, -\infty)$.
Assume further that the node label domain consists of sets of data
values and that each node is labeled with the singleton set of its content.
If we label each edge of a list with the value at the source node and start with an inflow of $-\infty$ at the root $h$, then the flow at a node is the maximum value of its predecessors.
This value is like a lower bound for the value of the current node; if the value at every node reachable from $h$ is greater than its lower-bound flow, then the list beginning at $h$ is sorted in ascending order.
To do this, we use the product of the path-counting and the lower-bound flow domains.
We can use the good condition
\[ \goodCondition_{\m{sls}}(n, \inflow, \nodelabel, \flow) \defeq \exists n', l, k.\; \inflow(n) = (1, l) \land \nodelabel = \set{k} \land \flow = \ite(n' = \mathit{null}, \emptyFn, \set{(n, n') \goesto k}) \land l \leq k \]
and a global inflow $\inflow(n) = \ite(n = n_0, (1, -\infty), (0, \infty))$ to describe a sorted list beginning at $n_0$.

Note that, unlike in the definitions of inductive SL predicates, the data constraints are
not tied to the shape constraints -- we could just as easily use the
lower-bound flow condition of sortedness and the path-counting flow
condition of being a tree to describe a min-heap.

Analogously, we can use the upper-bound flow domain $(\Int \cup \set{-\infty, \infty}, \leq, \max, \max, \min, -\infty, \infty)$ to describe lists sorted in descending order and max-heaps.
We can also use a product of a lower-bound and an upper-bound flow to describe a binary search tree by enforcing that each node propagates to its child the appropriate bounds on values that can be present in the child's subtree.

We will see a more general flow, the inset flow, that can be used to specify data properties of dictionary data structures that subsumes the examples above in \rSc{sec-dictionaries}.

\subsection{Flow Graph Algebras}
\label{sec-flow-interface-algebra}

We next show how to define a separation algebra that can be used to give semantics to separation logic
assertions in a way that allows us to maintain flow-dependent
invariants.
This will enable us to reason locally about graphs and their flows using the $*$ operator of SL.

\begin{definition}[Separation Algebra~\cite{DBLP:conf/lics/CalcagnoOY07}]
  A \emph{separation algebra} is a cancellative, partial commutative
  monoid $(\Sigma, \bullet, u)$. That is, $\bullet \colon \Sigma \times
  \Sigma \pto \Sigma$ and $u \in \Sigma$ such that the following
  properties hold:
  \begin{enumerate}
  \item Identity: $\forall \sigma \in \Sigma.\; \sigma \bullet u = \sigma$.
  \item Commutativity: $\forall \sigma_1, \sigma_2 \in \Sigma.\; \sigma_1 \bullet \sigma_2 = \sigma_2 \bullet \sigma_1$.
  \item Associativity: $\forall \sigma_1, \sigma_2, \sigma_3 \in \Sigma.\; \sigma_1 \bullet (\sigma_2 \bullet \sigma_3) = (\sigma_1 \bullet \sigma_2) \bullet \sigma_3$.
  \item Cancellativity: $\forall \sigma, \sigma_1, \sigma'_1, \sigma_2 \in \Sigma.\; \sigma = \sigma_1 \bullet \sigma_2 \land \sigma = \sigma'_1 \bullet \sigma_2 \impl \sigma_1 = \sigma'_1$.
  \end{enumerate}
  Here, equality means that either both sides are defined and equal,
  or both sides are undefined.
\end{definition}

A simple way of obtaining a separation algebra of graphs is to
define the composition $\bullet$ in terms of disjoint
union. Effectively, this yields the standard model of separation
logic.
However, suppose we have a graph $\graph$ satisfying some local condition on the flow at each node that we want to decompose into two subgraphs $\graph_1$ and $\graph_2$, say to verify a program that traverses the Harris list.
To infer that the subgraph $\graph_1$ also satisfies the same local condition on the flow, we need to be able to compute the flow of nodes in $\graph_1$ without looking at the entire graph.
We thus want a stricter version of composition that
allows us to decompose a flow on $\graph$ into flows
on $\graph_1$ and $\graph_2$.
Our solution is to use graphs equipped with inflows as the elements of our separation algebra.

Since we allow our graphs to have cycles, it is hard to define the composed inflow of two inflows on subgraphs.
Instead, we start with an inflow/graph pair $(\inflow, \graph)$ and show how to decompose it into an inflow for its constituent subgraphs $\graph_i$.
To do this, we \emph{project} the inflow $\inflow$ to obtain a new
inflow function $\inflow_i$ for $\graph_i$ that results in the same
flow at each node of $\graph_i$ as $\inflow$ on $\graph$.
The projected inflow $\inflow_1(n)$ of a node in $\graph_1$ is equal to $\inflow(n)$ plus the contribution of $\graph_2$ to the flow of $n$ (and vice versa for $\inflow_2$ on $\graph_2$).
We will use projection to define the composition of inflow/graph pairs.

\begin{definition}
The projection of an inflow $\inflow$ on a graph $\graph = \graph_1 \graphPlus \graph_2$ onto $\graph_1$ is a function $\projection{(\inflow, \graph)}(\graph_1) \colon \dom(\graph_1) \rightarrow \intDom$ defined as:
\begin{align*}
  \projection{(\inflow, \graph)}(\graph_1)(n) &\defeq \inflow(n) \intPlus
                                             \intBigPlus_{n' \in \graph \setminus \graph_1}
                                             \netflow(\inflow, \graph)(n') \intMult \edgelabel(n', n).
\end{align*}
\end{definition}

The following lemma tells us that the flow induced by
the projection of an inflow $\inflow$ onto a subgraph is the same as the flow
induced by $\inflow$ on the larger graph.
In other words, we can view the inflow as containing all information about the context of a graph that is needed for calculation of the flow of nodes in the graph.

\begin{lemma}
  \label{lem-projection-impl-inset-equal}
  If $\inflow$ is an inflow on $\graph$ and $\inflow_1 =
  \projection{(\inflow, \graph)}(\graph_1)$ for some $\graph_1 \subseteq
  \graph$, then $\netflow(\inflow, \graph)|_{\graph_1} =
  \netflow(\inflow_1, \graph_1)$.
\end{lemma}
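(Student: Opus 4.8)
The plan is to reduce the statement to a least-fixpoint characterization of $\netflow$ and then compare Kleene iterations of the \emph{joint} flow computation on $\graph$ with the one on $\graph_1$. First I would establish a reusable \emph{flow equation}: for every inflow $\inflow$ and graph $\graph$, the function $\netflow(\inflow, \graph)$ is the least fixpoint of the operator $\Phi_{\inflow, \graph}$ given by $\Phi_{\inflow, \graph}(f)(n) = \inflow(n) \intPlus \intBigPlus_{n' \in \graph} f(n') \intMult \edgelabel(n', n)$. Reading $\edgelabel$ as the matrix $E$, the defining equation for the capacity is $C = I \intPlus E\,C$, whose least fixpoint is $\sum_{k} E^k$, so $\netflow(\inflow, \graph) = \inflow \intMult \capacity(\graph)$ coincides term-by-term with the Kleene iteration $\Phi_{\inflow,\graph}^k(\zerofun)$ started from $\zerofun$. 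Continuity of $\intPlus$ and $\intMult$ makes the supremum a fixpoint, and $0$ being the least element makes it the \emph{least} fixpoint; in particular $\netflow(\inflow, \graph) = \intBigJoin_k \Phi_{\inflow,\graph}^k(\zerofun)$.

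Writing $f = \netflow(\inflow, \graph)$ and $\inflow_1 = \projection{(\inflow,\graph)}(\graph_1)$, the forward inequality is then routine. Splitting the sum in the flow equation for $\graph$ over the node partition $\dom(\graph) = \dom(\graph_1) \uplus \dom(\graph_2)$ (where $\graph = \graph_1 \graphPlus \graph_2$) and recognizing the term $\inflow(n) \intPlus \intBigPlus_{n' \in \graph_2} f(n')\intMult\edgelabel(n',n)$ as exactly $\inflow_1(n)$ yields, for every $n \in \graph_1$, the identity $f(n) = \inflow_1(n) \intPlus \intBigPlus_{n'\in\graph_1} f(n')\intMult\edgelabel(n',n)$. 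Hence $f|_{\graph_1}$ is a fixpoint of $\Phi_{\inflow_1, \graph_1}$, and since $\netflow(\inflow_1,\graph_1)$ is the \emph{least} such fixpoint we get $\netflow(\inflow_1, \graph_1) \intLeq f|_{\graph_1}$.

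For the reverse inequality I would bound the Kleene iterates $f^{(k)} = \Phi_{\inflow,\graph}^k(\zerofun)$ of the joint computation from above by $L \defeq \netflow(\inflow_1,\graph_1)$, proving $f^{(k)}|_{\graph_1} \intLeq L$ by induction on $k$. The base case uses that $0$ is least. In the inductive step, for $n \in \graph_1$ I split $\Phi_{\inflow,\graph}(f^{(k)})(n)$ over $\graph_1$ and $\graph_2$, bound the $\graph_1$-summands by the induction hypothesis $f^{(k)}|_{\graph_1} \intLeq L$, and bound the $\graph_2$-summands by $f^{(k)}|_{\graph_2} \intLeq f|_{\graph_2}$ (iterates lie below their supremum $f$). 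Monotonicity of $\intPlus$ and $\intMult$ (from continuity) collapses the bound, via the projection identity $\inflow(n) \intPlus \intBigPlus_{n'\in\graph_2} f(n')\intMult\edgelabel(n',n) = \inflow_1(n)$ and the fixpoint equation for $L$, to $\inflow_1(n) \intPlus \intBigPlus_{n'\in\graph_1} L(n')\intMult\edgelabel(n',n) = L(n)$. Taking the supremum over $k$, and using that suprema in the function $\omega$-cpo are computed pointwise (so restriction commutes with $\intBigJoin$), gives $f|_{\graph_1} \intLeq L$, which together with the forward inequality yields the claimed equality.

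I expect the reverse inequality to be the main obstacle. A naive least-fixpoint argument only ever reproduces $\netflow(\inflow_1,\graph_1) \intLeq f|_{\graph_1}$, since $f|_{\graph_1}$ is merely \emph{some} fixpoint of $\Phi_{\inflow_1,\graph_1}$ and, in the presence of cycles, the flow equation genuinely has many fixpoints above the least one. The idea that breaks the symmetry is that the projected inflow $\inflow_1$ already bakes in the \emph{full} context flow $f|_{\graph_2}$, so bounding the joint iterates $f^{(k)}$ (rather than reasoning purely inside $\graph_1$) lets every $\graph_2$-contribution be dominated by $f|_{\graph_2}$ at each stage, which is exactly what makes the inductive bound close to $L$.
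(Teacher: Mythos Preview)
Your argument is correct. The paper states this lemma without proof, so there is no authors' argument to compare against; your route via the alternative least-fixpoint characterization $\netflow(\inflow,\graph) = \mathrm{lfp}\,\Phi_{\inflow,\graph}$ and a Kleene-iterate bound is the natural one, and both directions go through as you describe. The only step worth making fully explicit when you write it up is the equivalence between the paper's definition $\netflow(\inflow,\graph) = \inflow \cdot \capacity(\graph)$ and $\mathrm{lfp}\,\Phi_{\inflow,\graph}$: since $\capacity(\graph)$ is the least fixpoint of $X \mapsto I + EX$, its Kleene iterates are the partial sums $\sum_{j<k} E^j$, so by continuity of $\intMult$ one has $\inflow \cdot \capacity(\graph) = \intBigJoin_k \inflow \cdot \sum_{j<k} E^j$, which coincides with $\intBigJoin_k \Phi_{\inflow,\graph}^k(\zerofun)$ since $\Phi_{\inflow,\graph}^k(\zerofun) = \sum_{j<k} \inflow\, E^j$ as well; this uses that $C = \sum_k E^k$ also satisfies $C = I + CE$, not just $C = I + EC$.
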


\begin{example}
\rF{fig-flow-graph-decomp} uses the path-counting flow and shows a split of a graph $\graph$ into the boxed blue region $\graph_1$ and its context $\graph_2$.
The inflow on $\graph$ is $\inflow(n) = \ite(n = n_0, 1, 0)$, and its projections onto $\graph_1$ and $\graph_2$ are shown as curved arrows entering the nodes in each graph.
\end{example}

We next define a first approximation of a composition operator on inflow/graph pairs that preserves the flow at each node.
The composition of two inflow/graph pairs
$(\inflow_1, \graph_1)$ and $(\inflow_2, \graph_2)$, denoted
$(\inflow_1, \graph_1) \graphInComp (\inflow_2, \graph_2)$, is the set
of all pairs $(\inflow, \graph)$ such that $\graph_1$ and $\graph_2$ compose to $\graph$
and the component inflows $\inflow_1$ and $\inflow_2$ are the projections of
$\inflow$ onto the respective subgraphs:
\[
  (\inflow, \graph) \in (\inflow_1, \graph_1) \graphInComp (\inflow_2, \graph_2) \defiff
  \graph = \graph_1 \uplus \graph_2 \land \forall i \in \set{1,2}.\; \projection(\inflow, \graph)(\graph_i) = \inflow_i.
\]
\rL{lem-projection-impl-inset-equal} then ensures that the flow is the same in the composite graph as in the subgraphs.
Note that this set may be empty even if $\graph_1$ and $\graph_2$ are disjoint
since the inflows of the two subgraphs may not be compatible.

Unfortunately, $(\inflow_1, \graph_1) \graphInComp (\inflow_2, \graph_2)$ may
contain more than one element since there may be many possible
composite inflows $\inflow$ whose projection onto the subgraphs yields
$\inflow_1$ and $\inflow_2$.
For example\techreport{\ftodo{Promote to figure if we have space.}}, let $\graph_1, \graph_2$ be singleton graphs on nodes $n_1, n_2$ respectively that each have an edge with label $\infty \in \Nat^\infty$ to the other.
If $\inflow_1 = \set{n_1 \goesto \infty}$ and $\inflow_2 = \set{n_2 \goesto \infty}$, then the composite inflow can be $\set{n_1 \goesto 1}$ or $\set{n_2 \goesto 1}$ (among many others).
So we cannot immediately use $\graphInComp$ to define the partial
monoid for our separation algebra. However, as we shall see, all
inflows $\inflow$ satisfying the condition of the composition induce
the same flow on $\graph$.  Thus, let us define an equivalence relation
$\inflowEquiv_\graph$ on inflows for a given graph $\graph$. The
relation $\inflowEquiv_\graph$ relates all inflows that induce the
same flows:
\[ \inflow \inflowEquiv_\graph \inflow' \defiff \netflow(\inflow,
  \graph) = \netflow(\inflow', \graph). \]
Given an inflow $\inflow$, we denote its equivalence class with
respect to $\inflowEquiv_\graph$ as $[\inflow]_\graph$ and extend
$\netflow$ from inflows to their equivalence classes in the expected way.
We also write $\dom(\inflowClass) \defeq \dom(\inflow)$ for $\inflowClass = [\inflow]_\graph$, and use $\inflowDom$ to denote the set of all inflow equivalence classes.

Using Lemma~\ref{lem-projection-impl-inset-equal} we can show that the
equivalence relation $\inflowEquiv$ is a congruence on $\graphInComp$:
\begin{lemma}
  \label{lem-inflow-equiv-congruence}
  The following two implications hold:
  \begin{enumerate}
   \item
     $(\inflow, \graph) \in (\inflow_1, \graph_1) \graphInComp (\inflow_2, \graph_2) \land \inflow_1 \inflowEquiv_{\graph_1} \inflow'_1 \land \inflow_2 \inflowEquiv_{\graph_2} \inflow'_2 \land (\inflow', \graph) \in (\inflow'_1, \graph_1) \graphInComp (\inflow'_2, \graph_2) \Rightarrow \inflow \inflowEquiv_\graph \inflow'$.
   \item $(\inflow, \graph) \in (\inflow_1, \graph_1) \graphInComp (\inflow_2, \graph_2) \land \inflow \inflowEquiv_{\graph} \inflow' \Rightarrow$\\
     $\exists \inflow'_1, \inflow'_2.\; \inflow_1 \inflowEquiv_{\graph_1} \inflow'_1 \land \inflow_2 \inflowEquiv_{\graph_2} \inflow'_2 \land (\inflow', \graph) \in (\inflow'_1, \graph_1) \graphInComp (\inflow'_2, \graph_2)$.
   \end{enumerate}
\end{lemma}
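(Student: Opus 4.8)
The plan is to observe that both implications follow almost immediately from \rL{lem-projection-impl-inset-equal}, which states that projecting an inflow onto a subgraph preserves the induced flow, together with the definitions of $\graphInComp$ and $\inflowEquiv$. The single underlying fact I will use repeatedly is that, because $\graph = \graph_1 \graphPlus \graph_2$ splits $\dom(\graph)$ into the disjoint domains $\dom(\graph_1)$ and $\dom(\graph_2)$, the flow $\netflow(\inflow, \graph)$ is completely determined by its two restrictions $\netflow(\inflow, \graph)|_{\graph_1}$ and $\netflow(\inflow, \graph)|_{\graph_2}$.

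For the first implication I would first unfold membership in $\graphInComp$: from $(\inflow, \graph) \in (\inflow_1, \graph_1) \graphInComp (\inflow_2, \graph_2)$ we get $\inflow_i = \projection{(\inflow, \graph)}(\graph_i)$, so \rL{lem-projection-impl-inset-equal} yields $\netflow(\inflow, \graph)|_{\graph_i} = \netflow(\inflow_i, \graph_i)$ for $i \in \set{1,2}$, and symmetrically $\netflow(\inflow', \graph)|_{\graph_i} = \netflow(\inflow'_i, \graph_i)$ from $(\inflow', \graph) \in (\inflow'_1, \graph_1) \graphInComp (\inflow'_2, \graph_2)$. The equivalence hypotheses $\inflow_i \inflowEquiv_{\graph_i} \inflow'_i$ unfold to $\netflow(\inflow_i, \graph_i) = \netflow(\inflow'_i, \graph_i)$. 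Chaining these three equalities on each subgraph gives $\netflow(\inflow, \graph)|_{\graph_i} = \netflow(\inflow', \graph)|_{\graph_i}$, and since the two restrictions cover $\dom(\graph)$ disjointly, I conclude $\netflow(\inflow, \graph) = \netflow(\inflow', \graph)$, i.e. $\inflow \inflowEquiv_\graph \inflow'$.

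For the second implication I would exhibit the witnesses explicitly by projecting the given $\inflow'$: set $\inflow'_i \defeq \projection{(\inflow', \graph)}(\graph_i)$. Membership $(\inflow', \graph) \in (\inflow'_1, \graph_1) \graphInComp (\inflow'_2, \graph_2)$ then holds by definition, since $\graph = \graph_1 \graphPlus \graph_2$ is inherited from the hypothesis and each $\inflow'_i$ is exactly the required projection. It remains to check $\inflow_i \inflowEquiv_{\graph_i} \inflow'_i$. Applying \rL{lem-projection-impl-inset-equal} to both projections gives $\netflow(\inflow_i, \graph_i) = \netflow(\inflow, \graph)|_{\graph_i}$ and $\netflow(\inflow'_i, \graph_i) = \netflow(\inflow', \graph)|_{\graph_i}$; the hypothesis $\inflow \inflowEquiv_\graph \inflow'$ says $\netflow(\inflow, \graph) = \netflow(\inflow', \graph)$, so restricting to $\graph_i$ yields the needed equality of flows, which is precisely $\inflow_i \inflowEquiv_{\graph_i} \inflow'_i$.

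The argument is essentially bookkeeping once \rL{lem-projection-impl-inset-equal} is in hand, so I do not anticipate a genuine obstacle. The only points requiring any care are verifying that the projected functions $\inflow'_i$ are genuine inflows on $\graph_i$ (functions $\dom(\graph_i) \to \intDom$), which is immediate from the definition of projection, and that the componentwise flow equalities really glue to a global one, which relies only on $\dom(\graph)$ being the disjoint union $\dom(\graph_1) \graphPlus \dom(\graph_2)$. The substantive work of the composition theory is carried entirely by the projection lemma this result builds upon.
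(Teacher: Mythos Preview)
Your proposal is correct and matches the paper's approach: the paper does not give a detailed proof but simply states that the lemma follows from \rL{lem-projection-impl-inset-equal}, and your argument is precisely the natural unfolding of that claim.
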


Lemma~\ref{lem-inflow-equiv-congruence} implies that
$\graphInComp$ yields a partial function on pairs of graphs $\graph$ and
their inflow equivalence classes. This suggests the following
definition for our separation algebra.

\begin{definition}[Flow Graph Algebra]
  The \emph{flow graph algebra}
  $(\graphInDom, \graphInComp, \graphInEmpty)$ for flow domain
  $\intDom$ and node label domain $\nodelabelDom$ is defined by
  \begin{align*}
    \graphIn \in \graphInDom \defeq & \setcomp{([\inflow]_\graph, \graph)}{\inflow \colon \dom(\graph) \to \intDom}\\
  ([\inflow_1]_{\graph_1}, \graph_1) \graphInComp
    ([\inflow_2]_{\graph_2}, \graph_2)
    \defeq &
  \begin{cases}
    ([\inflow]_\graph, \graph) & \graph = \graph_1 \graphPlus \graph_2
    \land \forall i \in \set{1,2}.\; \projection(\inflow, \graph)(\graph_i) \in [\inflow_i]_{\graph_i} \\
    \text{undefined} & \text{otherwise}
  \end{cases}\\
    \graphInEmpty \defeq & ([\inflowEmpty]_{\graphEmpty}, \graphEmpty)
  \end{align*}
  where $\graphEmpty$ is the empty graph and $\inflowEmpty$ the
  inflow on an empty domain. We call the elements $\graphIn \in \graphInDom$
  \emph{flow graphs}. We again let $\dom(\graphIn) = \dom(\graph)$ and write $\graphIn$ for $\dom(\graphIn)$ when it is clear from the context.
\end{definition}

\begin{theorem}
  \label{thm-flow-graphs-separation-algebra}
  The flow graph algebra $(\graphInDom, \graphInComp, \graphInEmpty)$ is a separation algebra.
\end{theorem}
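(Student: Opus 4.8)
The plan is to verify the four separation-algebra axioms directly against the definition of the flow graph algebra, reusing \rL{lem-inflow-equiv-congruence} to know that $\graphInComp$ is already a well-defined partial function on inflow equivalence classes. Two of the axioms are essentially free: \emph{commutativity} follows from the commutativity of $\graphPlus$ together with the symmetry in $i$ of the side condition $\forall i \in \set{1,2}.\; \projection(\inflow, \graph)(\graph_i) \in [\inflow_i]_{\graph_i}$; and the \emph{identity} law holds because $\graph \graphPlus \graphEmpty = \graph$ and $\projection(\inflow, \graph)(\graph)(n) = \inflow(n)$ (the sum ranges over $\graph \setminus \graph = \es$), so that $\inflow$ itself witnesses $([\inflow]_\graph, \graph) \graphInComp \graphInEmpty = ([\inflow]_\graph, \graph)$. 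The real work is in associativity and cancellativity.

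The technical core I would establish first consists of two facts about projection. First, \emph{projection respects $\inflowEquiv$}: if $\inflow \inflowEquiv_\graph \inflow'$ and $\graph_1 \subseteq \graph$, then $\projection(\inflow, \graph)(\graph_1) \inflowEquiv_{\graph_1} \projection(\inflow', \graph)(\graph_1)$. This is immediate from \rL{lem-projection-impl-inset-equal}, since both projected inflows induce on $\graph_1$ the flow $\netflow(\inflow, \graph)|_{\graph_1} = \netflow(\inflow', \graph)|_{\graph_1}$. Second, \emph{projection composes}: whenever $\graph_{a1} \subseteq \graph_a \subseteq \graph$ with $\graph_a = \graph_{a1} \graphPlus \graph_{a2}$, we have the exact identity $\projection(\projection(\inflow, \graph)(\graph_a), \graph_a)(\graph_{a1}) = \projection(\inflow, \graph)(\graph_{a1})$. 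I would prove this by unfolding both sides: writing $f = \netflow(\inflow, \graph)$ and $\inflow_a = \projection(\inflow, \graph)(\graph_a)$, \rL{lem-projection-impl-inset-equal} gives $\netflow(\inflow_a, \graph_a) = f|_{\graph_a}$, and since the edge function of $\graph$ agrees with that of $\graph_a$ on $N_a \times N_a$, the inner sum over $\graph_a \setminus \graph_{a1}$ and the sum over $\graph \setminus \graph_a$ buried inside $\inflow_a$ merge into a single sum over $(\graph \setminus \graph_a) \uplus (\graph_a \setminus \graph_{a1}) = \graph \setminus \graph_{a1}$, which is exactly $\projection(\inflow, \graph)(\graph_{a1})$.

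With these two facts, \emph{associativity} reduces to bookkeeping. Given $\graphIn_i = ([\inflow_i]_{\graph_i}, \graph_i)$ for $i \in \set{1,2,3}$, both bracketings force the same composite graph $\graph = \graph_1 \graphPlus \graph_2 \graphPlus \graph_3$ by associativity of $\graphPlus$, and both are defined exactly when the $\graph_i$ are pairwise disjoint, so definedness coincides. For the inflow component I would show that any $\inflow$ witnessing $(\graphIn_1 \graphInComp \graphIn_2) \graphInComp \graphIn_3$ also witnesses $\graphIn_1 \graphInComp (\graphIn_2 \graphInComp \graphIn_3)$ (the converse follows by commutativity, swapping indices $1$ and $3$). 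Concretely, the left bracketing yields $\projection(\inflow, \graph)(\graph_{12}) \inflowEquiv_{\graph_{12}} \inflow_{12}$ with $\projection(\inflow_{12}, \graph_{12})(\graph_i) \inflowEquiv_{\graph_i} \inflow_i$ for $i \in \set{1,2}$, and $\projection(\inflow, \graph)(\graph_3) \inflowEquiv_{\graph_3} \inflow_3$; applying projection-composes and then projection-respects-$\inflowEquiv$ gives $\projection(\inflow, \graph)(\graph_1) \inflowEquiv_{\graph_1} \inflow_1$ and, taking $\inflow_{23} \defeq \projection(\inflow, \graph)(\graph_{23})$, the remaining conditions $\projection(\inflow_{23}, \graph_{23})(\graph_2) \inflowEquiv_{\graph_2} \inflow_2$ and $\projection(\inflow_{23}, \graph_{23})(\graph_3) \inflowEquiv_{\graph_3} \inflow_3$. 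Hence the same class $([\inflow]_\graph, \graph)$ arises from both sides.

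Finally, \emph{cancellativity} splits into a graph part and an inflow part. If $\graph_1 \graphPlus \graph_2 = \graph_1' \graphPlus \graph_2 = \graph$, then $N_1 = N \setminus N_2 = N_1'$ and the node- and edge-labellings of $\graph_1$ and $\graph_1'$ are the common restrictions of $\graph$, so $\graph_1 = \graph_1'$; the only delicate point is recovering the sink set $N_1^o$, which I would argue is determined as the set of targets of $\graph_1$'s outgoing edges lying outside $N_1$. Given $\graph_1 = \graph_1'$, the inflow part is immediate: a representative $\inflow$ of the composite satisfies both $\projection(\inflow, \graph)(\graph_1) \inflowEquiv_{\graph_1} \inflow_1$ and $\projection(\inflow, \graph)(\graph_1) \inflowEquiv_{\graph_1} \inflow_1'$, so $\inflow_1 \inflowEquiv_{\graph_1} \inflow_1'$ and thus $[\inflow_1]_{\graph_1} = [\inflow_1']_{\graph_1}$, giving $\graphIn_1 = \graphIn_1'$. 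I expect the projection-composes identity to be the main obstacle (it is what makes associativity go through), with the sink-set recovery in cancellativity a close second, since it relies on sinks being exactly the externally-pointed-to nodes rather than arbitrary declared ones.
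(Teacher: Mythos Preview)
The paper does not give a proof of this theorem; it is stated immediately after \rL{lem-inflow-equiv-congruence} with the remark that this lemma makes $\graphInComp$ a well-defined partial function on equivalence classes. Your overall plan---verify the four axioms directly, isolating the work in a ``projection respects $\inflowEquiv$'' fact (immediate from \rL{lem-projection-impl-inset-equal}) and an exact ``projection composes'' identity---is the natural route, and your treatment of identity, commutativity, and associativity is correct. In particular, the projection-composes identity is exactly the lemma that makes associativity go through, and your unfolding argument for it is right: by \rL{lem-projection-impl-inset-equal} the inner flow agrees with the global flow on $\graph_a$, and the two sums telescope over $(\graph\setminus\graph_a)\uplus(\graph_a\setminus\graph_{a1}) = \graph\setminus\graph_{a1}$.

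The gap is in cancellativity, precisely at the point you flag as delicate. Your proposed recovery of $N_1^o$ as ``the set of targets of $\graph_1$'s outgoing edges lying outside $N_1$'' does not work under the paper's definition of a graph, which permits sink nodes with no non-zero incoming edge. Concretely, take $\graph_1 = (\{a\},\{b\},\nodelabelling_1,\edgelabel_1)$ with $\edgelabel_1 \equiv 0$, $\graph_1' = (\{a\},\emptyset,\nodelabelling_1,\edgelabel_1')$ with $\edgelabel_1' \equiv 0$, and $\graph_2 = (\{b\},\emptyset,\nodelabelling_2,\edgelabel_2)$ with $\edgelabel_2 \equiv 0$. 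Then $\graph_1 \graphPlus \graph_2 = \graph_1' \graphPlus \graph_2$ (both have node set $\{a,b\}$, empty sink set, and all-zero edge function), yet $\graph_1 \neq \graph_1'$. Pairing each with the zero-inflow class gives flow graphs $\graphIn_1 \neq \graphIn_1'$ with $\graphIn_1 \graphInComp \graphIn_2 = \graphIn_1' \graphInComp \graphIn_2$, so cancellativity fails as stated. This is not a flaw in your strategy but a genuine subtlety in the paper's graph definition: to make the theorem go through one must either restrict to graphs whose sink set is exactly $\{n' \notin N \mid \exists n\in N.\; \edgelabel(n,n') \neq 0\}$, or quotient by the equivalence that forgets phantom sinks. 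With either convention your argument for cancellativity is then complete, since the inflow part (which you handle correctly via the common projection) was never the issue.
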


\subsection{Abstracting Flow Graphs with Flow Interfaces}
\label{sec-flow-interfaces}

We can now use flow graphs to give semantics to separation
logic assertions. However, we also want to be able to use predicates
in the logic that describe sets of such graphs so that the flow of
each graph in the set satisfies certain local invariants at each of
its nodes. In the following, we introduce semantic objects, which we
call \emph{flow interfaces}, that provide such abstractions. We will
then lift the composition operator $\graphInComp$ from flow 
graphs to flow interfaces and prove important properties about flow
interface composition. In \rSc{sec-logic}, we will
use these properties to justify general lemmas for proving
entailments in a separation logic with flow interfaces.

Our target abstraction must have certain properties.
Recall from \rSc{sec-expressivity} that to describe data structure invariants as conditions on the flow we fixed a particular inflow to the global graph.
Thus, our abstraction must fix the inflow of the abstracted graph.
Secondly, our abstraction must be sound with respect to flow graph composition: if we have a graph $\graph = \graph_1 \graphPlus \graph_2$, and $\graph_1$ and $\graph'_1$ satisfy the same interface, then $\graph'_1 \graphPlus \graph_2$ must be defined and satisfy the same interface as $\graph$.

The second property is a challenging one, for the effect of a local modification in a flow graph are not necessarily local: small changes in $\graph_1$ may affect the flow of nodes in $\graph_2$.
For example, in \rF{fig-flow-graph-decomp} if we modified $\graph_1$ by adding a new outgoing edge from $n_1$ to $n_6$, then this would increase the path count of $n_6$ to $2$ and we can no longer compose with $\graph_2$ as it expects an inflow of $1$ at $n_6$.
We need to find a characterization of modifications to $\graph_1$ that preserve the inflow of every node in $\graph_2$.

Intuitively, the internal structure of $\graph_1$ should be irrelevant for computing the flow of $\graph_2$; from the perspective of $\graph_2$, $\graph_1$ should be replaceable by any other flow graph that provides $\graph_2$ with the same inflow.
One may try to define an outflow quantity, analogous to inflow, that specifies the flow that a graph gives to each of its sink nodes.
If the interface abstracting $\graph_1$ also fixes its outflow, then one expects the flow of $\graph_2$ to be preserved.
However, this is not true in the presence of cycles.
\techreport{\ftodo{Add figure and counter example.}}

On the other hand, if we preserve the capacity between every pair of node and sink node in the modification of $\graph_1$ to $\graph'_1$, then the flow of all nodes in $\graph_2$ stays the same.
This is a consequence of the semiring properties of the flow domain and the fact that $\intPlus$ and $\intMult$ are continuous.
Moreover, as $\graph_2$ can only route flow into nodes in $\graph_1$ with a non-zero inflow, we only need to preserve the capacity from these \emph{source} nodes.
We capture this idea formally by defining the \emph{flow map} of $\graphIn = (\inflowClass, \graph)$ as the restriction of $\capacity(\graph)$ onto $\graph$'s source-sink pairs as specified by $\inflowClass$:
\begin{equation*}
  \flowmap((\inflowClass, \graph)) \defeq \capacity(\graph)|_{\setcomp{n \in N}{\exists \inflow \in \inflowClass.\; \inflow(n) > 0} \times N^o}.
\end{equation*}

This gives us sufficient technical machinery to define our abstraction of flow graphs.

\begin{definition}[Flow Interface]
  Given a flow graph $(\inflowClass, \graph) \in \graphInDom$, its \emph{flow interface} is the tuple consisting of its inflow equivalence class, the join of all its node labels, and its flow map:
  \begin{align*}
    \interfaceFn(\graphIn) &\defeq 
    \paren{\inflowClass, \nodelabelJoin_{n \in \graphIn} \nodelabelling(n), \flowmap(\graphIn)} \quad \text{where } \graphIn = (\inflowClass, \graph) \text{ and } \graph = (N, N^o, \nodelabelling, \edgelabel).
  \end{align*}
  The set of all flow interfaces is $\interfaces \defeq \setcomp{\interfaceFn(\graphIn)}{\graphIn \in \graphInDom}$.
  The denotation of a flow interface is a set of flow graphs defined as $\denotation{\interface} \defeq \setcomp{\graphIn \in \graphInDom}{\interfaceFn(\graphIn) = \interface}$.
\end{definition}

\begin{example}
  The interface of $\graphIn_1 = ([\inflow_1]_{\graph_1}, \graph_1)$ from \rF{fig-flow-graph-decomp} is $\interface_1 = (\set{\inflow_1}, \nodelabelBot, \flow_1)$ where $\flow_1$ is the flow map that sets $\set{(n_1, n_3), (n_1, n_5), (n_2, n_6)}$ to $1$ and every other pair to $0$ (assuming all nodes are labeled with $\nodelabelBot$).
  If we now modified $\graphIn_1$ to $\graphIn'_1$ by removing the edge $(n_1, n_4)$ and adding an edge $(n_1, n_5)$, then $\graphIn'_1$ will also be in the denotation $\denotation{\interface_1}$.
  Note that $\graphIn'_1$ still composes with the flow graph on $\graph_2$, and the flow at all nodes in $\graph_2$ is the same in this composition.
\end{example}

If we look at a flow interface $\interface = (\inflowClass, \nodelabel, \flow)$ in the rely-guarantee setting, a flow graph satisfying $\interface$ relies on getting some inflow in $\inflowClass$ from its context, and guarantees the flow map $\flow$ to its context.
The third component, $\nodelabel$, provides a summary of the information contained in the nodes of a graph, which will be useful, for example, to reason about the contents of data structures.
We identify the domain of an interface with the domain of the graphs it abstracts, i.e. $\dom((\inflowClass, \nodelabel, \flow)) = \dom(\inflowClass)$.

Note that both the inflow class and flow map of a flow interface
$\interface$ fully determine the domains of the graphs being described
by the interface. That is, $\interface$ does not abstract from the
identity of the nodes in the graphs $\denotation{\interface}$, which
is intentional as it allows us to define separation logic predicates
in terms of flow interfaces that have precise semantics, without
encoding specific traversal patterns of the underlying data structure
and without restricting sharing as other methods do.

If we consider the equivalence relation on flow graphs induced by having the same flow interface, then we have the required property that this relation is a congruence on $\graphInComp$.

\begin{lemma}
  \label{lem-interface-composition-congruence}
  If $\graphIn_1, \graphIn'_1 \in \denotation{\interface_1}$, $\graphIn_2, \graphIn'_2 \in \denotation{\interface_2}$ and $\graphIn_1 \graphInComp \graphIn_2 \in \denotation{\interface}$, then $\graphIn'_1 \graphInComp \graphIn'_2 \in \denotation{\interface}$.
\end{lemma}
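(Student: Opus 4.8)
The plan is to show that the composition $\graphInComp$ \emph{factors through} the flow interface: both whether $\graphIn_1 \graphInComp \graphIn_2$ is defined and, when it is, the value $\interfaceFn(\graphIn_1 \graphInComp \graphIn_2)$ depend only on $\interfaceFn(\graphIn_1)$ and $\interfaceFn(\graphIn_2)$, never on the internal structure of the two flow graphs. The lemma is then immediate, since $\graphIn_1,\graphIn'_1$ share the interface $\interface_1$, $\graphIn_2,\graphIn'_2$ share $\interface_2$, and $\graphIn_1 \graphInComp \graphIn_2 \in \denotation{\interface}$. I would prove the factoring by transferring one side at a time: first replace $\graphIn_2$ by $\graphIn'_2$, then $\graphIn_1$ by $\graphIn'_1$, using commutativity of $\graphInComp$ from \rT{thm-flow-graphs-separation-algebra} so that a single swap lemma (replacing the second argument) suffices.

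First I would record the bookkeeping forced by equal interfaces. Since the inflow equivalence class is the first interface component, $\dom(\graph'_2) = \dom(\inflowClass_2) = \dom(\graph_2)$, so the node sets of $\graph'_2$ and $\graph_1$ inherit the disjointness of $\graph_2$ and $\graph_1$ and $\graph_1 \graphPlus \graph'_2$ is defined; the sink sets agree because they are recorded in the (common) domain of the flow map whenever the source set is nonempty, with the remaining inert case handled separately. The node-label component is trivial: $\bignodelabelJoin_{n \in \graph_1 \graphPlus \graph'_2} \nodelabelling(n) = \nodelabel_1 \nodelabelJoin \nodelabel_2$, exactly as for $\graph_1 \graphPlus \graph_2$.

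The computational heart is that a subgraph's contribution to its context is captured completely by its flow map and source set. Writing $S_i \defeq \setcomp{m \in \dom(\graph_i)}{\exists \inflow \in \inflowClass_i.\; \inflow(m) \neq 0}$, I would show that for a witness $(\inflow, \graph) \in (\inflow_1, \graph_1) \graphInComp (\inflow_2, \graph_2)$ and any node $n$ into which $\graph_2$ has edges, the outflow $\intBigPlus_{n'' \in \graph_2} \netflow(\inflow_2, \graph_2)(n'') \intMult \edgelabel_2(n'', n)$ equals $\intBigPlus_{m \in S_2} \inflow_2(m) \intMult \flow_2(m, n)$ with $\flow_2 = \flowmap(\graphIn_2)$. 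This uses the symmetric (last-step) form $\capacity(\graph_2)(m,n) = \init(m,n) \intPlus \intBigPlus_{n''} \capacity(\graph_2)(m,n'') \intMult \edgelabel_2(n'',n)$, valid because $\capacity(\graph_2)$ is the Kleene iterate $\intBigPlus_k \edgelabel_2^{\,k}$ of its edge matrix, together with the fact that any node into which $\graph_1$ feeds nonzero flow already lies in $S_2$ (by definition $\inflow_2$ pointwise dominates $\graph_1$'s outflow, and $0$ is least, so the support of $\inflow_2$ stays inside $S_2$). Combined with \rL{lem-projection-impl-inset-equal}, this shows the projected boundary inflows satisfy a closed fixpoint system whose only data are the external inflow $\inflow$, the source sets $S_1, S_2$, and the flow maps $\flow_1, \flow_2$, all part of $\interface_1, \interface_2$.

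To finish the swap I would reuse the \emph{same} external inflow $\inflow$ for $\graph_1 \graphPlus \graph'_2$ (legal, since the node set is unchanged). Because $\flowmap(\graphIn'_2) = \flow_2 = \flowmap(\graphIn_2)$ and $S_2$ is unchanged, the boundary fixpoint system for $\graph_1 \graphPlus \graph'_2$ is literally the one for $\graph_1 \graphPlus \graph_2$; by $\omega$-completeness and continuity of $\intPlus, \intMult$ its least solution is unique, so $\projection(\inflow, \graph_1 \graphPlus \graph'_2)(\graph_1)$ and $\projection(\inflow, \graph_1 \graphPlus \graph'_2)(\graph'_2)$ agree with $\inflow_1, \inflow_2$ on source nodes and, by \rL{lem-projection-impl-inset-equal}, induce the same flows as before. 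Hence they lie in $\inflowClass_1, \inflowClass_2$, so $\graphIn_1 \graphInComp \graphIn'_2$ is defined; flows agreeing everywhere gives $[\inflow]_{\graph_1 \graphPlus \graph'_2} = \inflowClass$, and the same capacity-composition computation (expressed purely through $\flow_1, \flow_2$ and the source/sink sets) preserves the flow-map component. The main obstacle is precisely this last step: since the flow domain is only a semiring (no subtraction) and the boundary inflows are defined circularly, one cannot simply solve for the new inflow, and the argument must be cast as an equality of least fixpoints over the $\omega$-cpo, with care that the projections' supports remain inside $S_1, S_2$ so the flow maps genuinely suffice.
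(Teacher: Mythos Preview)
The paper states this lemma without proof, so I evaluate your proposal on its own merits. Your overall strategy is sound: swap one component at a time (justified by commutativity from \rT{thm-flow-graphs-separation-algebra}), and reduce everything to the observation that a subgraph's outflow to its context is $\intBigPlus_{m \in S_i} \inflow_i(m) \intMult \flow_i(m,n)$, which depends only on the interface data. Your derivation of that formula via the last-step unrolling of $\capacity$ is correct.

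There is, however, a real gap in the step where you conclude equality of the composed inflow classes. You write that ``flows agreeing everywhere gives $[\inflow]_{\graph_1 \graphPlus \graph'_2} = \inflowClass$,'' but the global flows $\netflow(\inflow, \graph_1 \graphPlus \graph_2)$ and $\netflow(\inflow, \graph_1 \graphPlus \graph'_2)$ do \emph{not} agree on $N_2$: by \rL{lem-projection-impl-inset-equal} they equal $\netflow(\inflow_2, \graph_2)$ and $\netflow(\inflow_2, \graph'_2)$ respectively, and these differ whenever $\graph_2, \graph'_2$ have different internal structure (e.g.\ take $\graph_2$ a two-node path and $\graph'_2$ the same nodes with one isolated; both can satisfy the same interface). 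Since $[\inflow]_{\graph}$ and $[\inflow]_{\graph'}$ are equivalence classes for \emph{different} relations $\inflowEquiv_{\graph}$ and $\inflowEquiv_{\graph'}$, their equality as sets does not follow from any pointwise agreement of flows. The fix is to invoke \rL{lem-inflow-equiv-congruence}: parts (1) and (2) together give the characterization $[\inflow]_{\graph} = \setcomp{\inflow'}{\projection(\inflow',\graph)(\graph_i) \in \inflowClass_i \text{ for } i = 1,2}$, and your swap argument (applied now to an arbitrary $\inflow'$, not just the original witness) shows this set coincides with $\setcomp{\inflow'}{\projection(\inflow',\graph')(\graph'_i) \in \inflowClass_i}$, which is $[\inflow]_{\graph'}$ by the same characterization.

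A smaller gap: you assert that the projections coincide with the least fixpoint of your boundary system, but this requires justification, since the projections are defined via the global $\netflow$ rather than directly as a least fixpoint. One clean route: show that any fixpoint $(\rho_1,\rho_2)$ of the boundary system assembles into a solution $\psi$ of the global flow equation $\psi(n) = \inflow(n) + \intBigPlus_{n''} \psi(n'')\,\edgelabel(n'',n)$ by setting $\psi|_{N_i} = \netflow(\rho_i, \graph_i)$; minimality of the global flow then yields $\psi \geq \netflow(\inflow,\graph)$, whence $\rho_i \geq \inflow_i$. Once this is in place, your Kleene-iteration argument (iterates stay supported in $S_i$, hence see only the agreed-upon portion of the capacities) goes through cleanly.
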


We can now lift flow graph composition to flow interfaces, denoted $\interface = \interface_1 \intComp \interface_2$, as follows:
\begin{align*}
  \interface = \interface_1 \intComp \interface_2 \defiff{}& \exists \graphIn \in \denotation{\interface}, \graphIn_1 \in \denotation{\interface_1}, \graphIn_2 \in \denotation{\interface_2}.\; \graphIn = \graphIn_1 \graphInComp
             \graphIn_2.
\end{align*}
Note that $\intComp$ is also a partial function for the domains of the graphs may overlap, or the inflows may not be compatible.
The identity flow interface $\interfaceEmpty$ is defined by
$\interfaceEmpty \defeq (\set{\inflowEmpty}, \nodelabelBot, \flowEmpty)$, where $\flowEmpty$ is the flow map on an empty domain.
A simple corollary of \rL{lem-interface-composition-congruence} is
that the flow interface algebra $(\interfaces, \intComp,
\interfaceEmpty)$ is also a separation algebra.

As we saw in \rSc{sec-expressivity}, to express properties of different data structures we need to use abstractions of flow graphs that additionally satisfy certain conditions on the flow at each node.
To formalize this, we assume the node condition is specified by a predicate
$\goodCondition(n, \inflow, \nodelabel, \flow)$ which takes a node $n$, the inflow $\inflow$ of the singleton flow graph containing $n$, the node label $\nodelabel$ of $n$, and the edge function $\flow$ of that singleton graph as arguments.
The denotation of a flow interface $\interface$ with respect to a good
condition $\goodCondition$ is defined as
\begin{align*}
  \denotation{\interface}_{\goodCondition} & \defeq \setcomp{(\inflowClass, (N, N^o, \nodelabelling, \edgelabel)) \in \denotation{\interface}}{\forall n \in N.\; \goodCondition(n, \inflow_n, \nodelabelling(n), \edgelabel_n)}
\end{align*}
where $\inflow_n \defeq {} \set{n \goesto \netflow(\inflowClass, \graph)(n)}$ and $\edgelabel_n \defeq {} \setcomp{(n, n') \goesto \edgelabel(n, n')}{n' \in N \cup N^o, \edgelabel(n, n') \neq 0}$.
Note that the interface of the singleton flow graph containing $n$ is $(\set{\inflow_n}, \nodelabelling(n), \ite(\inflow_n(n) = 0, \emptyFn, \edgelabel_n))$, but we choose the above formulation for notational convenience.

The following lemma tells us that imposing a good condition on flow interfaces is essentially reducing the domain of flow graphs to those that satisfy the good condition.
In other words, the equivalence on flow graphs induced by the denotation of interfaces with respect to a good condition is also a congruence on $\graphInComp$.

\begin{lemma}
  \label{lem-good-quotient-congruence}
   For all good conditions $\goodCondition$ and interfaces $\interface = \interface_1 \intComp \interface_2$, if $\graphIn_1 \in \denotation{\interface_1}_{\goodCondition}$ and $\graphIn_2 \in \denotation{\interface_2}_{\goodCondition}$ then $\graphIn_1 \graphInComp \graphIn_2$ is defined and in $\denotation{\interface}_{\goodCondition}$.
\end{lemma}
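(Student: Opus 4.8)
The plan is to split the statement into its two assertions: first, that the composition $\graphIn_1 \graphInComp \graphIn_2$ is defined and lands in $\denotation{\interface}$; and second, that the resulting flow graph additionally satisfies the good condition at every node, i.e.\ lies in $\denotation{\interface}_{\goodCondition}$. The first assertion I expect to obtain essentially for free from the congruence \rL{lem-interface-composition-congruence}; the second is the genuinely new content, and it rests on the observation that a good condition is a purely \emph{local} predicate whose three arguments at a node are all preserved when passing from a subgraph to a composite.

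For the first part, I would unfold $\interface = \interface_1 \intComp \interface_2$ to obtain witness flow graphs $\graphIn_1^\ast \in \denotation{\interface_1}$ and $\graphIn_2^\ast \in \denotation{\interface_2}$ with $\graphIn_1^\ast \graphInComp \graphIn_2^\ast \in \denotation{\interface}$. Since $\denotation{\interface_i}_{\goodCondition} \subseteq \denotation{\interface_i}$, the hypotheses give $\graphIn_1 \in \denotation{\interface_1}$ and $\graphIn_2 \in \denotation{\interface_2}$. Applying \rL{lem-interface-composition-congruence} to the pairs $(\graphIn_1^\ast, \graphIn_2^\ast)$ and $(\graphIn_1, \graphIn_2)$ then yields $\graphIn_1 \graphInComp \graphIn_2 \in \denotation{\interface}$; in particular the composition is defined.

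For the second part, write $\graphIn \defeq \graphIn_1 \graphInComp \graphIn_2 = (\inflowClass, \graph)$ with $\graph = \graph_1 \graphPlus \graph_2$ and $\graphIn_i = (\inflowClass_i, \graph_i)$. Fix a node $n \in \graph$; without loss of generality $n \in \graph_1$. I must check that the triple $(\inflow_n, \nodelabelling(n), \edgelabel_n)$ fed to $\goodCondition$ in $\graphIn$ coincides with the triple already verified for $\graphIn_1$. Two of the three are immediate from the disjoint-union definition: $\nodelabelling = \nodelabelling_1 \uplus \nodelabelling_2$ gives $\nodelabelling(n) = \nodelabelling_1(n)$, and the case analysis defining $\edgelabel'$ shows that for $n \in N_1$ the value $\edgelabel'(n, n')$ equals $\edgelabel_1(n, n')$ on $N_1 \cup N_1^o$ and vanishes elsewhere, so $n$'s set of non-zero outgoing edges---hence $\edgelabel_n$---is the identical set of pairs in $\graph$ and in $\graph_1$ (even when a sink of $\graph_1$ has become internal to $\graph$). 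The remaining obligation is flow preservation: $\netflow(\inflowClass, \graph)(n) = \netflow(\inflowClass_1, \graph_1)(n)$, so that $\inflow_n = \set{n \goesto \netflow(\inflowClass, \graph)(n)}$ agrees with the singleton inflow used for $\graphIn_1$.

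This last step is where the real work lies, and I would derive it by chaining the composition definition with \rL{lem-projection-impl-inset-equal}. By definition of $\graphInComp$, the composite inflow $\inflow$ (a representative of $\inflowClass$) satisfies $\projection(\inflow, \graph)(\graph_1) \in \inflowClass_1$, i.e.\ $\projection(\inflow, \graph)(\graph_1) \inflowEquiv_{\graph_1} \inflow_1$ for a representative $\inflow_1$ of $\inflowClass_1$; by definition of $\inflowEquiv_{\graph_1}$ these two inflows induce the same flow on $\graph_1$. On the other hand, \rL{lem-projection-impl-inset-equal} gives $\netflow(\inflow, \graph)|_{\graph_1} = \netflow(\projection(\inflow, \graph)(\graph_1), \graph_1)$. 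Composing the two equalities yields $\netflow(\inflow, \graph)(n) = \netflow(\inflow_1, \graph_1)(n)$ for every $n \in \graph_1$, which is exactly the desired flow preservation. With all three arguments matched, $\goodCondition(n, \inflow_n, \nodelabelling(n), \edgelabel_n)$ follows from $\graphIn_1 \in \denotation{\interface_1}_{\goodCondition}$; the symmetric argument handles $n \in \graph_2$, and therefore $\graphIn \in \denotation{\interface}_{\goodCondition}$.
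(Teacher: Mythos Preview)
Your proof is correct and follows the natural approach. The paper does not include an explicit proof of this lemma, but your argument is precisely what one would expect: reduce the ``defined and in $\denotation{\interface}$'' part to \rL{lem-interface-composition-congruence} via the witnesses furnished by the definition of $\intComp$, and then verify the good condition node-by-node by checking that each of the three arguments $(\inflow_n, \nodelabelling(n), \edgelabel_n)$ is preserved when passing from $\graphIn_i$ to the composite. Your treatment of the edge function is careful in the right place (sinks of $\graph_1$ that become internal in $\graph$), and your derivation of flow preservation from the defining clause of $\graphInComp$ together with \rL{lem-projection-impl-inset-equal} is exactly the intended use of those two facts.
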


When reasoning about programs, we will sometimes need to modify a flow graph region in a way that will increase the set of inflows.
For instance, if we were to add a node $n$ to a cyclic list, then the equivalence class of inflows will now also contain inflows that have $n$ as a source.
As the flow map is defined from all source nodes, this modification will also change the flow map.
However, as long as the flow map from the existing sources is preserved, then the modified flow graph should still be able to compose with any context.
Formally, we say an interface $(\inflowClass, \nodelabel,
\flow)$ is \emph{contextually extended} by $(\inflowClass', \nodelabel',
\flow')$, written $(\inflowClass, \nodelabel, \flow) \intLessEquiv
(\inflowClass', \nodelabel', \flow')$, if and only if $\inflowClass \subseteq
\inflowClass'$ and
\begin{align*}
  & \forall n \in \dom_1(\flow).\; \exists \inflow \in \inflowClass.\; \inflow(n) \neq 0 \impl \flow(n, \cdot) = \flow'(n, \cdot).
\end{align*}
where $\flow(n, \cdot)$ is the function that maps a node $n'$ to $\flow(n, n')$.
The following theorem states that contextual extension preserves
composability and is itself preserved
under interface composition.
\begin{theorem}[Replacement]
  \label{thm-replacement}
  If $\interface = \interface_1 \intComp \interface_2$, and $\interface_1 \intLessEquiv \interface'_1$, then there exists $\interface' = \interface'_1 \intComp \interface_2$ such that $\interface \intLessEquiv \interface'$.
\end{theorem}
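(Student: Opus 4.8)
The plan is to prove the theorem constructively: starting from witnesses for $\interface = \interface_1 \intComp \interface_2$, I build an explicit representative of $\interface'_1$, compose it with the unchanged context graph of $\interface_2$, and take $\interface'$ to be the interface of the result. First I would unfold the definition of $\intComp$ to obtain flow graphs $\graphIn_1 = (\inflowClass_1, \graph_1) \in \denotation{\interface_1}$ and $\graphIn_2 = (\inflowClass_2, \graph_2) \in \denotation{\interface_2}$ with $\graphIn = \graphIn_1 \graphInComp \graphIn_2 \in \denotation{\interface}$ and $\graph = \graph_1 \graphPlus \graph_2$. Since $\interface'_1 \in \interfaces$, its denotation is nonempty, so I may pick a representative $\graphIn'_1 = (\inflowClass'_1, \graph'_1) \in \denotation{\interface'_1}$; by \rL{lem-interface-composition-congruence} the eventual composite interface will not depend on this choice. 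The relation $\interface_1 \intLessEquiv \interface'_1$ gives $\inflowClass_1 \subseteq \inflowClass'_1$ (so $\graph'_1$ admits at least the inflows $\graph_1$ did, possibly over additional source nodes) and that $\flowmap(\graphIn'_1)$ agrees with $\flowmap(\graphIn_1)$ on every row out of an original source of $\interface_1$. I would choose the representative so that any nodes of $\graph'_1$ not already in $\graph_1$ are fresh relative to $\graph_2$, which keeps $\graph'_1 \graphPlus \graph_2$ defined.

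The crux of the argument is a single structural lemma: the composite capacity $\capacity(\graph)(m, n)$, for $m$ a source node and $n$ either inside $\graph_2$ or a sink, depends on $\graph_1$ only through $\flowmap(\graphIn_1)$, and hence is invariant under replacing $\graph_1$ by $\graph'_1$. I would establish this from two facts about $\graphInComp$. First, by the projection formula, any node of $\graph_1$ into which $\graph_2$ routes nonzero flow has nonzero projected inflow (using that $0$ is the smallest element, so $\intPlus$ is monotone and nonzero contributions cannot cancel), hence is an \emph{original source} of $\interface_1$; and since $\graph_2$ has edge label $0$ to the fresh nodes of $\graph'_1$, the flow $\graph_2$ injects still enters only at these original sources, whose flow-map rows are preserved. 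Second, $\graph_1$ returns flow to $\graph_2$ only through its sink nodes, and the amount returned per unit injected is exactly $\flowmap(\graphIn_1)$ on the source--sink pairs. Expanding the defining fixpoint of $\capacity(\graph)$ and regrouping the summation by distributivity and $\omega$-continuity of $\intPlus$ and $\intMult$, every journey from a source into $\graph_2$ or a sink decomposes into sojourns inside $\graph_1$ (each contributing a preserved flow-map entry), sojourns inside $\graph_2$, and re-entries at preserved sources; thus the capacity is identical over $\graph$ and over $\graph'_1 \graphPlus \graph_2$. Combined with \rL{lem-projection-impl-inset-equal}, this yields in particular that $\netflow$ at every node of $\graph_2$ is unchanged.

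The remaining steps are bookkeeping. I would define a composite inflow $\inflow'$ on $\graph' \defeq \graph'_1 \graphPlus \graph_2$ by adding to the external inflow the preserved cross-contributions of the two sides, and verify that its projections onto $\graph'_1$ and $\graph_2$ lie in $\inflowClass'_1$ and $\inflowClass_2$ respectively; the latter uses the invariance of $\graph_2$'s flow from the previous paragraph. This shows $\graphIn'_1 \graphInComp \graphIn_2$ is defined, so setting $\graphIn' \defeq \graphIn'_1 \graphInComp \graphIn_2$ and $\interface' \defeq \interfaceFn(\graphIn')$ witnesses $\interface' = \interface'_1 \intComp \interface_2$ directly from the definition of $\intComp$. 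Finally I would check $\interface \intLessEquiv \interface'$. Every source of $\interface$ is a source of the component it lies in, so the capacity lemma applies to each such node; capacity from a $\graph_2$-source to a sink may route through $\graph_1$, but only through the preserved flow map, and capacity from an original $\graph_1$-source is preserved by hypothesis, giving equality of $\flowmap(\graphIn)$ and $\flowmap(\graphIn')$ on all rows out of $\interface$'s sources. The inclusion $\inflowClass \subseteq \inflowClass'$ then follows from $\inflowClass_1 \subseteq \inflowClass'_1$ together with the invariance of the flow on $\graph_2$.

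The main obstacle I expect is the structural lemma of the second paragraph: making the informal claim that ``$\graph_1$ interacts with $\graph_2$ only through its source--sink capacity'' rigorous in the presence of cycles between the two subgraphs, where the flow is a genuine least fixpoint rather than a finite sum of path weights. The care lies in the algebraic regrouping of the capacity fixpoint so that $\graph_1$'s internal capacity appears only in the combination defining $\flowmap(\graphIn_1)$; this is precisely where semiring distributivity and the $\omega$-continuity of the flow domain are indispensable, and it is the step that ultimately forces the flow-domain axioms of \rD{def-flow-domain} rather than a weaker algebraic structure.
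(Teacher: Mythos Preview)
The paper states \rT{thm-replacement} without proof, so there is no reference argument to compare against; I assess your proposal on its own merits.

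Your overall strategy is the right one and would go through, but there is one genuine misunderstanding. You write that you would ``choose the representative so that any nodes of $\graph'_1$ not already in $\graph_1$ are fresh relative to $\graph_2$.'' There are no such nodes: the hypothesis $\interface_1 \intLessEquiv \interface'_1$ includes $\inflowClass_1 \subseteq \inflowClass'_1$, and since an inflow is literally a function on $\dom(\graph_1)$, for it to belong to $\inflowClass'_1$ as well forces $\dom(\graph'_1) = \dom(\graph_1)$. Contextual extension enlarges the \emph{equivalence class} of inflows (as in the paper's cyclic-list remark), not the node set. This actually simplifies your argument: $\graph'_1 \graphPlus \graph_2$ is defined for free, and your care about ``edge label $0$ to the fresh nodes'' is unnecessary.

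Two smaller points worth tightening. First, your structural lemma is phrased as preservation of $\capacity(\graph)(m,n)$, but what you actually need (and what your outflow argument establishes) is preservation of $\netflow(\tilde\inflow,\graph)(n)$ for $n \in \graph_2$ and $\tilde\inflow \in \inflowClass$. The distinction matters because a path may enter $\graph_1$ at a node $q$ that carries zero flow under every $\tilde\inflow \in \inflowClass$ (multiplication in the flow domain need not preserve positivity, e.g.\ set intersection); such $q$ is not a source of $\inflowClass_1$ and its row of the flow map is unconstrained by $\intLessEquiv$. Your ``routes nonzero flow'' phrasing already filters these out, but the lemma statement should too. Second, the circularity you foresee between the two projections is real; it is most cleanly broken not by defining $\inflow'$ ad hoc but by reusing the original $\inflow$ on $\graph' = \graph'_1 \graphPlus \graph_2$, showing via the outflow identity $\sum_{s} \inflow_1(s)\,\capacity(\graph_1)(s,t) = \sum_{s\,\text{source}} \inflow_1(s)\,\flowmap(\graphIn_1)(s,t)$ that its projection onto $\graph_2$ lands back in $\inflowClass_2$, and then invoking $\inflowClass_1 \subseteq \inflowClass'_1$ for the $\graph'_1$ side. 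Your final check of $\inflowClass \subseteq \inflowClass'$ then follows by the same outflow identity together with the hypothesis on $\inflowClass_1$, exactly as you sketch.
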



\section{A Concurrent Separation Logic with Flow Interfaces}
\label{sec-logic}

We next show how to extend a concurrent separation logic with flow
interfaces. We use RGSep for this purpose as it
suffices to prove the correctness of interesting programs, yet is
relatively simple compared to other CSL flavors. However, many
concurrent separation logics, including RGSep, are parametric in the
separation algebra over which the semantics of programs is defined. So
the technical development in this section can be easily transferred to
these other logics.

RGSep is parametric in the program states (states must form a
separation algebra), the language of the assertions (the original
paper used the standard variant of separation logic), and the basic
commands of the programming language (their semantics need to satisfy
a \emph{locality} property to obtain the frame rule of separation
logic). For the most part, we piggyback on the development
in~\cite{viktor-thesis}. We denote the resulting logic by
$\logic$. Once we have shown that $\logic$ is a valid instantiation of
RGSep, the soundness of all RGSep proof rules immediately carries over
to $\logic$. The only extra work that we will need to do before we can
apply the logic is to derive a few generic lemmas for
proving entailments between assertions that involve flow interfaces.

\subsection{Reasoning about the Actual Heap}
\label{sec-model-dirty-regions-etc}

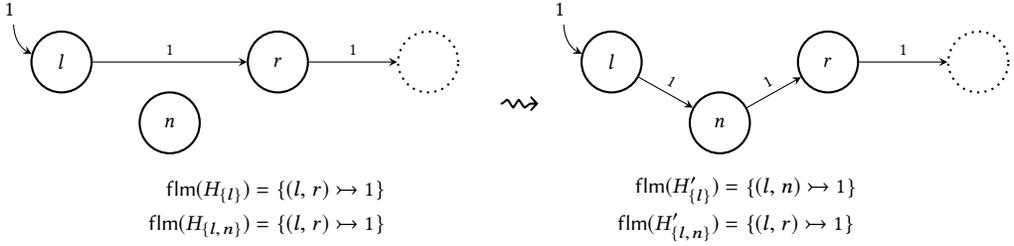
\begin{figure}[t]
  \centering
  \begin{tikzpicture}[>=stealth, scale=0.8, every node/.style={scale=0.8}, font=\footnotesize]
    \tikzstyle{gnode}=[circle, draw=black, thick, minimum size=1cm, font=\normalsize]

    \def\xsep{2}
    \def\ysep{-1}
    \def\xshift{6.5}

    \node[gnode] (l) {$l$};
    \node[gnode] (r) at ($(l) + (1.8*\xsep, 0)$) {$r$};
    \node[gnode] (n) at ($(l) + (.9*\xsep, \ysep)$) {$n$};
    \node[inflow] (l-in) at ($(l.north west) + (-.5, .5)$) {$1$};
    \node[pnode] (x) at ($(r) + (2.5, 0)$) {};

    \draw[edge, bend right] (l-in) to (l);
    \draw[edge] (l) to node[above] {$1$} (r);
    \draw[edge] (r) to node[above] {$1$} (x);

    \node[gnode, right=\xshift of l] (l') {$l$};
    \node[gnode] (r') at ($(l') + (1.8*\xsep, 0)$) {$r$};
    \node[gnode] (n') at ($(l') + (.9*\xsep, \ysep)$) {$n$};
    \node[inflow] (l'-in) at ($(l'.north west) + (-.5, .5)$) {$1$};
    \node[pnode] (x') at ($(r') + (2.5, 0)$) {};

    \draw[edge, bend right] (l'-in) to (l');
    \draw[edge] (l') to node[sloped, above] {$1$} (n');
    \draw[edge] (n') to node[sloped, above] {$1$} (r');
    \draw[edge] (r') to node[above] {$1$} (x');

    \node[phantomNode, font=\huge] at ($($(x)!0.5!(l')$) + (0, 0.7*\ysep)$) {$\rightsquigarrow$};
  \end{tikzpicture}
  \begin{minipage}{.40\linewidth}
    \footnotesize
    \begin{align*}
      \flowmap(\graphIn_{\set{l}}) &= \set{(l, r) \goesto 1} \\
      \flowmap(\graphIn_{\set{l, n}}) &= \set{(l, r) \goesto 1}
    \end{align*}
  \end{minipage}%
  \begin{minipage}{.05\linewidth}
    $\quad$
  \end{minipage}%
  \begin{minipage}{.40\linewidth}
    \footnotesize
    \begin{align*}
      \flowmap(\graphIn'_{\set{l}}) &= \set{(l, n) \goesto 1} \\
      \flowmap(\graphIn'_{\set{l, n}}) &= \set{(l, r) \goesto 1}
    \end{align*}
  \end{minipage}
  \caption{Inserting a new node $n$ into a list $\graphIn$ between existing
    nodes $l$ and $r$ obtaining a new list $\graphIn'$. Edges are labeled
    with edge labels for path counting, and the nodes are labeled with their names. The flow maps of certain subgraphs before and after the modification are shown below.}
  \label{fig-list-insert}
\end{figure}

We first discuss the insert procedure on a singly-linked list as an example to motivate the design of our logic.
The insert procedure first traverses the list to find two adjacent nodes $l$ and $r$, with some appropriate properties, between which to insert the given node $n$.
It then sets $n$'s \code{next} field to point to $r$, and finally swings $l$'s \code{next} pointer to $n$.
A pictorial representation of the crucial step in this procedure is shown in the top portion of \rF{fig-list-insert}.

Using flow graphs directly as our program state, as presented thus far, poses three key challenges.
Firstly, a key assumption SL makes on the programming language is that commands are \emph{local}~\cite{SL2}: informally, that if a command runs successfully on a state $\sigma_1$ and we run it on a composed state $\sigma_1 \cdot \sigma_2$ then it leaves $\sigma_2$ unchanged.
However, even a basic operation on a flow graph, for instance changing the edge $(l, r)$ to an edge $(l, n)$, can potentially change the flows of all downstream nodes and is hence not local.
We have seen that if we can find a region around the modification where the flow map is preserved, then the change will not affect any nodes outside the region.
But which region do we choose as the footprint of a given command?
Secondly, the programs that we wish to verify may temporarily violate the good condition during an update or a maintenance operation.
For example, at the state on the left of \rF{fig-list-insert} the new node $n$ has no paths from the root, which violates the good condition describing a list.
Thus we also need a way to describe these intermediate states.
Finally and most immediately, the programs that we wish to verify operate on a program state that are more akin to standard heap representations than to flow graphs.
We thus need a way to abstract the heap to a flow graph.

Our solution to all three problems is based on using the product of heaps and flow graphs as program states.
The heap component represents the concrete state, and the standard commands such as heap allocation or mutation affect only the heap.
Thus, these standard commands inherit locality from the standard semantics.
The graph component is a ghost state and serves as an abstraction of the heap.
Since standard commands only modify the heap, this abstraction may not be up-to-date with the heap.
We add a new ghost command to \emph{sync} a region of the graph with the current state of the heap, i.e. change the graph component to one that abstracts the heap component (using some abstraction as described below).
We say the resulting region of the state is \emph{in sync}.
This $\syncCommand$ command takes as an argument the new interface of
the heap region to sync,
and is only successful if the new interface contextually extends the old
interface of that region. 
For instance, in \rF{fig-list-insert}, we cannot sync the region $\set{l}$ as $l$'s outgoing edges have changed, we would have to sync the region $\set{l, n}$ (note that $n$ is not a source node in this subgraph).
Recall that \rT{thm-replacement} guarantees that this new region will still compose with the old context, making $\syncCommand$ a local command and solving our first problem.

The fact that the graph portion of the state is not up-to-date with the heap also gives us a way to solve the problem of temporary violations of the good condition.
When executing a series of commands, we delay syncing the graph portion of the state until the point when the good condition has been re-established.
Thus, the graph component can be thought of as an abstraction of the heap at the last time it was in a good state.
In the concurrent setting, one may wonder if we might expose such \emph{dirty} regions (where the graph is not in sync with the heap) to other threads; however, if a thread is breaking some data structure invariants in a region, then it must be locked or marked off from interference from other threads in some manner.
We add a \emph{dirty predicate} $\bracket{\phi}_\interface$ to our logic to describe a region where the heap component satisfies $\phi$ but the flow graph component is potentially out of sync and satisfies the interface $\interface$.
So to verify the list insert procedure, we would use a dirty predicate to describe $l$ and $n$ until we have modified both edges and are ready to use $\syncCommand$.

Finally, to solve the problem of relating the concrete heap to the abstract graph, we make use of the good condition.
We define $\goodCondition$ by means of an SL predicate $\gamma$ that not only describes the good condition on the flow in the graph component, but also describes the relation between a good node in the graph and its representation on the heap.
For instance, here is the $\gamma$ that we use to describe a singly-linked list:
\begin{align*}
 \gamma(n, \inflow, \_, \flow) &\defeq \exists n'.\; n \mapsto n' \land \inflow(n) = 1 \land \flow = \ite(n' = \mathit{null}, \emptyFn, \set{(n, n') \goesto 1}).
\end{align*}
Note that this is essentially $\goodCondition_{\m{ls}}$ from
\rSc{sec-expressivity} with the additional constraint that the edge
label in the graph depends on $n'$, the next pointer of $n$ on the heap.
We will show how to define a good graph predicate $\graphPred(\interface)$ using such a $\gamma$ to ensure that the heap and the graph are in sync.

\subsection{Program States}
\label{sec-logic-states}

We model the heap $h$ as usual in standard separation logic. That is,
the heap is a partial map from addresses $\addrs$ to values
$\values$. Values include
addresses, as well as the flow and node label domains and some auxiliary values for representing ghost state related to flow interfaces:
\[\values \supseteq  \addrs \cup \intDom \cup \nodelabelDom \cup
  (\addrs \pto \intDom) \cup \inflowDom \cup (\addrs
    \times \addrs \pto \intDom)  \cup
    \interfaces\]

A state augments the heap with an auxiliary ghost state in the form of a
flow graph $\graphIn$ that maintains an abstract view of the
heap. Each graph node $n \in \dom(\graphIn)$ is an abstraction of
potentially multiple heap nodes, one of which is chosen as the
representative $n$. That is, we require
$\dom(\graphIn) \subseteq \dom(h)$. Also, we have a node map
$\nodeMap$ that labels each heap node with the graph node that
abstracts it. This map is partial, i.e., we allow heap nodes that are
not described by the flow graph.

\begin{definition}
  \label{def-state}
  A \emph{state}
  $\sigma = (h, \graphIn, \nodeMap) \in \states$
  consists of a heap $h \colon \addrs \pto \values$, a flow interface
  graph $\graphIn \in \graphInDom$ such that $\dom(\graphIn) \subseteq \dom(h)$, and a node map
  $\nodeMap \colon \dom(h) \pto \dom(\graphIn)$ such that $\nodeMap|_{\dom(r) \cap \dom(\graphIn)} = (\lambda x. x)$.
\end{definition}

The composition operator on states is a partial function $\sigma_1 \cdot
\sigma_2$, defined as
\begin{align*}
  (h_1, \graphIn_1, \nodeMap_1) \cdot (h_2, \graphIn_2,
  \nodeMap_2) &\defeq \begin{cases}
    (h_1 \cdot h_2, \graphIn_1 \graphInComp
    \graphIn_2, \nodeMap_1 \uplus \nodeMap_2)
    & \text{if } h_1 \cdot h_2 \text{ and } \graphIn_1 \graphInComp
    \graphIn_2 \text{ defined } \\
    \text{undefined} & \text{otherwise}.
  \end{cases}
\end{align*}
Here, $h_1 \cdot h_2$ denotes the standard disjoint union of
heaps. The empty state $\stateEmpty$ is defined as
$\stateEmpty \defeq (\heapEmpty, \graphInEmpty,
\nodeMapEmpty)$ where $\heapEmpty$ and $\nodeMapEmpty$ are
the empty heap and empty node map, respectively.
The fact that states form a separation algebra easily follows
from~Theorem~\ref{thm-flow-graphs-separation-algebra}.

\begin{theorem}
  The structure $(\states, \cdot, \stateEmpty)$ is a separation algebra.
\end{theorem}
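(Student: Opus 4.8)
The plan is to exhibit $(\states, \cdot, \stateEmpty)$ as a sub-separation-algebra of a componentwise product of three coordinate separation algebras, so that all four axioms lift almost for free from Theorem~\ref{thm-flow-graphs-separation-algebra} together with standard facts. First I would observe that each coordinate carries its own separation algebra: heaps $(\addrs \pto \values)$ under disjoint union with unit $\heapEmpty$ form the standard heap separation algebra; node maps, being partial functions $\addrs \pto \addrs$ between address sets, form the usual partial-function separation algebra under $\uplus$ with unit $\nodeMapEmpty$; and flow graphs $(\graphInDom, \graphInComp, \graphInEmpty)$ are a separation algebra by Theorem~\ref{thm-flow-graphs-separation-algebra}. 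Their componentwise product is then a separation algebra on $(\addrs \pto \values) \times \graphInDom \times (\addrs \pto \addrs)$, whose composition is defined exactly when all three coordinates compose and which inherits identity, commutativity, associativity, and cancellativity coordinatewise.

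Next I would show that the state composition $\cdot$ of Definition~\ref{def-state} coincides with the restriction of this product composition to the set $\states$ of well-formed triples. The only apparent mismatch is that Definition~\ref{def-state} requires merely $h_1 \cdot h_2$ and $\graphIn_1 \graphInComp \graphIn_2$ to be defined, without mentioning the node maps. But whenever $h_1 \cdot h_2$ is defined we have $\dom(h_1) \cap \dom(h_2) = \emptyset$, and since $\dom(\nodeMap_i) \subseteq \dom(h_i)$, the two node maps automatically have disjoint domains, so $\nodeMap_1 \uplus \nodeMap_2$ is defined. Hence the definedness conditions of the two operators agree and they return the same triple.

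The main content is then to check that $\states$ is closed under this composition and contains the unit $\stateEmpty$, making it a genuine sub-separation-algebra. Closure reduces to preserving the two well-formedness invariants of Definition~\ref{def-state}. The inclusion $\dom(\graphIn_1 \graphInComp \graphIn_2) = \dom(\graphIn_1) \cup \dom(\graphIn_2) \subseteq \dom(h_1) \cup \dom(h_2) = \dom(h_1 \cdot h_2)$ is immediate from the coordinatewise inclusions. The node-map invariant $\restrict{\nodeMap}{\dom(\nodeMap) \cap \dom(\graphIn)} = (\lambda x. x)$ is the one step needing a short argument: for $n$ in the domain of, say, $\nodeMap_1$ that is also a graph node, the graph node must lie in $\dom(\graphIn_1)$ (it cannot lie in $\dom(\graphIn_2)$, since that would force $n \in \dom(h_1) \cap \dom(h_2) = \emptyset$), whence $(\nodeMap_1 \uplus \nodeMap_2)(n) = \nodeMap_1(n) = n$ by well-formedness of $\sigma_1$. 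That $\stateEmpty$ is well-formed is trivial, as all its domains are empty.

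Finally I would transfer the four axioms. Identity, commutativity, and cancellativity hold for $\cdot$ because they hold for the product and all relevant triples stay in $\states$; associativity lifts likewise, using closure to guarantee that the intermediate composites $\sigma_2 \cdot \sigma_3$ and $\sigma_1 \cdot \sigma_2$ remain well-formed so the product's associativity applies at each bracketing. I do not anticipate any genuine difficulty here, consistent with the remark that the result ``easily follows'': the only place demanding care is preservation of the node-map identity invariant under $\uplus$, and even that collapses to the disjointness of the underlying heap domains.
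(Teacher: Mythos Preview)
Your proposal is correct and aligns with the paper's intent: the paper gives no detailed proof, merely stating that the result ``easily follows from Theorem~\ref{thm-flow-graphs-separation-algebra},'' and your argument spells out exactly how, using the product of the three coordinate separation algebras and checking closure of the well-formedness invariants. One very minor omission is that you do not explicitly verify the codomain condition on the composite node map, namely that $\range(\nodeMap_1 \uplus \nodeMap_2) \subseteq \dom(\graphIn_1 \graphInComp \graphIn_2)$, but this is immediate from $\range(\nodeMap_i) \subseteq \dom(\graphIn_i)$.
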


For simplicity we do not use the variable as resource
model~\cite{Bornat:2006:VRS:1706629.1706801} for the treatment of
stack variables. Instead, we follow~\cite{viktor-thesis} and assume
that all stack variables are local to a thread, and that all global
variables are read-only. Stack variables are allocated on the local
heap and global variables on the shared heap. We assume that each
program variable $\code{x}$ is located at a fixed address
$\code{\&x}$.  In all assertions occurring in program specifications
throughout the rest of the paper we have an implicit
$\code{\&x} \mapsto x$ for every program variable $\code{x}$ in scope.

\subsection{Assertions}

The syntax of assertions is summarized in
Fig.~\ref{fig-assertion-syntax}. We assume an infinite set of
(implicitly sorted) logical variables $\vars$. The meta variable $v$ stands
for a variable of an arbitrary sort and the meta variable $x$ for a
variable of a sort denoting addresses. The sorts of other variables
are as expected. Assertions are built from
sorted terms -- one sort per semantic domain of interest. The terms
$T_\intVar$ denote flow domain values, 
$T_\nodelabel$ node label values, etc. For inflow terms $T_\inflow$,
the constant $\epsilon$ denotes the empty inflow $\inflowEmpty$,
$T_\inflow \intPlus T_\inflow$ denotes the lifting of $\intPlus$ to partial functions\footnote{$(f \intPlus g)(x) \defeq \ite(x \in \dom(f), \ite(x \in \dom(g), f(x) + g(x), f(x)), \ite(x \in \dom(g), g(x), \text{undefined}))$}, and the constant $\zerofun$ denotes the total
inflow that maps all addresses to $0$. Terms for flow maps are
constructed similarly.

\begin{figure}
  \centering
  \begin{align*}
     T & \Coloneqq x \mid (x_1, x_2) \mid T_{\intVar} \mid T_{\nodelabel} \mid T_{\inflow} \mid T_{\inflowClass} 
         \mid T_{\flow} \mid T_{\interface}
         \quad T_{\nodelabel} \Coloneqq \nodelabel \mid \nodelabelDom \mid T_\nodelabel \nodelabelJoin T_\nodelabel
    \quad v, x, \intVar, \inflow, \inflowClass, \flow, \nodelabel, \interface \in \vars \\
    T_{\intVar} & \Coloneqq \intVar \mid \intDom \mid T_{\inflow}(x) \mid T_{\flow}(x_1, x_2) \mid T_{\intVar} \intPlus
                  T_{\intVar} \mid T_{\intVar} \intMult T_{\intVar} 
                  \qquad
                  T_{\inflow} \Coloneqq \inflow \mid \mapEmpty \mid \set{x \goesto T_d, \dotsc}
                \mid T_{\inflow} \intPlus T_{\inflow} \mid \zerofun \\
    T_{\inflowClass} & \Coloneqq \inflowClass \mid \set{T_\inflow}    \quad                        
    T_{\flow} \Coloneqq \flow \mid \mapEmpty \mid \set{(x_1, x_2) \goesto
                       T_d, \dotsc} 
                \mid T_{\flow} \intPlus T_{\flow} \mid \zerofun \quad
                 T_{\interface} \Coloneqq \interface \mid (T_\inflowClass, T_\nodelabel, T_\flow)
                  \mid T_\interface \intComp T_\interface \\
   P & \Coloneqq T = T \mid T_\intVar \intLeq T_\intVar \mid T_\nodelabel
       \nodelabelLeq T_\nodelabel \mid x \in T_\interface \mid (x_1, x_2)
       \in T_f \mid T_\inflow \in T_\inflowClass \mid 
        T_\interface \intLessEquiv T_\interface \\
    \phi & \Coloneqq P \mid \emp \mid \true \mid x \mapsto T \mid \graphPred(\interface) \mid
           [\phi]_{\interface} \mid \phi * \phi \mid \phi \magicwand
           \phi \mid \phi \land \phi \mid \exists v.\, \phi \mid \neg \phi
  \end{align*}
  \caption{Syntax of $\logic$ terms and assertions.}
  \label{fig-assertion-syntax}
\end{figure}
\techreport{
\begin{figure}
  \centering
  \begin{align*}
    \begin{array}{rl c l l}
      & \phi[\_] 
      & \defeq &
      \exists v.\, \phi[v] & \text{ $v$ fresh in $\phi$}\\[.4em]
      & \phi[\inflowClassOfInt{\interface}]
      &\defeq &
      \exists \inflowClass.\, \interface = (\inflowClass, \_, \_) \land \phi[\inflowClass] \\
      & & & \multicolumn{2}{l}{\text{where $\inflowClass$ fresh in $\phi$, similarly for
          $\phi[\nodelabelOfInt{\interface}]$ and
          $\phi[\flowmapOfInt{\interface}]$}}\\[.4em]
      & \phi[\inflowClassOfInt{\interface}(x)] & \defeq
      & \exists \inflow .\;  \inflow \in \inflowClassOfInt{\interface} \land \phi[\inflow(x)] \\[.4em]
      & \phi_1 \sepincl \phi_2
      & \defeq &
      (\phi_1 * \mathit{true}) \land \phi_2 \\[.4em]
      & \nodePred(x, \interface) &
      \defeq &
      \graphPred(\interface) \land \set{x \goesto \_} \in \inflowClassOfInt{\interface} \\[.4em]
      & \interface \intEquiv \interface'
      & \defeq &
      \interface \intLessEquiv \interface' \land \interface' \intLessEquiv \interface
    \end{array}
  \end{align*}
  \caption{Some syntactic shorthands for $\logic$ assertions.}
  \label{fig-syntactic-shorthands}
\end{figure}
}

The atomic pure predicates $P$ include equalities, domain membership
tests for interfaces, flow maps, and inflow equivalence classes, as well as contextual extension of flow interfaces. The
actual separation logic assertions $\phi$ include the standard SL assertions
as well as the extended assertions $\graphPred(\interface)$ and
$[\phi]_\interface$ related to flow interfaces.

The good graph predicate $\graphPred(\interface)$ describes a heap
region that is abstracted by a good flow graph satisfying
the flow interface $\interface$. The semantics of the logic is
parametric in what constitutes a good graph. For simplicity, we 
consider only one type of good graph predicate, i.e., all instances of good
graph predicates imply the same data structure
invariant. \techreport{The logic can be easily extended to support
  more than one such predicate or, if we were to extend higher-order
  separation logic~\cite{Jung:2015:IMI:2676726.2676980,
    DBLP:conf/esop/Krebbers0BJDB17, Dodds:2016:VCS:2866613.2818638},
  the parametrization could be pushed into the assertion language
  itself.}

We also need to permit certain regions of the heap that are under
modification to be not good.
These regions are described using the dirty predicate
$[\phi]_\interface$.
This describes a state where the flow graph satisfies $\interface$ but the heap satisfies $\phi$.

We use some syntactic shorthands for $\logic$ assertions.
For instance, $\nodePred(x, \interface)$ denotes a singleton good graph $\graphPred(\interface)$ containing $x$, and the separating inclusion operator $\phi_1 \sepincl \phi_2$ defined as $(\phi_1 * \mathit{true}) \land \phi_2$.
For an interface $\interface = (\inflowClass, \nodelabel, \flow)$ we write $\inflowClassOfInt{\interface}, \nodelabelOfInt{\nodelabel}$ and $\flowmapOfInt{\interface}$ for $\inflowClass, \nodelabel$ and $\flow$.
\moreless{
These, and some additional syntactic shorthands for $\logic$ assertions, are defined in \rF{fig-syntactic-shorthands}.
}{
The rest will be defined as and when they are used.
}

\begin{figure}[t]
  \centering
  \begin{align*}
    \begin{array}{rl c l}
      %
      (h, \graphIn, \nodeMap), \interp & \models T \in T_{\interface,\flow} & \iff & \denotation{T}_{\interp} \in \dom(\denotation{T_{\interface,\flow}}_{\interp}) \\[.4em]
      (h, \graphIn, \nodeMap), \interp &\models  T_{\inflow} \in T_{\inflowClass}  & \iff & \denotation{T_{\inflow}}_\interp \in \denotation{T_{\inflowClass}}_\interp\\[.4em]
      (h, \graphIn, \nodeMap), \interp &\models  T_{\interface} \intLessEquiv T'_{\interface} & \iff & \denotation{T_{\interface}}_\interp \intLessEquiv \denotation{T'_{\interface}}_\interp \\[.4em]
      (h, \graphIn, \nodeMap), \interp & \models \emp & \iff & h = \heapEmpty \land \graphIn = \graphInEmpty \\[.4em]
      (h, \graphIn, \nodeMap), \interp & \models x \mapsto T & \iff &
                                                                     \dom(h) = \{\denotation{x}_\interp\}
                                                                     \land h(\denotation{x}_\interp) = \denotation{T}_{\interp}  \land \graphIn = \graphInEmpty \\[.4em]
      (h, \graphIn, \nodeMap), \interp & \models \graphPred(\interface)
                                                          & \iff &
                                                                   \dom(h) = \dom(r) \land \graphIn \in \denotation{\denotation{\interface}_\interp}_{\goodCondition_{h, r}} \\[.4em]
      (h, \graphIn, \nodeMap), \interp & \models [\phi]_{\interface} &
                                                                      \iff
                                                                 &
                                                                 \dom(h) = \dom(r) \land \graphIn \in \denotation{\denotation{\interface}_\interp} \land \exists \graphIn', \nodeMap'.\; (h, \graphIn', \nodeMap'), \interp \models \phi \\[.4em]
      \sigma, \interp & \models \phi_1 * \phi_2 & \iff & \exists \sigma_1, \sigma_2.\; (\sigma = \sigma_1 \cdot \sigma_2) \land (\sigma_1, \interp \models \phi_1) \land (\sigma_2, \interp \models \phi_2) \\
      %
      & & \dots
    \end{array}
  \end{align*}
  \caption{Semantics of assertions.}
  \label{fig-assertion-semantics}
\end{figure}

Terms are interpreted with respect to an interpretation
$\interp\colon \vars \to \values$ that maps the logical variables to
values. For a variable $v \in \vars$, we write $\denotation{v}_\interp$ to mean
$\interp(v)$ and also extend this function to terms $T$. We omit
the definition of this partial function as it is straightforward.
Note that some terms, for instance interface composition $T_\interface
\intComp T_\interface$, may be undefined; we define the semantics of
any atom containing an undefined term to be false.
The semantics of assertions $\phi$ is given by the satisfaction relation
$\sigma, \interp \models \phi$ and we denote the induced entailment
relation by $\phi_1 \models \phi_2$. The satisfaction relation is defined in
Fig.~\ref{fig-assertion-semantics}. Again, we omit some of the obvious
cases. Most of the cases are straightforward. In particular, the
standard SL assertions have their standard semantics. We only discuss
the cases for $\graphPred(\interface)$ and $[\phi]_\interface$.

The predicate $\graphPred(\interface)$ describes a region whose heap is abstracted by a good flow graph.
We tie the concrete representation in the heap $h$ to the
flow graph $\graphIn$ of the state using the good node
condition, which we assume is defined by an SL predicate
$\gamma(x, \inflow, \nodelabel, \flow)$. This predicate specifies the
heap representation of a node $x$ in the graph as well as any
invariants that $x$ must satisfy on $\netflow(\graphIn)(x)$.
We restrict $\gamma$ to standard SL assertions, i.e., $\gamma$ is not
allowed to include occurrences of the predicates
$\graphPred(\interface)$ and $[\phi]_\interface$.

The predicate $\gamma$ implicitly defines a good node condition
$\goodCondition_{h, r}$ on flow interfaces:
\[ \goodCondition_{h, r}(n, \inflow, \nodelabel, \flow)
  \defiff (h|_{r^{-1}(n)}, \graphInEmpty, \nodeMapEmpty), \emptyFn \models \gamma(n, \inflow, \nodelabel, \flow)
\]
The semantics of $\graphPred(\interface)$ then uses this condition to
tie the flow graph $\graphIn$ to the heap $h$ and ensures
that each node in the graph satisfies its local invariant on the
flow. The semantics of $[\phi]_\interface$ is slightly more
complicated. It states that the current
flow graph $\graphIn$ satisfies $\interface$ and there is some flow interface
graph $\graphIn'$ that satisfies $\phi$ in the
current heap.

A precise assertion is one that if it
holds for any substate, then it holds for a unique substate.
To use our new predicates in actions to describe interference, we require them to be precise.


\begin{lemma}
  The graph predicates $\graphPred(\interface)$ and $[\phi]_\interface$ are precise.
\end{lemma}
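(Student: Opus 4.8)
The plan is to prove preciseness of each predicate by showing that, inside a fixed ambient state $\sigma = (h, \graphIn, \nodeMap)$, any substate satisfying the predicate is forced to agree on all three components---heap, flow graph, and node map. Recall that an assertion $\psi$ is precise iff whenever $\sigma_1, \sigma_2$ are substates of the same $\sigma$ with $\sigma_1, \interp \models \psi$ and $\sigma_2, \interp \models \psi$, then $\sigma_1 = \sigma_2$. So I would fix such substates $\sigma_1 = (h_1, \graphIn_1, \nodeMap_1)$ and $\sigma_2 = (h_2, \graphIn_2, \nodeMap_2)$. Since each $\sigma_i$ is a substate of $\sigma$, the composition operator forces its heap and node map to be mere restrictions, $h_i = h|_{\dom(h_i)}$ and $\nodeMap_i = \nodeMap|_{\dom(\nodeMap_i)}$, and $\graphIn_i$ to be a subgraph of $\graphIn$. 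Hence it suffices to pin down the domains $\dom(\graphIn_i)$ and $\dom(h_i) = \dom(\nodeMap_i)$ together with the internal structure of $\graphIn_i$.

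The crux, common to both predicates, is uniqueness of the flow-graph component. Writing $\denotation{\interface}_\interp = (\inflowClass, \nodelabel, \flow)$, the clause $\graphIn_i \in \denotation{\denotation{\interface}_\interp}$ forces $\interfaceFn(\graphIn_i) = \denotation{\interface}_\interp$. By the definitions of $\interfaceFn$ and $\flowmap$ this simultaneously fixes the node set $\dom(\graphIn_i) = \dom(\inflowClass)$, the sink set as $\dom_2(\flow)$, and the inflow equivalence class as $\inflowClass$. Thus $\graphIn_1$ and $\graphIn_2$ are subgraphs of the common $\graphIn$ sharing the same node set $D \defeq \dom(\inflowClass)$ and the same sinks. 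I would then argue that a subgraph of $\graphIn$ on a fixed node and sink set is unique: from the definition of $\graphPlus$, its node labeling and edge function are the corresponding restrictions of those of $\graphIn$, and its inflow class is already fixed to $\inflowClass$. Hence $\graphIn_1 = \graphIn_2$.

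It remains to fix the heap and node map. For $\graphPred(\interface)$ I would use the good-graph semantics: with the graph component now fixed, the predicate asserts that for every node $n \in D$ the subheap $h|_{\nodeMap_i^{-1}(n)}$ satisfies $\gamma(n, \inflow_n, \nodelabelling(n), \edgelabel_n)$, where all arguments are determined by the already-fixed $\graphIn$. Since $\gamma$ is a precise SL node predicate, the ambient heap $h$ contains at most one subheap satisfying each $\gamma(n, \dots)$, so $\nodeMap_1^{-1}(n) = \nodeMap_2^{-1}(n)$ for every $n$; taking the disjoint union over $D$ gives $\dom(\nodeMap_1) = \dom(\nodeMap_2)$, whence $h_1 = h_2$ and $\nodeMap_1 = \nodeMap_2$. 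For $[\phi]_\interface$ the graph argument is identical, and the heap is pinned down instead by preciseness of $\phi$: the clause $\exists \graphIn', \nodeMap'.\,(h_i, \graphIn', \nodeMap'), \interp \models \phi$ determines $\dom(h_i)$, and $\dom(h_i) = \dom(\nodeMap_i)$ then fixes the node map as a restriction of $\nodeMap$.

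The step I expect to be the main obstacle is the flow-graph uniqueness---specifically, verifying that the flow interface really does pin down both the node set and the sink set (the latter via the domain of the flow map, with care needed for the degenerate case in which there are no source nodes and the flow map is empty, where the sinks must instead be recovered from the crossing edges of $\graphIn$), and that it is the inflow \emph{equivalence class}, rather than a concrete inflow, that is recovered, so that the whole flow-graph component is determined as the unique subgraph of $\graphIn$. The heap and node-map arguments are then routine standard-SL preciseness reasoning, relying on $\gamma$ (respectively $\phi$) being precise.
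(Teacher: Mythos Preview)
The paper states this lemma without proof, so there is nothing to compare against directly; I evaluate your argument on its own merits.

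Your overall strategy is right, but the heap/node-map part is both more fragile and simpler than you make it. You rely on preciseness of $\gamma$ for $\graphPred(\interface)$ and preciseness of $\phi$ for $[\phi]_\interface$. Neither assumption is in the lemma, and for $[\phi]_\interface$ this would make the claim false as stated (take $\phi=\true$). The correct argument does not need either: once the graph node set $D=\dom(\inflowClass)$ is fixed, the semantics gives $\dom(h_i)=\dom(\nodeMap_i)$, and state validity of both $\sigma_i$ and its complement forces $\nodeMap_i$ to be exactly the restriction of the ambient $\nodeMap$ to $\nodeMap^{-1}(D)$. Concretely, $\nodeMap=\nodeMap_i\uplus\nodeMap_i^c$ with $\range(\nodeMap_i)\subseteq D$ and $\range(\nodeMap_i^c)\subseteq\dom(\graphIn)\setminus D$, so $\dom(\nodeMap_i)=\{x\in\dom(\nodeMap):\nodeMap(x)\in D\}$ is determined by $D$ alone. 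Hence $h_i$ and $\nodeMap_i$ are pinned down without ever looking at $\gamma$ or $\phi$.

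Your flow-graph argument has a genuine gap exactly where you flag it. In the source-free case the flow map is empty and the interface does not determine the sink set; your proposed fix (recover sinks from the crossing edges of the ambient $\graphIn$) does not work, because the definition of $\uplus$ allows a subgraph to carry extra sink nodes with all-zero incoming edges. One can construct two distinct sub-flow-graphs of the same $\graphIn$ on the same node set $D$, differing only by such a phantom sink, both composing with the same complement and both satisfying the same interface $(\{\lambda n.\,0\},\nodelabel,\epsilon)$. This is not merely a corner case: it is precisely the shape of the interface produced by $\markCommand$. So either the paper is implicitly identifying graphs that differ only in zero-edge sinks, or the lemma as stated needs a side condition; your proof cannot close this hole without such an additional convention.
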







\subsection{Programming Language and Semantics}

The programming language is very similar to the one used in RGSep and
its semantics is also mostly identical to the one given in Section 3.2 of
\cite{viktor-thesis}:
\begin{equation*}
  \begin{array}{r l}
    C \in \cmds \Coloneqq & \skipCommand \mid c \mid C_1; C_2 \mid C_1 + C_2 \mid C^* \mid \tuple{C} \mid C \| C \\
    c \Coloneqq & \code{x :| } \phi(\code{x}) \mid \syncCommand(\code{I})
            \mid \markCommand(\code{x}, \code{y}) \mid
            \unmarkCommand(\code{x}) \mid \dots
  \end{array}
\end{equation*}
The standard commands include the empty command ($\skipCommand$), basic
commands, sequential composition, non-deterministic choice,
looping, atomic commands, and parallel composition. Apart from the standard basic
commands, we add several special ghost commands to the language, which
we describe next. Their axiomatic specifications are given in
Fig.~\ref{fig-ghost-command-specs}.

\begin{figure}[t]
  \centering
  \footnotesize
    \begin{align*}
      & \annot{\code{\&x} \mapsto \_ * \exists y.\; \phi(y)}\, \code{x :| }\phi(\code{x})\,
        \annot{\code{\&x} \mapsto x * \phi(x)}
        &&
        \annot{
        \interface \intLessEquiv \interface' \land
        {[\graphPred(\interface')]}_\interface}\,
        \code{\syncCommand(I')}\,
        \annot{\graphPred(\interface')} \\[.2em]
      & \annot{x \mapsto v *
        [\phi]_\interface \land y \in \interface}\,
        \code{\markCommand(x,y)}\,
        \annot{{[x \mapsto v * \phi]}_\interface}
        &&
        \annot{x = y \land x \mapsto v} \,
        \code{\markCommand(x,y)}\,
        \annot{
        [x \mapsto v]_{(\set{\set{x \goesto 0}},\nodelabelBot, \epsilon)}} \\[.2em]
      & \annot{{[x \mapsto v * \phi]}_\interface \land x \notin
        \inflowClassOfInt{\interface}}\,
        \code{\unmarkCommand(x)}\,
        \annot{x \mapsto v * [\phi]_\interface}
        &&
        \annot{{[x \mapsto v]}_\interface \land
           \inflowClassOfInt{\interface} =
        \set{\set{x \goesto 0}}}\,
        \code{\unmarkCommand(x)}\,
        \annot{x \mapsto v}
  \end{align*}
 \caption{Specifications of ghost commands}
  \label{fig-ghost-command-specs}
\end{figure}

\techreport{
\begin{figure}
  \begin{align*}
  (\code{x :| } \phi(\code{x}), (h, \graphIn, \nodeMap)) &\reduces{}{}
  \begin{cases}
    (\skipCommand, (h[l \to v], \graphIn, \nodeMap)) & \text{if }
    h(\&\code{x}) = l \land \phi(x) \text{ is precise} \land {} \\
    & \quad \exists i.\,  (h, \graphIn, \nodeMap), i[x \to v] \models \phi(x) * \true \\
    \abort & \text{otherwise.}
  \end{cases}\\
  (\syncCommand(\code{I'}), (h_1, \graphIn_1, \nodeMap_1) \cdot \sigma_2) &\reduces{}{}
  \begin{cases}
    (\skipCommand, (h_1, \graphIn'_1, \nodeMap'_1) \cdot \sigma_2) &
    \text{if }
    \exists \interface, \interface', i.\; \interface \intLessEquiv \interface' \land \sigma_2 \models \&\code{I'} \mapsto \interface' * \true \land {} \\
    & \quad (h_1, \graphIn_1, \nodeMap_1) \models \bracket{\true}_{\interface} \land (h_1, \graphIn'_1, \nodeMap'_1) \models \graphPred(\interface') \\
    \abort & \text{otherwise.}
  \end{cases}\\
    (\markCommand(\code{x}, \code{y}), (h, \graphIn, \nodeMap)) & \reduces{}{}
  \begin{cases}
    (\skipCommand, (h, \graphIn, \nodeMap[l \goesto n]))
    & \text{if }
    l = h(\& \code{x}) \in \dom(h)
    \setminus \dom(\graphIn) \land {} \\
    & \quad n = h(\&\code{y}) \in \dom(\graphIn) \\
    (\skipCommand, (h, \graphIn', \nodeMap[n \goesto n]))
    & \text{if } n = h(\& \code{x}) = h(\&\code{y}) \in \dom(h)
    \setminus \dom(\graphIn) \land {} \\
    & \quad \exists \graphIn_n \in \denotation{(\set{\set{n \goesto 0}}, \nodelabelBot, \emptyFn)}.\; \graphIn' = \graphIn \graphInComp \graphIn_n \\
    \abort & \text{otherwise.}
  \end{cases}\\
  (\unmarkCommand(\code{x}), (h, \graphIn, \nodeMap)) & \reduces{}{}
  \begin{cases}
    (\skipCommand, (h, \graphIn, \nodeMap|_{\dom(\nodeMap) \setminus \set{l}}))
    & \text{if } l = h(\& \code{x}) \land l \in \dom(r) \setminus \dom(\graphIn)\\
    (\skipCommand, (h, \graphIn', \nodeMap|_{\dom(\nodeMap) \setminus \set{n}}))
    & \text{if } n = h(\& \code{x}) \land r^{-1}(n) = \set{n} \land {} \\
    & \quad \exists \graphIn_n \in \denotation{(\set{\set{n \goesto 0}}, \_, \_)}.\; \graphIn = \graphIn' \graphInComp \graphIn_n \\
    \abort & \text{otherwise.}
  \end{cases}
  \end{align*}
  \caption{Semantics of ghost commands.}
  \label{fig-ghost-command-semantics}
\end{figure}
}

The \emph{wishful assignment} \code{x :|} $\phi(\code{x})$ is a ghost
command that allows us to bring a witness to an existentially
quantified assertion onto the stack. This helps us to get a handle on the updated
flow interface of a modified region so that we can pass it to the $\syncCommand$.
The special ghost command $\syncCommand$ is used to bring a
flow graph region back to sync with the heap.  The $\markCommand$
command\footnote{not to be confused with the notion of marking logically
deleted nodes in Harris' list} is used to formally associate a
(potentially new) heap node with the graph node that abstracts it.
This is useful in cases where many heap nodes are abstracted by the
same graph node.  Similarly, $\unmarkCommand$ removes the association
between a heap and a graph node.
Each command has
two successful cases, one where the node being (un)marked is the
representative of a graph node, and one where it is represented
by another node.

\moreless{
The operational semantics of commands $C$ in RGSep are given by a
reduction relation
\[\reduces{}{} \; \subseteq \; (\cmds \times \states) \times ((\cmds \times
  \states) \uplus \set{\abort}).\]
The semantics of the new ghost commands are shown in \rF{fig-ghost-command-semantics}.
}{
The semantics of the new ghost commands are given in the companion technical report~\cite{techreport}.
}
The semantics of all other commands are inherited
from~\cite{viktor-thesis}. That is, these commands only affect the
heap component of the state and leave the flow graph and
node map components unchanged.
If a standard command $C$ modifies an unmarked heap location, then we can use the standard SL specification to reason about it.
If it modifies a heap location that is marked, then we can lift $C$'s SL specification to an $\logic$ specification using the following rule
\begin{align*}
  \infer{\hoareTriple{\bracket{\psi_1}_{\interface}}{C}{\bracket{\psi_2}_{\interface}}}
  {\hoareTriple{\psi_1}{C}{\psi_2}} 
\end{align*}
where $\psi_{\_}$ denotes an assertion that does not contain the predicates $\graphPred(\interface)$ and $[\phi]_\interface$.
We shall see generic lemmas to convert a good graph node into a dirty region in \rSc{sec-logic-lemmas}.

\moreless{
Each RGSep command must be \emph{local}. A command $C$ is local if for all
$C', \sigma, \sigma', \sigma_1$, and $\sigma_2$:
\begin{enumerate}
\item If $(C, \sigma_1 \cdot \sigma_2) \reduces{}{} (C', \sigma')$,
  then either $(C,
  \sigma_1) \reduces{}{} \abort$, or there exists $\sigma'_1$ such
  that $\sigma' =
  \sigma'_1 \cdot \sigma_2$ and  $(C, \sigma_1) \reduces{}{} (C', \sigma'_1)$.
\item If $(C, \sigma) \reduces{}{} \abort$ and $\sigma =
  \sigma_1 \cdot \sigma_2$, then $(C, \sigma_1) \reduces{}{} \abort$.
\end{enumerate}

\begin{theorem}
  The commands \code{:|}, $\syncCommand$, $\markCommand$,
  $\unmarkCommand$ are local and their semantics satisfies the
  specifications given in Fig.~\ref{fig-ghost-command-specs}.
\end{theorem}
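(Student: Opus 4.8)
The statement bundles two obligations for each of the four ghost commands: the two locality conditions stated just above, and the validity of each Hoare triple in \rF{fig-ghost-command-specs}. The plan is to discharge both by unfolding the operational rules of \rF{fig-ghost-command-semantics} together with the assertion semantics of \rF{fig-assertion-semantics}, and then matching each reduction case against the corresponding specification. Throughout I rely on the fact that $(\states, \cdot, \stateEmpty)$ is a separation algebra, so that frame decompositions may be regrouped using associativity, commutativity, and cancellativity.

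For the three commands that leave the flow graph essentially untouched the arguments are routine. The wishful assignment $\code{x :| }\phi(\code{x})$ only overwrites the single cell $\heap(\code{\&x})$, so its behaviour on a substate differs from the full state only in whether the side-condition that $\phi(\code{x}) * \true$ holds for some value can be met. Since $\phi * \true$ is monotone upward under $\cdot$, a witness on $\sigma_1$ lifts to one on $\sigma_1 \cdot \sigma_2$; conversely, failure of the side-condition (or of $\code{\&x} \in \dom(\heap)$) propagates downward, which is exactly what the abort clause of locality condition~1 and the premise of condition~2 require. The triple then follows directly from the heap-update semantics. For $\markCommand$ and $\unmarkCommand$, the only changes to the graph are the addition or removal (in the second case of each rule) of a node carrying inflow $0$ and no outgoing edges; such a node contributes nothing to $\capacity$ or $\netflow$ of any other node, so $\graphInComp$ with the surrounding graph is unaffected and locality is immediate. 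The triples are obtained by unfolding $\graphPred$, $\bracket{\cdot}_{\interface}$, and the node-map conditions of \rD{def-state}.

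The genuinely hard case is $\syncCommand$. Here the command replaces the flow graph of a region satisfying $\bracket{\true}_{\interface}$ by a \emph{good} flow graph satisfying $\interface'$, where $\interface \intLessEquiv \interface'$, while keeping the heap fixed. The difficulty is that editing a flow graph is, in general, a non-local operation: changing edges in one region can alter the flow of nodes in its context. The resolution is \rT{thm-replacement}: if the synced region has interface $\interface$ and composes with the context interface $\interface_2$, then from $\interface \intLessEquiv \interface'$ we obtain a composite interface that contextually extends the old one, and in particular $\graphIn'_1 \graphInComp \graphIn_2$ is defined and induces the same flow on every node of the context. This is precisely what is needed to produce the witness $\sigma' = \sigma'_1 \cdot \sigma_2$ for locality condition~1 and to establish safety of the triple under an arbitrary frame; the matching denotation on the good side is supplied by \rL{lem-good-quotient-congruence}.

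The main obstacle is thus the locality of $\syncCommand$, which reduces entirely to the algebraic content of \rT{thm-replacement}: contextual extension preserves composability and is itself stable under interface composition. A secondary, purely bookkeeping difficulty is that the sync rule comes with its \emph{own} implicit decomposition of the state — the synced dirty region $(\heap_1, \graphIn_1, \nodeMap_1)$ on one side and the part holding $\code{\&I'}$ on the other — which must be reconciled with the arbitrary frame $\sigma_1 \cdot \sigma_2$ of the locality definition. I would handle this by appealing to cancellativity and associativity of $\cdot$ to realign the two decompositions, after which the Replacement argument applies verbatim.
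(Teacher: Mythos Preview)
The paper does not prove this theorem; the only argument it offers is the remark in \rSc{sec-model-dirty-regions-etc} that \rT{thm-replacement} ``guarantees that this new region will still compose with the old context, making $\syncCommand$ a local command.'' Your identification of Replacement (together with \rL{lem-interface-composition-congruence} and \rL{lem-good-quotient-congruence}) as the core of the $\syncCommand$ case matches this intended argument, and your treatment of $\markCommand$/$\unmarkCommand$ is adequate.

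There is, however, a gap in your treatment of the wishful assignment. You handle condition~2 (a witness on $\sigma_1$ lifts to $\sigma_1 \cdot \sigma_2$) and the abort disjunct of condition~1 (if $\sigma_1$ has no witness at all it aborts), but you omit the case where $\sigma_1$ does \emph{not} abort---because it has a witness for some value $v'$---while the composite state nondeterministically chose a \emph{different} value $v$ whose witness lies entirely in the frame $\sigma_2$. Then $\sigma_1$ can reduce, but only to $(h_1[\code{\&x} \goesto v'], \ldots)$, which does not recombine with $\sigma_2$ to the given $\sigma'$, so condition~1 fails. Concretely, take $\phi(y) \defeq y \mapsto \_$, put one cell $a$ in $\sigma_1$ and a disjoint cell $b$ in $\sigma_2$; the big state may pick $v = b$, but $\sigma_1$ can only pick $v \in \{a, \code{\&x}\}$. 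Precision of $\phi$ pins down the substate for a fixed $v$, not which values $v$ are available, so it does not rescue the argument. This looks like a defect in the operational rule of \rF{fig-ghost-command-semantics} as much as in your sketch, but as written the ``routine'' claim does not go through.

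A smaller point on $\syncCommand$: the reconciliation you call ``bookkeeping'' is not merely associativity and cancellativity. The operational rule carries its own split $\tau_1 \cdot \tau_2$, and when this does not align with the locality split $\sigma_1 \cdot \sigma_2$ you must argue either that the synced footprint $\tau_1$ lies inside $\sigma_1$ (so Replacement applies to the residual frame inside $\sigma_1$), or that $\sigma_1$ alone aborts because it lacks the cell $\code{\&I'}$ or cannot realise the $\bracket{\true}_\interface$ footprint. This case analysis is doable but is more than a one-liner.
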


}{
We show in~\cite{techreport} that the new ghost commands are local.
}
Our logic $\logic$ thus satisfies all the requirements for a correct
instantiation of $\rgsep$.


\subsection{Proving Entailments}
\label{sec-logic-lemmas}

\begin{figure}
  \begin{align}
    \graphPred(\interface) \land x \in \interface 
    \;\; &\models \;\;
    \exists \interface_1, \interface_2.\; \nodePred(x, \interface_1) * \graphPred(\interface_2) \land \interface = \interface_1 \intComp \interface_2
    \tag{\sc{Decomp}}\label{rule-decomp} \\
    (\graphPred(\interface_1) * \true) \land \graphPred(\interface)
    \;\; &\models \;\;
    \exists \interface_2.\; \graphPred(\interface_1) * \graphPred(\interface_2)
    \tag{\sc{GrDecomp}}\label{rule-gr-decomp} \\
    \bracket{Q_1(\interface'_1)}_{\interface_1} * \bracket{Q_2(\interface'_2)}_{\interface_2} \land \interface' = \interface'_1 \intComp \interface'_2
    \;\; &\models \;\;
    \bracket{Q_1(\interface'_1) * Q_2(\interface'_2)}_{\interface_1 \intComp \interface_2}
    \tag{\sc{Comp}}\label{rule-comp} \\
    \graphPred(\interface_1) * \graphPred(\interface_2)
    \;\; &\models \;\;
    \graphPred(\interface_1 \intComp \interface_2)
    \tag{\sc{GrComp}}\label{rule-gr-comp} \\
    \paren{\graphPred(\interface_1) * \graphPred(\interface_2)} \land \paren{\nodePred(x, \interface_x) * \graphPred(\interface_3)}
    \;\; &\models \;\;
    \exists \interface_4.\; \graphPred(\interface_1) * \nodePred(x, \interface_x) * \graphPred(\interface_4)
    \tag{\sc{Disj}}\label{rule-disj} \\
     {} \land \nodelabelOfInt{\interface_x} \not\nodelabelLeq \nodelabelOfInt{\interface_1} \;\; & \nonumber \\
    \nodePred(x, \interface)
    \;\; &\equiv \;\;
    \exists \inflow, \nodelabel, \flow.\; \bracket{\gamma(x, \inflow, \nodelabel, \flow)}_{\interface} \land
            \inflow = \set{x \goesto \_}
    \tag{\sc{Conc}}\label{rule-conc} \\
    & \quad \quad \land \interface = (\set{\inflow}, \nodelabel, \ite(\inflow(x) = 0, \emptyFn, \flow))  \nonumber \\
    \bracket{\nodePred(x, \interface')}_{\interface}
    \;\; &\equiv \;\;
    \exists \inflow, \nodelabel, \flow.\; \bracket{\gamma(x, \inflow, \nodelabel, \flow)}_{\interface} \land \inflow = \set{x \goesto \_}
    \tag{\sc{Abs}}\label{rule-abs} \\
    & \quad \quad \land \interface' = (\set{\inflow}, \nodelabel, \ite(\inflow(x) = 0, \emptyFn, \flow)) \;\; & \nonumber \\
    \bracket{true}_\interface \land \bracket{true}_{\interface'}
    \;\; &\models \;\;
    \interface = \interface'
    \tag{\sc{Uniq}}\label{rule-uniq} \\
    \interface' = \interface \intComp (\set{\set{x \goesto 0}}, \_, \_) \land \inflow \in \interface
    \;\; &\models \;\;
    \exists \inflow' \in \inflowClassOfInt{\interface'}.\; \inflow \intPlus \zerofun = \inflow' \intPlus \zerofun
    \tag{\sc{AddIn}}\label{rule-add-in} \\
    \interface' = \interface \intComp (\_, \_, \emptyFn) \land \flowmapOfInt{\interface} = \emptyFn
    \;\; &\models \;\;
    \flowmapOfInt{\interface'} = \emptyFn
    \tag{\sc{AddF}}\label{rule-add-f} \\
    \interface_1 \intLessEquiv \interface'_1
    \;\; &\models \;\;
    (\interface_1 \intComp \interface_2) \intLessEquiv (\interface'_1 \intComp \interface_2)
    \tag{\sc{Repl}}\label{rule-repl} \\
    \interface \intLessEquiv \interface' \land \inflow \in \interface
    \;\; &\models \;\;
    \inflow \in \inflowClassOfInt{\interface'}
    \tag{\sc{ReplIn}}\label{rule-repl-in} \\
    \interface \intLessEquiv \interface' \land \flowmapOfInt{\interface} = \emptyFn
    \;\; &\models \;\;
     \flowmapOfInt{\interface'} = \emptyFn
    \tag{\sc{ReplF}}\label{rule-repl-f} \\
    \interface = \interface_1 \intComp \interface_2 \land (x, y) \in \flowmapOfInt{\interface_1} \land \flowmapOfInt{\interface} = \emptyFn
    \;\; &\models \;\;
    y \in \interface_2
    \tag{\sc{Step}}\label{rule-step}
\techreport{
  \\
    \interface = \interface_1 \intComp \_ \land x \in \interface_1
    \;\; &\models \;\;
    x \in \interface
    \tag{\sc{Climb}}\label{rule-climb} \\
    \nodePred(x, \interface_x) \sepincl \graphPred(\interface) \land (x, y) \in \flowmapOfInt{\interface_x} \land \flowmapOfInt{\interface} = \emptyFn
    \;\; &\models \;\;
    \exists \interface_y.\; (\nodePred(x, \interface_x) * \nodePred(y, \interface_y)) \sepincl \graphPred(\interface)
    \tag{\sc{Step}$\sepincl$}\label{rule-step-sepincl}
}
  \end{align}
  \caption{Generic lemmas for proving entailments}
  \label{fig-proof-rules}
\end{figure}

Finally, we show several generic lemmas for proving
entailments about SL assertions with flow interfaces. We only focus on
lemmas that involve flow interfaces and omit lemmas about simple
pure assertions involving inflows and flow maps. We also omit lemmas
about the algebraic properties of flow interface composition, such as
associativity. The lemmas are shown in Fig.\ref{fig-proof-rules}.
\techreport{\todo{If Climb is not used, remove it from POPL. Check if we ever use $\bracket{\phi}_{\interface} \equiv \bracket{\bracket{\phi}_{\interface}}_{\interface}$}}

For example, the lemma (\ref{rule-decomp}) implies that we can pull an arbitrary node $\nodePred(x, \interface_1)$ from a good graph $\graphPred(\interface)$ containing $x$, obtaining a flow interface constraint $\interface = \interface_1 \intComp \interface_2$ that remembers that the two parts composed to $\interface$.
Lemma (\ref{rule-comp}) uses $Q(\interface)$ to denote either $\graphPred(\interface)$ or $\bracket{\phi}_{\interface}$ for some $\phi$ and tells us that we can combine two dirty regions if the interfaces they are expecting ($\interface'_1$ and $\interface'_2$) are compatible.
The lemma (\ref{rule-gr-comp}) states that we can always abstract two good graphs
$\graphPred(\interface_1) * \graphPred(\interface_2)$ to a larger good
graph $\graphPred(\interface)$ that satisfies the composite interface
$\interface = \interface_1 \intComp \interface_2$. The validity of this entailment follows from \rL{lem-good-quotient-congruence}.
(\ref{rule-conc}) and (\ref{rule-abs}) allow us to shift our reasoning from graph nodes to their heap representation and back.
(\ref{rule-add-in}) and (\ref{rule-add-f}) can be used to reason about adding new nodes to the state.

One advantage of our logic is that a lemma like (\ref{rule-gr-comp}) is very general and can be used to compose two lists, two sorted lists, or two min-heaps.
The good condition will ensure, for instance, that the composition of sorted lists is itself sorted.


\section{Application 1: The Harris List}
\label{sec-harris}

\begin{figure}[t]
  \begin{lstlisting}
procedure insert() { $\annot{\boxed{\Phi}}$
  // Where: $\textcolor{blue}{\Phi \defeq \exists \interface.\; \graphPred(\interface) \land \varphi(\interface), \quad \varphi(\interface) \defeq \exists \inflow \in \interface^{\inflow}.\; \inflow \intPlus \zerofun = \set{\mainListHead \goesto (1, 0)} \intPlus \set{\freeListHead \goesto (0, 1)} \intPlus \zerofun \land \interface^\flow = \emptyFn}$@\label{code-insert-pre}@
  var l := mh; $\annot{\boxed{\nodePred(l, \interface_l) \sepincl \Phi} \land l = \mainListHead}$@\label{code-insert-decomp}@
  var r := getUnmarked(l.next);
  while (r != null && nondet()) { $\annot{\boxed{(\nodePred(l, \interface_l) * \nodePred(r, \interface_{r})) \sepincl \Phi}}$@\label{code-insert-pre-loop}@
    l := r; r := getUnmarked(l.next);@\label{code-insert-read-next}@ $\annot{\boxed{(\nodePred(l, \interface_l)) \sepincl \Phi} \land \paren{r \neq null \impl \boxed{(\nodePred(l, \interface_l) * \nodePred(r, \interface_{r})) \sepincl \Phi}}}$
  }
  if (!isMarked(r)) { $\annot{\boxed{\nodePred(l, \interface_l) \sepincl \Phi} \land \neg M(r)}$@\label{code-insert-if-not-marked}@
    var n := new Node(r, null);@\label{code-insert-new}@ $\annot{\boxed{\nodePred(l, \interface_l) \sepincl \Phi} * n \mapsto r, \nullVal \land \neg M(r)}$@\label{code-insert-alloc}@
    mark(n, n); $\annot{\boxed{\nodePred(l, \interface_l) \sepincl \Phi} * \bracket{n \mapsto r, \nullVal}_{\interface_n} \land \neg M(r)}$@\label{code-insert-mark}@ // Where: $\textcolor{blue}{\interface_n := (\set{n \goesto (0, 0)}, \bot, \emptyFn)}$
    atomic {@\label{code-insert-cas-start}@    // CAS(l.next, r, n)
      if (l.next == r) {
        $\annot{\paren{\bracket{l \mapsto r, \_}_{\interface_l} \land \nodePred(l, \interface_l)} * \bracket{n \mapsto r,  \nullVal}_{\interface_n} * \graphPred(\interface_2) \land \nodelabelOfInt{\interface_l} = \unmarked \land \interface' = \interface_l \intComp \interface_n \intComp \interface_2 \land \varphi(\interface')}$@\label{code-insert-atomic-pre}@
        l.next := n;
        $\annot{\bracket{\nodePred(l, \interface'_l) * \nodePred(n, \interface'_n)}_{\interface_1} * \graphPred(\interface_2) \land \nodelabelOfInt{\interface'_l} = \unmarked \land \nodelabelOfInt{\interface'_n} = \unmarked \land \interface' = \interface_l \intComp \interface_n \intComp \interface_2 \land \varphi(\interface') \land \interface_1 = \interface_l \intComp \interface_n \intEquiv \interface'_l \intComp \interface'_n}$@\label{code-insert-sync-pre}@
        var I$'_1$ :| $\bracket{\graphPred(\interface'_1)}_{\interface_1} \land \set{l \goesto \_, n \goesto \_} \in \inflowClassOfInt{\interface_1}$;@\label{code-insert-wish}@
        sync(I$'_1$);@\label{code-insert-sync}@ $\annot{\nodePred(l, \interface'_l) * \nodePred(n, \interface'_n) * \graphPred(\interface_2) \land \nodelabelOfInt{\interface'_l} = \unmarked \land \nodelabelOfInt{\interface'_n} = \unmarked \land \interface'' = \interface'_l \intComp \interface'_n \intComp \interface_2 \land \varphi(\interface'')}$
        var b := true; $\annot{(\nodePred(l, \interface'_l) * \nodePred(l, \interface'_l)) \sepincl \Phi \land \nodelabelOfInt{\interface'_l} = \unmarked \land \nodelabelOfInt{\interface'_n} = \unmarked \land b}$@\label{code-insert-post}@
      } else var b := false; $\annot{\boxed{\nodePred(l, \interface_l)
          \sepincl \Phi} * \bracket{n \mapsto r, \nullVal}_{\interface_n}
        \land \neg b}$
    } ...@\label{code-insert-cas-end}@ // If CAS failed, unmark and free n, and call insert() again
  }
}
\end{lstlisting}
\caption{A fragment from the insert procedure on the Harris list.}
\label{fig-harris-insert}
\end{figure}

We now demonstrate our flow framework by describing and verifying the Harris list.
For simplicity of presentation, we consider a data structure without keys and abstract the algorithm to one that non-deterministically chooses where to insert a node or which node to delete.
This is without loss of generality, as we are only proving memory
safety and absence of memory leaks.
Our proof can be extended to prove these properties on the full Harris list, and our framework can also prove the key invariants needed to prove linearizability.

We now describe the flows and good conditions we use for this proof.
To describe the structural properties, we use the product of two path-counting flows, one counting paths from the head of the main list $\mainListHead$ and one from the head of the free list $\freeListHead$.
To reason about marking, we use the node domain $\Nat \cup \set{\unmarked, \top}$ under the ordering where $\unmarked$ is the smallest element, $\top$ is the largest element, and all other elements are unordered.
We label unmarked nodes with $\unmarked$, and marked nodes with the thread ID $t \in \Nat$ of the thread that marked the node.
This is in order to enforce that only the thread that marks a node may link it to the free list, a property needed to prove that the free list is acyclic.

Our good condition is specified by the following predicate:
\begin{align*}
  & \gamma(n, \inflow, \nodelabel, \flow) \defeq \exists n', n''.\; n \mapsto n', n'' \land \nodelabel \neq \top \land (M(n') \iff \nodelabel \neq \unmarked) \land (0, 0) < \inflow(n) \leq (1, 1) \\
  & \quad \land (\inflow(n) \geq (0, 1) \impl \nodelabel \neq \unmarked) \land (n = \freeListTail \impl \inflow(n) \geq (0, 1)) \land (\inflow(n) \leq (1, 0) \impl n'' = \nullVal) \\
  & \quad \land \flow = \ite(u(n') = \nullVal, \emptyFn, \set{(n, u(n')) \goesto (1, 0)}) \intPlus \ite(n'' = \nullVal, \emptyFn, \set{(n, n'') \goesto (0, 1)}).
\end{align*}
This condition expresses that every node $n$ is a heap cell containing two pointers $n'$ (for the \code{next} field) and $n''$ (for the \code{fnext} field).
$n$ is either unmarked or marked with a thread ID
($\nodelabel \neq \top$), but only if the value $n'$
of field \code{next} has its mark bit set (encoded using the predicate $M$).
We next use the path-counting flows to say that $n$ has exactly 1 path on \code{next} edges from $\mainListHead$, on \code{fnext} edges from $\freeListHead$, or both.
This enforces that all nodes are in at least one of the two lists, and this is how we establish absence of memory leaks.
The next few conjuncts say that all nodes in the free list are marked ($\inflow \geq (0, 1) \impl \nodelabel \neq \unmarked$), that $\freeListTail$ is a node in the free list, and that main list nodes have no \code{fnext} edges.
The final line describes the edges: $n$ has a
\code{next} edge (encoded with the edge label $(1, 0)$) to $u(n')$
(the unmarked version of $n'$, i.e. the actual pointer obtained from $n'$ by masking the mark bit), and an \code{fnext} edge (label $(0, 1)$) to $n''$, but only if they are not null.

We use the global data structure invariant $\Phi$ shown in line \ref{code-insert-pre} of \rF{fig-harris-insert}, which gives the appropriate non-zero inflow to $\freeListHead$ and $\mainListHead$.
It is easy to see that with good condition $\gamma$, $\Phi$ describes a structure satisfying properties \ref{it-two-lists} to \ref{it-marked} of the Harris list from \rSc{sec-overview}.


We now describe the $\logic$ proof for the \code{insert} procedure, shown in \rF{fig-harris-insert}.
Variables that are not program variables in the annotations are implicitly existentially quantified.
All the entailments in this proof sketch can be proved with the help of our library of lemmas from \rSc{sec-logic-lemmas}.

The procedure starts in a state satisfying $\Phi$ and sets a variable \code{l} to equal the head of the main list.
Since \code{l} is in the domain of the inflow $\inflowOfInt{\interface}$ we use (\ref{rule-decomp}) to decompose $\Phi$ and get a single node ($\nodePred(l, \interface_l)$) included in the larger graph satisfying $\Phi$ (line \ref{code-insert-decomp}).
We then read \code{l.next}, store the unmarked version in \code{r}, and enter a loop.
At the beginning of the loop (line \ref{code-insert-pre-loop}), since $r \neq \nullVal$, we know by $\gamma$ that \code{l} has a \code{next} edge to $r$.
But since $\Phi$ implies $\interface^\flow = \emptyFn$, we can use \moreless{(\ref{rule-step-sepincl})}{(\ref{rule-gr-decomp}), (\ref{rule-step}), and (\ref{rule-decomp})} to extract the node corresponding to $r$ and obtain the annotation on line \ref{code-insert-pre-loop}.

Inside the loop, we move \code{l} to \code{r} and then again read \code{l.next}, unmark it, and set it to \code{r} (line \ref{code-insert-read-next}).
This dereference is memory safe because we know that we have access permission to \code{l} (by $\nodePred(r, \interface_r)$ on line \ref{code-insert-pre-loop}).
Finally, to establish the annotation on line \ref{code-insert-read-next}, we ``drop'' the node predicate corresponding to the old value of \code{l} and absorb it into $\Phi$.
The second conjunct is derived similar to the annotation on line \ref{code-insert-pre-loop}.
Note that the annotation at the end of the loop implies the annotation at the beginning of the loop if the loop check succeeds.

The loop terminates when the non-deterministic loop condition fails -- presumably at the correct position to insert the new node.
Since we only want to insert new nodes into the main list, we ensure \code{l} is unmarked (line \ref{code-insert-if-not-marked}).
We then create the new node with \code{next} field $r$ (which may equal null) in line \ref{code-insert-alloc}.
As this allocation only modifies the heap, the resulting node is described using a standard SL points-to predicate in the local state of the thread.
To create a corresponding node in the graph we use the $\markCommand$ ghost command, which creates a new graph node with a zero inflow, the bottom node label, and no outgoing edges.
The resulting state is described with a dirty predicate (line \ref{code-insert-mark}) as it is neither in sync nor a good node.

The next step is to swing $l$'s \code{next} pointer from $r$ to $n$ using a CAS operation.
The CAS is expanded into an atomic block (lines \ref{code-insert-cas-start} to \ref{code-insert-cas-end}) in order to show the intermediate proof steps.
Note that inside the atomic block there is no need for separation of shared and local state, as in $\rgsep$ one can reason sequentially about atomic executions.
If the compare portion of the CAS succeeds, then we know that $l$ is unmarked ($\nodelabelOfInt{\interface_l} = \unmarked$) and its next field equals $r$ (we use a dirty predicate around $l$ to express this).
We use (\ref{rule-gr-decomp}), (\ref{rule-comp}), (\ref{rule-add-in}), and (\ref{rule-add-f}) to decompose $\Phi$ at this point and infer that $\interface'$, the global interface extended with $n$, also satisfies the global conditions $\varphi(\interface')$ (line \ref{code-insert-atomic-pre}).
The modification to \code{l.next} in the next line is memory safe since we have access permission to $l$ (by $l \mapsto r, \_$).

Before we can bring the graph abstraction back to sync, we must
establish that the change to the interface of some region is a contextual extension.
As we saw with the example of inserting in a singly-linked list in \rF{fig-list-insert}, we must consider the region containing $\set{l, n}$.
Some pure reasoning about path-counts in this region can be used to infer that $\interface_l \intComp \interface_n \intEquiv \interface'_l \intComp \interface'_n$ (line \ref{code-insert-sync-pre})\footnote{$\interface \intEquiv \interface'$ is shorthand for $\interface \intLessEquiv \interface' \land \interface' \intLessEquiv \interface$.}.
We then use a wishful assignment (line \ref{code-insert-wish}) to take
a snapshot of the new interface $\code{I}'_1$ of the region we wish to sync.

The other fact to prove before syncing the graph is to show that the region under modification is itself a good state.
This is indicated in line \ref{code-insert-sync-pre} by using good node predicates inside the dirty region.
To establish that $n$ is a good node, for instance, we use a new interface $\interface'_n$ which gives it an inflow of $(1, 0)$, a node abstraction of $\unmarked$, and a flow map of one edge to $r$ with label $(1, 0)$.
We then check that the heap representation of $n$ along with this interface satisfies $\gamma$.
Similarly, we check that $l$, with its new edge to $n$ is still a good node.
As the entire dirty region is a good state with a contextually extended interface, we can now use $\syncCommand$ (line \ref{code-insert-sync}) to update the graph.
We can then use the (\ref{rule-repl}), (\ref{rule-repl-in}), (\ref{rule-repl-f}), and (\ref{rule-comp}) rules to establish that the new global interface satisfies the invariant $\varphi(\interface'')$.

The final state (line \ref{code-insert-post}) is once again a good state, which means that we have shown both memory safety and absence of memory leaks of this procedure.
We have omitted the rest of the insert procedure due to space constraints, which frees the node $n$ and restarts if the CAS failed, but note that this can be proved in a similar fashion.
We have also omitted the reasoning about interference by other threads, which is done in RGSep by checking that each intermediate assertion is stable under the action of other threads.
This is a syntactic check that can be done with the help of the entailment lemmas presented in \rF{fig-proof-rules}.
The full proof can be found in \moreless{\rSc{sec-harris-simple-appendix}}{the technical report \cite[\S{}A]{techreport}}.
One can also prove linearizability -- for the Harris list, this requires a technique such as history variables since linearization points are dynamic -- but the invariants of the data structure needed to show linearizability are expressible using flows.

This example shows that reasoning about programs using flow graphs and interfaces is as natural as with inductive predicates, and we can use similar unrolling and abstraction lemmas to reason about traversals of the data structure.
Note that the intermediate assertions would have looked almost
identical if we did not have a free list -- the only place where we
reason about the free list is in the global interface on line
\ref{code-insert-pre} and when we prove the local property $\interface_l \intComp \interface_n \intEquiv \interface'_l \intComp \interface'_n$ on line \ref{code-insert-sync-pre}.


\section{Application 2: Dictionaries}
\label{sec-dictionaries}

In this section, we use the flow framework to verify a large class of concurrent dictionary implementations.
We base our approach on the \emph{edgeset framework} of~\citet{dennis-keyset} that provides invariants, in terms of reachability properties of sets of keys, for proving linearizability of dictionary operations (search, insert, and delete).
Linearizability means, informally, that each operation appears to happen atomically, i.e. at a single instant in time, and that if operation $o_1$ finishes before $o_2$ begins, then $o_1$ will appear to happen before $o_2$.

By encoding the edgeset framework using flows, we obtain a method of proving linearizability as well as memory safety.
More importantly, we can use the power of flows to encode the data constraints independently of the shape.
Thus our encoding will be data-structure-agnostic, meaning that we can verify \emph{any} dictionary implementation that falls within the edgeset framework.

We first briefly describe the original edgeset framework, then show how to encode it using flows.
We then give an abstract algorithm template and specifications for the dictionary operations that can be instantiated to concrete implementations.
In \moreless{\rSc{sec-btree}}{the technical report~\cite[\S{}B]{techreport}} we describe such an instantiation to a nontrivial implementation based on B+ trees, and show how we can verify it using this framework.

\subsection{The Edgeset Framework}

A dictionary is a key-value store that implements three basic operations: search, insert, and delete.
For simplicity of exposition, we ignore the data values and treat the dictionary as containing only keys.
We refer to a thread seeking to search for, insert, or delete a key $k$ as an operation on $k$, and to $k$ as the operation's query key.
Let $\KS$ be the set of possible keys, e.g., all integers.

The edgeset framework describes certain invariants on the data layout in dictionary implementations and an abstract algorithm that will be correct if the invariants are maintained.
These invariants do not specify the shape of the data structure to be say a tree, list, or hash table, and instead describe properties of an abstract graph representation of the data structure (rather like the flow graph).
The nodes in the graph can represent, for instance, an actual heap node (in the case of a list), an array cell (in the case of a hash table), or even a collection of fields and arrays (in the case of a B-tree).
Nodes are labeled with the set of keys stored at the node (henceforth, the contents of the node).
Each edge is labeled by a set of keys that we call the \emph{edgeset}, and is defined as follows.

When a dictionary operation arrives at a node $n$, the set of query keys for which the operation traverses an edge $(n, n')$ is called the edgeset of $(n, n')$.
For example, in a BST, an operation on $k$ moves from a node to its left child $n_l$ if $k$ is less than the key contained at the node, hence the edgeset of $(n, n_l)$ is $\setcomp{k}{k < n.key}$.
Note that $k$ can be in the edgeset of $(n, n')$ even if $n$ is not reachable from any root; the edgeset is the set of query keys for which an operation would traverse $(n, n')$ assuming it somehow found itself at $n$.

The \emph{pathset} of a path between nodes $n_1$ and $n_2$ is defined as the intersection of edgesets of every edge on the path, and is thus the set of keys for which operations starting at $n_1$ would arrive at $n_2$ assuming neither the path nor the edgesets along that path change.
For example, in a sorted list, the pathset of a path from the head of a list to a node $n$ is equal to the edgeset of the edge leading into $n$ (i.e. the set $\setcomp{k}{n'.key < k}$ where $n'$ is $n$'s predecessor).
With this, we define the \emph{inset} of a node $n$ as the union of the pathset of all paths from the root node to $n$.
\footnote{If there are multiple roots, then each edge has a different edgeset depending on the root, and hence the definitions of pathset and inset also depend on the particular root. The formalism can be extended to handle multiple roots in this manner.}
If we take the inset of a node $n$, and remove all the keys in the union of edgesets of edges leaving $n$, we get the \emph{keyset} of $n$.

There are three desirable conditions on graphs representing dictionary data structures:
\begin{enumerate}[label=(GS\arabic{enumi})]
\item\label{item-gs1} The keysets of two distinct nodes are disjoint\footnote{The original paper required the keysets to partition $\KS$, but we note that this weaker condition is sufficient for linearizability.}.
\item\label{item-gs2} The contents of every node are a subset of the keyset of that node.
\item\label{item-gs3} The edgesets of two distinct edges leaving a node are disjoint.
\end{enumerate}
Intuitively, \ref{item-gs1} and \ref{item-gs2} tell us we can treat the keyset of $n$ as the set of keys that $n$ can potentially contain.
In this case, $k$ is in the inset of $n$ if and only if operations on $k$ pass through $n$, and $k$ is in the keyset of $n$ if and only if operations on $k$ end up at $n$.
\ref{item-gs3} requires that there is a deterministic path that operations follow, which is a desirable property that is true of all data structures in common use.
A dictionary state that satisfies these conditions is called a \emph{good state}.

The Keyset Theorem of~\citet{dennis-keyset} states, informally, that if every atomic operation preserves the good state property and $k$ is in the keyset of a node $n$ at the point when the operation looks for, inserts, or deletes $k$ at $n$, then the algorithm is linearizable.
Intuitively, if $k$ is in the keyset of $n$, then since the keysets are
disjoint we know that no other thread is performing an operation on $k$ at any other node.
And once this operation acquires a lock on $n$ (or establishes exclusive access in another way, e.g. through a compare and swap), we know that operations on $n$ will be atomic.

The challenge with using this framework for a formal proof in an SL-based program logic is that the invariants depend on quantities, like the inset, that are not local.
The inset of a node $n$ depends both on the global root as well as all paths in the data structure from the root to $n$.
We show next how to convert the good state conditions into local properties of nodes using flows.

\subsection{Encoding the Edgeset Framework using Flows}

\begin{figure}
  \centering
  \begin{lstlisting}
procedure dictionaryOp(Key k) { $\annot{\boxed{\Phi}}$@\label{give-up-pre}@
  var c := r; $\annot{\boxed{\nodePred(c, \interface_c) \sepincl \Phi}}$
  while (true) { $\annot{\boxed{\nodePred(c, \interface_c) \sepincl \Phi}}$ @\label{give-up-loop-inv}@
    lock(c); $\annot{\boxed{\nodePred(c, \interface_c) \sepincl \Phi} \land \nodelabelOfInt{\interface_c} = (\_, \set{t})}$
    var n;
    if (inRange(c, k)) { $\annot{\boxed{\nodePred(c, \interface_c) \sepincl \Phi} \land \nodelabelOfInt{\interface_c} = (\_, \set{t}) \land k \in \inflowClassOfInt{\interface_c}(c)}$ @\label{give-up-inRange}@
      n := findNext(c, k); @\label{give-up-findNext}@$\annot{
        & \boxed{\nodePred(c, \interface_c) \sepincl \Phi} \land \nodelabelOfInt{\interface_c} = (\_, \set{t}) \land k \in \inflowClassOfInt{\interface_c}(c) \\
        & \land (n \neq \nullVal \land k \in \flowmapOfInt{\interface_c}(c, n) \lor n = \nullVal \land \forall x \in \flowmapOfInt{\interface_c}.\; k \not\in \flowmapOfInt{\interface_c}(c, x))}$
      if (n == null) break; @\label{give-up-break}@
      $\annot{\boxed{(\nodePred(c, \interface_c) * \nodePred(n, \interface_n)) \sepincl \Phi} \land \nodelabelOfInt{\interface_c} = (\_, \set{t})}$
    } else {
      n := r; @\label{give-up-give-up}@$\annot{\boxed{\nodePred(c, \interface_c) \sepincl \Phi} \land \nodelabelOfInt{\interface_c} = (\_, \set{t}) \land n = r}$
    } $\annot{\boxed{(\nodePred(c, \interface_c) * \nodePred(n, \interface_n)) \sepincl \Phi} \land \nodelabelOfInt{\interface_c} = (\_, \set{t}) \lor \boxed{\nodePred(c, \interface_c) \sepincl \Phi} \land \nodelabelOfInt{\interface_c} = (\_, \set{t}) \land c = n = r}$
    unlock(c);
    c := n; $\annot{\boxed{\nodePred(c, \interface_c) \sepincl \Phi}}$
  } $\annot{\boxed{\nodePred(c, \interface_c) \sepincl \Phi} \land \nodelabelOfInt{\interface_c} = (\_, \set{t}) \land k \in \inflowClassOfInt{\interface_c}(c) \land \forall x \in \flowmapOfInt{\interface_c}.\; k \not\in \flowmapOfInt{\interface_c}(c, x)}$@\label{give-up-decisive-pre}@
  var res := decisiveOp(c, k); $\annot{\boxed{\nodePred(c, \interface_c) \sepincl \Phi} \land \nodelabelOfInt{\interface_c} = (\_, \set{t})}$
  unlock(c); $\annot{\boxed{\nodePred(c, \interface_c) \sepincl \Phi}}$
  return res; $\annot{\boxed{\Phi}}$
}
\end{lstlisting}
  \vspace*{-1em}
  \caption{The give-up template, with proof annotations in our logic.}
  \label{fig-give-up-template}
\end{figure}

\begin{figure}
  \footnotesize
  \begin{align*}
    &\annot{\boxed{\nodePred(c, \interface_c) \sepincl \Phi}}\; \code{lock(c);} \; \annot{\boxed{\nodePred(c, \interface'_c) \sepincl \Phi} \land \nodelabelOfInt{\interface'_c} = (\_, \set{t}) \land \interface_c \intEquiv \interface'_c} \\
    &\annot{\boxed{\nodePred(c, \interface_c) \sepincl \Phi} \land \nodelabelOfInt{\interface_c} = (\_, \set{t})}\; \code{unlock(c); } \; \annot{\boxed{\nodePred(c, \interface'_c) \sepincl \Phi} \land \interface_c \intEquiv \interface'_c} \\
    &\annot{\boxed{\nodePred(c, \interface_c) \sepincl \Phi} \land \nodelabelOfInt{\interface_c} = (\_, \set{t})}\; \code{res := inRange(c, k); } \; \annot{\boxed{\nodePred(c, \interface_c) \sepincl \Phi} \land (res \impl k \in \inflowClassOfInt{\interface_c}(c)) } \\
    &\annot{\boxed{\nodePred(c, \interface_c) \sepincl \Phi} \land
      \nodelabelOfInt{\interface_c} = (\_, \set{t})}\; \code{n :=
      findNext(c, k); } \; \annot{
      & \boxed{\nodePred(c, \interface_c) \sepincl \Phi} \land \big(n \neq \nullVal \land k \in \flowmapOfInt{\interface_c}(c, n) \\
      & \quad \lor n = \nullVal \land \forall x \in \flowmapOfInt{\interface_c}.\; k \not\in \flowmapOfInt{\interface_c}(c, x)\big)} \\
    &\annot{
      & \boxed{\nodePred(c, \interface_c) \sepincl \Phi} \land \nodelabelOfInt{\interface_c} = (C, \set{t}) \land k \in \inflowClassOfInt{\interface_c}(c) \\
      & \land \forall x \in \flowmapOfInt{\interface_c}.\; k \not\in
      \flowmapOfInt{\interface_c}(c, x)}\; \code{res := decisiveOp(c, k); }
    \; \annot{\boxed{\nodePred(c, \interface'_c) \sepincl \Phi} \land \interface_c \intEquiv \interface'_c \land \Psi} \\
    & \text{where } \Psi \defeq
    \begin{cases}
      \nodelabelOfInt{\interface'_c} = (C, \set{t}) \land res \iff k \in C & \text{for \code{member}} \\
      \nodelabelOfInt{\interface'_c} = (C \cup \set{k}, \set{t}) \land res \iff k \not\in C & \text{for \code{insert}} \\
      \nodelabelOfInt{\interface'_c} = (C \setminus \set{k}, \set{t}) \land res \iff k \in C & \text{for \code{delete}} \\
    \end{cases}
  \end{align*}
  \vspace{-.8em}
  \caption{Specifications for helper functions.}
  \label{fig-helper-specs}
\end{figure}

Given a potentially-infinite set of keys, $\KS$, the set of subsets of $\KS$ forms a flow domain: $(2^{\KS}, \subseteq, \cup, \cup, \cap, \emptyset, \KS)$.
The node domain must contain a set of keys to keep track of the contents of each node, but we also wish to reason about locking.
The challenge here is that threads may modify a locked node using a series of atomic operations such that the node does not satisfy the good state conditions in between operations, thus making the Keyset Theorem inapplicable.
We do not want to use the dirty predicate to reason about such nodes, as this will complicate the global shared state invariant.
Instead, we label nodes with elements of $\Nat \uplus
\overline{\Nat}$, where $\overline{\Nat} \defeq
\setcomp{\overline{x}}{x \in \Nat}$. Here, $0$ denotes an unlocked node, $t$ denotes a node locked by thread $t$, and $\overline{t}$ denotes a node locked by thread $t$ whose heap representation is out of sync.
Formally, we use a product of sets of keys and sets of augmented thread IDs as the node domain: $(2^\KS \times 2^{\Nat \uplus \overline{\Nat}}, \subseteq, \cup, (\emptyset, \emptyset))$, where $\subseteq$ and $\cup$ are lifted component-wise.
In the following, all uses of $t$ implicitly assert that the label is
not $\overline{x}$ for any $x \in \Nat$.

\begin{lemma}
  \label{lem-flow-equals-inset}
  For any graph $\graph$, $\capacity(\graph)(n, n')$ is the set of keys $k$ for which there exists a path from $n$ to $n'$, every edge of which contains $k$ in its edge label.
  In particular, if $\inflow(x) = \ite(x = r, \KS, \emptyset)$ is an inflow on $\graph$, then $\netflow(\inflow, \graph)(n)$ is the set of keys $k$ for which such a path exists from $r$ to $n$.
\end{lemma}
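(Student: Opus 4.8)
The plan is to work entirely in the instantiated flow domain $(2^{\KS}, \subseteq, \cup, \cup, \cap, \emptyset, \KS)$, in which $\intPlus = \cup$, $\intMult = \cap$, $0 = \emptyset$, and $1 = \KS$, and to prove the statement about $\capacity$ first; the claim about $\netflow$ then follows immediately. Indeed, for the inflow $\inflow(x) = \ite(x = r, \KS, \emptyset)$, the definition of $\netflow$ gives $\netflow(\inflow, \graph)(n) = \intBigPlus_{n' \in \graph} \inflow(n') \intMult \capacity(\graph)(n', n) = \KS \cap \capacity(\graph)(r, n) = \capacity(\graph)(r, n)$, since every summand with $n' \neq r$ contributes $\emptyset \cap \cdots = \emptyset$. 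So it suffices to show that $\capacity(\graph)(n, n')$ is the set of keys $k$ admitting an \emph{all-$k$ path} from $n$ to $n'$, that is, a path every edge of which has $k$ in its label.

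For the capacity itself I would use that $\capacity(\graph)$ is the least fixpoint of the continuous operator $F$ on the $\omega$-cpo of functions $N \times (N \cup N^o) \to 2^{\KS}$ given by $F(C)(n, n') = \init(n, n') \cup \bigcup_{n'' \in N} \edgelabel(n, n'') \cap C(n'', n')$; continuity holds because $\intPlus, \intMult$ are continuous and the smallest element $\zerofun$ maps everything to $\emptyset$. By Kleene's theorem $\capacity(\graph) = \intBigJoin_l C_l$ with $C_0 = \zerofun$ and $C_{l+1} = F(C_l)$. Writing $P_j(n, n')$ for the set of keys $k$ for which there is an all-$k$ path of length exactly $j$ from $n$ to $n'$ (so $P_0(n, n') = \KS$ if $n = n'$ and $\emptyset$ otherwise, capturing the trivial empty path), I would prove by induction on $l$ that $C_l(n, n') = \bigcup_{j < l} P_j(n, n')$. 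The base case $C_0 = \zerofun$ is the empty union. For the step, the $\init$ term supplies $P_0$, and since $N$ is finite and $\intMult$ distributes over $\intPlus$ (a semiring axiom), $\bigcup_{n''} \edgelabel(n, n'') \cap \bigcup_{j<l} P_j(n'', n')$ rearranges to $\bigcup_{j<l} \bigcup_{n''} \edgelabel(n, n'') \cap P_j(n'', n') = \bigcup_{1 \le j \le l} P_j(n, n')$, using the path-decomposition identity $P_{j+1}(n, n') = \bigcup_{n''} \edgelabel(n, n'') \cap P_j(n'', n')$ (a length-$(j{+}1)$ all-$k$ path is a first edge $(n, n'')$ with $k \in \edgelabel(n, n'')$ followed by a length-$j$ all-$k$ path from $n''$).

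Finally, taking the supremum over $l$ in the powerset lattice is just the union of the ascending chain, so $\capacity(\graph)(n, n') = \bigcup_l C_l(n, n') = \bigcup_{j \ge 0} P_j(n, n')$, which is exactly the set of keys admitting an all-$k$ path of some length from $n$ to $n'$; combined with the first paragraph this also settles the $\netflow$ claim. I expect the only delicate points to be bookkeeping: remembering that the length-$0$ path contributes $\KS$ on the diagonal (so $\capacity(\graph)(n, n) = \KS$ and, correspondingly, $r$ lies in its own inset for every key), and cleanly invoking the Kleene characterization of the least fixpoint from the stated continuity and $\omega$-completeness hypotheses. Everything else is routine semiring algebra; in fact this lemma is the $2^{\KS}$ specialization of the general semiring identity $\capacity(\graph)(n, n') = \intBigPlus_{\pi \colon n \to n'} \intBigMult_{e \in \pi} \edgelabel(e)$ already sketched for the path-counting domain $\Nat^\infty$ (\rE{ex-intDom-nat-infty}), so one could alternatively prove that identity once and specialize.
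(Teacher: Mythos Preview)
Your proof is correct. The paper does not give a proof of this lemma (it is stated without proof, like the analogous path-counting lemma in \rSc{example:pathcounting}), so there is nothing to compare against at the level of argument structure. Your approach---Kleene iteration to obtain $\capacity(\graph) = \intBigJoin_l C_l$ with $C_l = \bigcup_{j<l} P_j$, where $P_j$ records the keys admitting an all-$k$ path of length exactly $j$---is the natural one and exactly matches the informal matrix-power intuition $C = I + E + E^2 + \cdots$ that the paper gives when introducing capacity. The continuity verification, the path-decomposition identity for $P_{j+1}$, and the reduction of the $\netflow$ claim to the $\capacity$ claim are all handled correctly.
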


If we label each graph edge with its edgeset, then by \rL{lem-flow-equals-inset}, the flow at each node is the inset of that node.
We can encode \ref{item-gs3} easily using the good condition at each node, and by using sets of keys as the node domain, we can also encode \ref{item-gs2}.
We further observe that for graphs with one root, edgesets leaving a node being pairwise disjoint implies that the keysets of every pair of nodes is disjoint\footnote{For graphs with several roots, each edgeset is defined with respect to each root and then this property holds for each root.}.
We can thus use a global data structure invariant that the inflow of the root is $\KS$ and all other nodes $\emptyset$ to obtain \ref{item-gs1}.

This motivates us to use the following good condition
\begin{align*}
   \gamma(x, \inflow, (C, T), \flow) & \defeq \exists t.\; (\gamma_g(x, \inflow, C, t, \flow) \land T = \set{t}\, \lor\, \gamma_b(x, t) \land t \neq 0 \land T = \{\overline{t}\}) \\
   & \quad \land C \subseteq \inflow(x) \land \forall y.\; (C \cap \flow(x, y) = \emptyset \land \forall z.\; \flow(x, y) \cap \flow(x, z) = \emptyset)
\end{align*}
where $\gamma_g$ and $\gamma_b$ are user-specified SL predicates.
$\gamma_g$ is to be instantiated with the heap implementation of a node in sync, and $\gamma_b$ with a description of a node that may not be in sync.
The global data structure invariant enforces that the graph has no outgoing edges and that only the root gets a non-zero inflow:
\[\Phi \defeq \exists \interface, \inflow.\; \graphPred(\interface) \land \inflow \in \inflowClassOfInt{\interface} \land \inflow \intPlus \zerofun = \set{r \goesto \KS} \intPlus \zerofun \land \flowmapOfInt{\interface} = \emptyFn\]

\begin{lemma}
  The flow graph component of any state satisfying $\Phi$ satisfies the good state conditions \ref{item-gs1} to \ref{item-gs3}.
\end{lemma}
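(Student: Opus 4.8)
The plan is to unfold $\Phi$ down to a statement about the underlying flow graph and then verify each of \ref{item-gs1}--\ref{item-gs3} in turn, of which only \ref{item-gs1} requires a genuinely global argument. First I would unfold $\graphPred(\interface)$ in $\Phi$: by the semantics of $\graphPred$ the state's flow graph $\graphIn = (\inflowClass, \graph)$ lies in $\denotation{\interface}_{\goodCondition_{h,r}}$, so every node $x$ of $\graph = (N, N^o, \nodelabelling, \edgelabel)$ satisfies $\gamma(x, \inflow_x, \nodelabelling(x), \edgelabel_x)$ with singleton inflow $\inflow_x(x) = \netflow(\inflow, \graph)(x)$. The conjunct $\inflow \intPlus \zerofun = \set{r \goesto \KS} \intPlus \zerofun$ pins the inflow inside the graph to be $\KS$ at $r$ and $\emptyset$ at every other node, and $\flowmapOfInt{\interface} = \emptyFn$ forces $N^o = \emptyset$ (the graph is closed), since $r$ is a source. \rL{lem-flow-equals-inset} then identifies the flow with the inset: $\netflow(\inflow, \graph)(n) = \capacity(\graph)(r, n)$ is exactly the set of keys $k$ admitting a path from $r$ to $n$ all of whose edges contain $k$. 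Writing $(C_x, T_x) = \nodelabelling(x)$, the good condition $\gamma$ yields at each node $x$ the three facts I will use: (i) $C_x \subseteq \netflow(\inflow, \graph)(x)$; (ii) $C_x \cap \edgelabel(x, y) = \emptyset$ for every $y$; and (iii) $\edgelabel(x, y) \cap \edgelabel(x, z) = \emptyset$ for distinct successors $y \neq z$.

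Conditions \ref{item-gs3} and \ref{item-gs2} are then immediate and purely local. Since each edge label $\edgelabel(x, y)$ is by definition the edgeset of $(x, y)$, fact (iii) is literally \ref{item-gs3}. For \ref{item-gs2}, the keyset of $x$ is $\netflow(\inflow, \graph)(x) \setminus \bigcup_y \edgelabel(x, y)$; by (i) the contents $C_x$ lie in the inset, and by (ii) they are disjoint from every outgoing edgeset, so $C_x \subseteq \keyset(x)$.

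The real work is \ref{item-gs1}, disjointness of the keysets of distinct nodes, which is where the single-root invariant and (iii) combine. I would fix a key $k$ and argue it lies in the keyset of at most one node. By (iii), at each node $x$ there is at most one successor $y$ with $k \in \edgelabel(x, y)$, which defines a partial routing function $\mathsf{succ}_k$ on $N$; using the single-root identification above, $k \in \netflow(\inflow, \graph)(n)$ precisely when $n$ lies on the deterministic walk $r, \mathsf{succ}_k(r), \mathsf{succ}_k^2(r), \dots$. Now $k \in \keyset(n)$ iff $n$ is on this walk and $\mathsf{succ}_k(n)$ is undefined, i.e. $n$ is its terminal node. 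Hence at most one node has $k$ in its keyset, so for distinct $n, m$ we get $k \notin \keyset(n) \cap \keyset(m)$; as $k$ is arbitrary, \ref{item-gs1} follows.

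The main obstacle is making this routing argument rigorous. I must justify that inset membership coincides with reachability along $\mathsf{succ}_k$ --- collapsing the idempotent $\cup$/$\cap$ fixpoint of \rL{lem-flow-equals-inset} to a single deterministic walk --- and then observe that the orbit of $r$ under the partial function $\mathsf{succ}_k$ contains at most one node at which $\mathsf{succ}_k$ is undefined (the orbit is either an acyclic walk ending at such a node, or runs into a cycle and has none). This bounds by one the number of nodes whose keyset contains $k$, and is where I would spend the most care, as cycles are explicitly permitted in our graphs.
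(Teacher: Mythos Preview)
Your proposal is correct and follows exactly the approach the paper sketches: the paper states this lemma without a detailed proof, noting only that \ref{item-gs2} and \ref{item-gs3} are encoded directly in $\gamma$ and that ``for graphs with one root, edgesets leaving a node being pairwise disjoint implies that the keysets of every pair of nodes is disjoint'' --- your deterministic-routing argument via $\mathsf{succ}_k$ is precisely the natural way to flesh out that observation, and your treatment of the lasso/cycle case is the right way to handle the fact that cycles are permitted.
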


The original edgeset paper~\cite{dennis-keyset} set out three template algorithms, based on different locking disciplines, along with invariants that implied their linearizability.
Of these, we formalize the template based on the \emph{give-up} technique here.
Our method can be easily extended to the lock-coupling template, but we note that most practical algorithms use more fine-grained locking schemes.
For the link technique, the third template, we need to encode the \emph{inreach}, an inductive quantity depending on the keyset and edgesets, viz. $k$ is in the inreach of a node $n$ if $k$ is in the keyset of $n$ or $k$ is in the keyset of $n'$ and for every edge $e$ in the path from $n$ to $n'$, $k$ is in the edgeset of $e$.
Being able to express the inreach would also give a simpler way to prove linearizability of the full Harris list.
We would need to extend our framework to support second-order flows to define the inreach, and we leave this for future work.


We now describe the give-up template algorithm, shown in \rF{fig-give-up-template}.
This template can be used to build implementations of all three dictionary operations by defining the function \code{decisiveOp} as described below.
The basic idea is that every node stores a \emph{range}, which is an under-approximation (i.e. always a subset) of its inset.
An operation on key $k$ proceeds by starting at the root and enters a loop at line \ref{give-up-loop-inv} where it follows edges that contain $k$ in their edgeset.
Between two nodes, the algorithm has no locks on either node, to allow for more parallelism.
Because of this, when it arrives at some node $c$, the first thing it
does after locking it is to check $c$'s range field (the call to
\code{inRange} in line \ref{give-up-inRange}) to ensure that the
dictionary operation is at a node whose inset\footnote{$\inflowClassOfInt{\interface}(x)$ is shorthand for $\inflow(x)$ for some $\inflow \in \inflowClassOfInt{\interface}$.} contains $k$. 
If the check succeeds, then it calls another helper, \code{findNext} at line \ref{give-up-findNext}, that checks if there exists a node $n$ such that $k$ is in the edgeset of $(c, n)$.
If there is no such node, then we know that $k$ must be in the keyset of $c$, and we break from the loop (line \ref{give-up-break}) while holding the lock on $c$.
If the \code{inRange} check fails, then the algorithm gives up and starts again from the root $r$ (line 13)\footnote{We could, in theory, jump to any ancestor of that node. This will require a variable to keep track of the ancestor, and similar reasoning can be used.}.
If the search continues, the algorithm first unlocks $c$ on line 15 before reassigning $c$ to the next node $n$.

When the algorithm breaks out of the loop, $k$ must be in the keyset of $c$ (line \ref{give-up-decisive-pre}), so it calls \code{decisiveOp} to perform the operation on $c$.
It then unlocks $c$, and returns the result of \code{decisiveOp}.

\paragraph{Proof}
The proof annotations for this template (also shown in \rF{fig-give-up-template}) are fairly straightforward, and entailments between them can be derived using the lemmas in \rF{fig-proof-rules}, assuming that the user-provided implementations of the helper functions satisfy the specifications in \rF{fig-helper-specs}.
Note that these specifications are in terms of the entire shared state, even though the helper functions only modify the current node $c$.
This is a limitation of $\rgsep$, and one can obtain local specifications for these functions by switching to a more advanced logic such as \cite{DBLP:conf/popl/Feng09}.
To reason about interference, we use the following actions to specify the modifications to the shared state allowed by a set of thread IDs $T$:
\begin{align*}
  t \in T \land \nodePred(x, (\inflowClass, (C, \set{0}), \flow)) & \rightsquigarrow \nodePred(x, (\inflowClass, (C, T'), \flow)) \land T' \subseteq \{t, \overline{t}\} \tag{Lock} \label{eqn-action-lock} \\
  t \in T \land \emp & \rightsquigarrow \nodePred(x, (\set{\set{x \goesto 0}}, (\emptyset, \{\overline{t}\}), \emptyFn)) \tag{Alloc} \label{eqn-action-alloc} \\
  t \in T \land \graphPred(\interface) \land \nodelabelOfInt{\interface} \nodelabelLeq (\_, \{t, \overline{t}\}) & \rightsquigarrow \graphPred(\interface')  \land \nodelabelOfInt{\interface'} \nodelabelLeq (\_, \{0, t, \overline{t}\}) \land \interface \intLessEquiv \interface' \tag{Sync} \label{eqn-action-sync}
\end{align*}
(\ref{eqn-action-lock}) allows a thread $t \in T$ to lock an unlocked node; (\ref{eqn-action-alloc}) allows $t$ to add new nodes with no inflow, contents, or outgoing edges; and (\ref{eqn-action-sync}) allows $t$ to modify a locked region arbitrarily, as long as the new interface contextually extends the old one.
The last action also allows it to unlock nodes it has locked.
The guarantee of thread with id $t_0$ is made up of the above actions with $T = \set{t_0}$, while the rely constitutes the above actions with $T = \Nat \setminus \set{t_0}$.
To complete the proof that the template algorithm is memory safe and preserves the global data structure invariant, one must show the stability of every intermediate assertion in \rF{fig-give-up-template}.
This can be done syntactically using our lemmas, and an example of such a proof can be seen in~\moreless{\rSc{sec-stability-proof}}{\cite[\S{}C]{techreport}}.

\paragraph{Proving Linearizability}
To prove that the template algorithm is linearizable, we adapt the Keyset Theorem to the language of flows.
To show that all atomic operations maintain the good state conditions, we require that every intermediate assertion in our proof implies $\Phi$.
This is true of assertions in our template proof, but also needs to be true in the intermediate assertions used to prove that implementations of helper functions meet their specification.
The condition about the query key $k$ being in the keyset of the node at which a thread performs its operation is captured by the specifications in \rF{fig-helper-specs}.
We additionally require that each dictionary operation only modifies the contents of the global once, and that no other operation modifies the contents.
We thus obtain the following re-statement of the Keyset Theorem:
\begin{theorem}
  \label{thm-give-up-template}
  An implementation of the give-up template from \rF{fig-give-up-template} is memory safe and linearizable if the following conditions hold:
  \begin{enumerate}
  \item The helper functions satisfy the specification in \rF{fig-helper-specs} under the rely and guarantee specified above, with all intermediate assertions implying $\boxed{\Phi} * \true$.
  \item Every execution of \code{decisiveOp} must have at most one call
    to $\syncCommand$ that changes the contents of the flow graph.
  \item All other calls to $\syncCommand$, including those in maintenance operations, do not change the contents of the graph region on which they operate.
  \end{enumerate}
\end{theorem}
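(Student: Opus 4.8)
The plan is to split the claim into its two halves, memory safety and linearizability, and reduce each to a result already in hand: soundness of $\rgsep$ for the former and the Keyset Theorem of \citet{dennis-keyset} for the latter. Throughout I would treat the annotated template of \rF{fig-give-up-template} as the skeleton of a single $\rgsep$ derivation, so that the bulk of the work is checking entailments and stability rather than constructing a proof from scratch.

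For memory safety, since $\logic$ is a sound instantiation of $\rgsep$, it suffices to close the derivation sketched in \rF{fig-give-up-template}. First I would verify that each annotation entails the next under the intervening command: the calls to \code{inRange}, \code{findNext}, and \code{decisiveOp} are handled by their specifications in \rF{fig-helper-specs}, and the purely logical steps by the lemmas of \rF{fig-proof-rules}---notably (\ref{rule-decomp}) to extract the current node from the global graph, and (\ref{rule-step}) with \rL{lem-flow-equals-inset} to descend along an edge whose edgeset contains $k$. Next I would check that every intermediate assertion is stable under the rely, i.e.\ under the actions (\ref{eqn-action-lock}), (\ref{eqn-action-alloc}), and (\ref{eqn-action-sync}) with $T = \Nat \setminus \set{t_0}$; this is the syntactic check that the same entailment lemmas are designed to support. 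Hypothesis~(1) supplies verified helper functions, so the derivation closes and soundness of $\rgsep$ rules out aborts.

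For linearizability I would discharge the two hypotheses of the Keyset Theorem: (a) every atomic step preserves the good-state conditions \ref{item-gs1}--\ref{item-gs3}, and (b) each operation performs its decisive step at a node whose keyset contains $k$. Part~(a) follows because hypothesis~(1) forces every intermediate assertion to imply $\boxed{\Phi} * \true$, together with the lemma that the flow graph of any $\Phi$-state satisfies \ref{item-gs1}--\ref{item-gs3}; the node labels distinguish $(\_, \set{t})$ from $(\_, \set{\overline{t}})$, so any transiently out-of-sync node is locked and invisible to other operations. For part~(b), the loop-exit annotation on line~\ref{give-up-decisive-pre} gives $k \in \inflowClassOfInt{\interface_c}(c)$ and $\forall x \in \flowmapOfInt{\interface_c}.\; k \notin \flowmapOfInt{\interface_c}(c, x)$; by \rL{lem-flow-equals-inset} the first puts $k$ in the inset of $c$ and the second puts $k$ in no outgoing edgeset, so $k$ lies in the keyset of $c$ by definition. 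Since $c$ is locked and keysets are disjoint by \ref{item-gs1}, no concurrent operation on $k$ can be at its own decisive step, so the \code{decisiveOp} call is a sound linearization point. Finally, hypotheses~(2) and~(3) pin this point down: the abstract dictionary contents are the union of node contents, which change only through a content-altering $\syncCommand$, and these two conditions ensure exactly one such change per operation, occurring at the decisive step whose effect is recorded by $\Psi$ in \rF{fig-helper-specs}. Combining with the Keyset Theorem yields linearizability.

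I expect the main obstacle to be part~(a): arguing rigorously that the good-state conditions are never broken at a moment another thread can observe, across the window in which the flow graph is out of sync with the heap. The heap and its abstraction may legitimately disagree during a $\syncCommand$, so I would have to lean on the fact that the region is locked (labels of the form $\overline{t}$) and that hypothesis~(3) preserves contents, to show that the keyset partition seen by every other thread is unchanged---so that the global invariant underlying the Keyset Theorem, though momentarily stale, is never actually violated where it matters.
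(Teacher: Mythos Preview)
Your proposal is correct and matches the paper's approach: the paper presents this theorem explicitly as a ``re-statement of the Keyset Theorem'' and the preceding paragraph gives exactly the reduction you describe---hypothesis~(1) ensures all atomic steps preserve the good-state conditions via $\Phi$, the specification of \code{decisiveOp} together with the loop-exit annotation supplies $k \in \keyset(c)$, and hypotheses~(2) and~(3) guarantee a unique content-changing step per operation. The paper does not spell out a formal proof beyond this paragraph, so your plan is, if anything, more detailed than what the paper provides.
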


Alternatively, our $\logic$ proof has established sufficient invariants to directly prove linearizability.
One way to do this~\cite{DBLP:conf/vmcai/Vafeiadis09} is to use auxiliary variables to track the abstract state of the data structure (we already have this as the set of contents in the global interface) and atomically execute the specification of each operation on this abstract state at the linearization point.
One can then use a write-once variable to store the result of the abstract operation, and at the end of the operation prove that the implementation returns the same value as the specification.

This template algorithm can now be instantiated to any concrete implementation by providing predicates $\gamma_g$ and $\gamma_b$ to describe the heap layout of the data structure and implementing the helper functions such that they satisfy the conditions of \rT{thm-give-up-template}.
If there are any maintenance operations, for example splitting or merging nodes, these must also satisfy the invariants in the theorem under the given rely and guarantee.
An example implementation of this template, the B+ tree, can be seen in~\moreless{\rSc{sec-btree}}{\cite[\S{}B]{techreport}}.


\section{Related Work}
\label{sec-related-work}

\paragraph{Abstraction Mechanisms in Separation Logic.}
The prevalent mechanism for abstracting unbounded heap regions in
separation logic is based on inductive predicates defined by
separating conjunctions~\cite{SL2,
  BerdineETAL04DecidableFragmentSeparationLogic,
  CooketALFragmentSepLog, DBLP:conf/cade/BrotherstonDP11,
  DBLP:conf/atva/IosifRV14, DBLP:conf/pldi/PekQM14,
  DBLP:conf/atva/EneaSW15}. For simple inductive data structures whose
implementation follows regular traversal patterns, inductive
predicates can be easier to work with than flow-based
abstractions. Certain abstractions, for instance abstracting a list as
a sequence of values, while possible to encode in flows, would be more
natural with inductive predicates. However, as discussed
in~\rSc{sec-overview}, inductive predicates are often ill-suited for
abstracting concurrent data structures. The reasons include the
dependence on specific traversal patterns in the inductive definitions
of the predicates, and the implied restrictions on expressing sharing
and data structure overlays with separating conjunction.

Ramifications~\cite{DBLP:conf/vstte/MehnertSBS12,
  Hobor:2013:RSD:2429069.2429131} offer an alternative approach to
reasoning about overlaid data structures and data structures with
unrestricted sharing. These approaches express inductively defined
graph abstractions by using overlapping conjunctions of subheaps
instead of separating conjunction.  However, this necessitates
complicated ramification entailments involving magic wand to reason
about updates.  We do not need to use combinations of separating
conjunction and overlapping conjunction or reason about entailments
involving the notorious magic wand.  Instead, we shift the reasoning
to the composition and decomposition of inflows and flow maps.  While
showing that a flow graph satisfies conditions on the flow in general
requires computing fixpoints over the graph, in proofs we only need to
reason about them when a heap region is modified.  These computations
are typically easy because concurrent algorithms modify a bounded number of
nodes at a time in order to minimize interference.  Finally, we also
obtain a uniform, and decoupled, way to reason about both shape and
data properties, yielding abstractions that generalize over a wide
variety of data structures such as in our encoding of the edgeset
framework.

Yet another alternative for abstracting arbitrary graphs in SL is to
use iterated separating conjunction~\cite{SL2,Yang01ShorrWaite,
  DBLP:conf/cav/0001SS16, DBLP:conf/aplas/RaadHVG16}. Similar to flow
interfaces, such abstractions are not tied to specific traversal
patterns and can capture invariants that are expressible as local
conditions on nodes such as that the graph is closed. However, unlike
flow interfaces, iterated separating conjunctions cannot capture
inductive properties of a graph (e.g. that the reachable nodes from
a root form a tree). In essence, flow interfaces occupy a sweet spot
between inductive predicates and iterated separating conjunctions.

\paragraph{Nondeterministic Monoidal Models of Separation Logic.}
Our notion of flow graph composition naturally yields a
nondeterministic monoidal model of SL where separating conjunction $*$
is interpreted as a ternary relation. However, the conventional
meta-theory of SL~\cite{DBLP:conf/lics/CalcagnoOY07,
  Dockins:2009:FLS:1696759.1696777} requires $*$ to be
partial-deterministic. We overcome this mismatch here by defining an
appropriate equivalence relation on flow graphs (or, more precisely,
their inflows) to enforce a functional interpretation of $*$. However,
this solution leads to a slightly stronger model than is strictly
necessary. It also adds some artificial complexity to the logic as the
inflow equivalence classes must be reasoned about at the syntactic
level. We believe that both of these issues can be avoided by
considering a nondeterministic monoidal semantics as the starting
point for the development of the program logic. Such semantics have
been studied in the context of substructural logics, Boolean
BI~\cite{Galmiche2006}, and more recently to obtain a more general
proof theory of propositional abstract
SL~\cite{Hou:2014:PSP:2535838.2535864}. To our knowledge, flow graphs
constitute the first example of a separation algebra with
nondeterministic monoidal structure that has practical applications in
program verification.

\paragraph{Concurrent Separation Logics.}

CSL was introduced by \citet{DBLP:conf/concur/OHearn04}. A recent article by~\citet{DBLP:journals/siglog/BrookesO16} provides a survey of the development of CSLs since
then. Among the many
improvements that have been developed are the idea of fractional
permissions to reason about shared
reads~\cite{Bornat:2005:PAS:1040305.1040327,
  DBLP:conf/vmcai/HeuleLMS13}, combinations of rely/guarantee
reasoning~\cite{DBLP:conf/concur/VafeiadisP07,
  DBLP:conf/esop/FengFS07}, and abstraction mechanisms for reasoning
about different aspects of concurrency such as synchronization
protocols for low-level lock
implementations~\cite{DBLP:conf/esop/NanevskiLSD14}, and atomicity
abstractions~\cite{DBLP:conf/ecoop/Dinsdale-YoungDGPV10,
  DBLP:conf/ecoop/PintoDG14, DBLP:conf/esop/XiongPNG17}. \techreport{We only
discuss the most closely related works and otherwise refer the reader
to the detailed discussion in~\cite{DBLP:journals/siglog/BrookesO16}.

Vafeiadis and Parkinson introduced
$\rgsep$~\cite{DBLP:conf/concur/VafeiadisP07} to marry rely/guarantee
reasoning with separation logic. $\rgsep$ allows the actions of
threads to be described by SL assertions, thereby abstracting from the
specifics of the synchronization mechanism between threads. We chose
$\rgsep$ as the basis for our logic because it provides a
good trade-off between flexibility and simplicity when reasoning about
concurrent data structure implementations. However, the logic also has
limitations. One limitation is that it does not allow the shared state
to be split and \emph{framed out}. This is less of an issue in our
work as the global shared state can always be described by a single
$\graphPred$ predicate and carrying this predicate through all proofs
does not add too much complexity. This limitation of $\rgsep$ was
addressed in local $\rgsep$~\cite{Feng:2009:LRR:1480881.1480922} and
our results can be easily adapted to that logic.

Another limitation of $\rgsep$ is that it lacks abstraction mechanisms
that enable compositional reasoning about data structure clients. A
number of solution to this problem have been proposed, including the
views framework~\cite{DBLP:conf/popl/Dinsdale-YoungBGPY13}, Concurrent
Abstract Predicates (CAP)~\cite{DBLP:conf/ecoop/Dinsdale-YoungDGPV10},
and TaDA~\cite{DBLP:conf/ecoop/PintoDG14}.} The abstractions provided
by \moreless{these logics}{the latter} are orthogonal to
the ones developed here. These logics have been used, e.g., to verify
specifications of concurrent dictionary implementations based on
B-trees and skip lists that enable compositional client
verification~\cite{DBLP:conf/oopsla/PintoDDGW11,
  DBLP:conf/esop/XiongPNG17}. We believe that the proofs developed
in~\cite{DBLP:conf/esop/XiongPNG17} can be further simplified by
introducing flow interfaces as an intermediate abstraction of the
considered data structures.

Higher-order concurrent separation
logic~\cite{Jung:2015:IMI:2676726.2676980,
  DBLP:conf/esop/Krebbers0BJDB17} can express ghost state within the
assertion language of the logic itself. This feature can be used to
eliminate the restriction of $\logic$ that the semantics of the flow
interface predicates is defined on the meta level and that it does not
support nesting of flow interface abstractions (i.e., cases where a
node of a flow graph should abstract from another flow graph contained
in the node). Similarly, higher-order CSL can express complex
linearizability proofs directly without relegating a part of the proof
argument to the meta level.

\paragraph{Invariant Inference.}
There is a large body of work on inferring invariants for
heap-manipulating programs (e.g.,
~\cite{DBLP:journals/toplas/SagivRW02}), including techniques based on
separation logic~\cite{DBLP:conf/tacas/DistefanoOY06,
  DBLP:conf/popl/CalcagnoDOY09, DBLP:conf/vmcai/Vafeiadis10}. Many of
these approaches rely on some form of abstract
interpretation~\cite{CousotCousot77AbstractInterpretation}. We believe
that the least fixpoint characterization of flows in flow interfaces
lends itself well to abstract interpretation techniques.

\section{Conclusion}
\label{sec-conclusion}

We have introduced flow interfaces as a novel approach to the
abstraction of unbounded data structures in separation logic. The
approach avoids several limitations of common solutions to such
abstraction, allows unrestricted sharing and arbitrary traversals of
heap regions, and provides a uniform treatment of data constraints. We
have shown that flow interfaces are particularly well suited for
reasoning about concurrent data structures and that they hold great
promise for developing automated techniques for reasoning about
implementation-agnostic abstractions.

\bibliography{references}

\techreport{
\newpage

\appendix
\section{The Harris List: Complete Proof Annotations}
\label{sec-harris-simple-appendix}

In the following proof sketches, $t$ is the thread ID of the current thread.
We use an augmented global shared state invariant that also enforces that $\freeListTail$ is a node in the shared state:
\[\Phi \defeq \exists \interface.\; \graphPred(\interface) \land \varphi(\interface), \quad \varphi(\interface) \defeq \exists \inflow \in \interface^{\inflow}.\; \inflow \intPlus \zerofun = \set{\mainListHead \goesto (1, 0)} \intPlus \set{\freeListHead \goesto (0, 1)} \intPlus \zerofun \land \freeListTail \in \interface \land \interface^\flow = \emptyFn\]

\todo{Have a lemma saying $\Phi$ is preserved by Repl?}

\begin{lstlisting}
procedure insert() {
  $\annot{\boxed{\Phi}}$
  var l := mh; $\annot{\boxed{\nodePred(l, \interface_l) \sepincl \Phi} \land l = \mainListHead}$
  var r := getUnmarked(l.next);
  while (r != null && nondet()) { $\annot{\boxed{(\nodePred(l, \interface_l) * \nodePred(r, \interface_{r})) \sepincl \Phi}}$
    l := r; r := getUnmarked(l.next); $\annot{\boxed{(\nodePred(l, \interface_l)) \sepincl \Phi} \land \paren{r \neq null \impl \boxed{(\nodePred(l, \interface_l) * \nodePred(r, \interface_{r})) \sepincl \Phi}}}$
  }
  if (!isMarked(r)) { $\annot{\boxed{\nodePred(l, \interface_l) \sepincl \Phi} \land \neg M(r)}$
    var n := new Node(r, null); $\annot{\boxed{\nodePred(l, \interface_l) \sepincl \Phi} * n \mapsto r, \nullVal \land \neg M(r)}$
    mark(n, n); $\annot{\boxed{\nodePred(l, \interface_l) \sepincl \Phi} * \bracket{n \mapsto r, \nullVal}_{\interface_n} \land \neg M(r)}$ // Where: $\textcolor{blue}{\interface_n := (\set{n \goesto (0, 0)}, \bot, \emptyFn)}$
    atomic {    // CAS(l.next, r, n)
      if (l.next == r) {
        $\annot{\paren{\bracket{l \mapsto r, \_}_{\interface_l} \land \nodePred(l, \interface_l)} \sepincl \Phi * \bracket{n \mapsto r, \nullVal}_{\interface_n} \land \nodelabelOfInt{\interface_l} = \unmarked}$ // @(\ref{rule-conc})@
        $\annot{\paren{\bracket{l \mapsto r, \_}_{\interface_l} \land \nodePred(l, \interface_l)} * \graphPred(\interface_2) * \bracket{n \mapsto r, \nullVal}_{\interface_n} \land \nodelabelOfInt{\interface_l} = \unmarked \land \interface = \interface_l \intComp \interface_2 \land \varphi(\interface)}$ // @(\ref{rule-gr-decomp})@
        $\annot{\paren{\bracket{l \mapsto r, \_}_{\interface_l} \land \nodePred(l, \interface_l)} * \bracket{n \mapsto r, \nullVal}_{\interface_n} * \graphPred(\interface_2) \land \nodelabelOfInt{\interface_l} = \unmarked \land \interface' = \interface_l \intComp \interface_n \intComp \interface_2 \land \varphi(\interface')}$ // @(\ref{rule-comp}), (\ref{rule-add-in}), (\ref{rule-add-f})@
        l.next := n;
        $\annot{\bracket{\nodePred(l, \interface'_l)}_{\interface_l} * \bracket{n \mapsto r, \nullVal}_{\interface_n} * \graphPred(\interface_2) \land \nodelabelOfInt{\interface_l} = \unmarked \land \interface' = \interface_l \intComp \interface_n \intComp \interface_2 \land \varphi(\interface')}$
        $\annot{\bracket{\nodePred(l, \interface'_l) * \nodePred(n, \interface'_n)}_{\interface_1} * \graphPred(\interface_2) \land \nodelabelOfInt{\interface'_l} = \unmarked \land \nodelabelOfInt{\interface'_n} = \unmarked \land \interface' = \interface_l \intComp \interface_n \intComp \interface_2 \land \varphi(\interface') \land \interface_1 = \interface_l \intComp \interface_n \intEquiv \interface'_l \intComp \interface'_n}$
        var I$'_1$ :| $\bracket{\graphPred(\interface'_1)}_{\interface_1} \land \set{l \goesto \_, n \goesto \_} \in \inflowClassOfInt{\interface_1}$;
        sync(I$'_1$);
        $\annot{\nodePred(l, \interface'_l) * \nodePred(n, \interface'_n) * \graphPred(\interface_2) \land \nodelabelOfInt{\interface'_l} = \unmarked \land \nodelabelOfInt{\interface'_n} = \unmarked \land \interface' = \interface_l \intComp \interface_n \intComp \interface_2 \land \varphi(\interface') \land \interface_l \intComp \interface_n \intEquiv \interface'_l \intComp \interface'_n}$
        $\annot{\nodePred(l, \interface'_l) * \nodePred(n, \interface'_n) * \graphPred(\interface_2) \land \nodelabelOfInt{\interface'_l} = \unmarked \land \nodelabelOfInt{\interface'_n} = \unmarked \land \interface'' = \interface'_l \intComp \interface'_n \intComp \interface_2 \land \varphi(\interface'')}$
        var b := true; $\annot{(\nodePred(l, \interface'_l) * \nodePred(l, \interface'_l)) \sepincl \Phi \land \nodelabelOfInt{\interface'_l} = \unmarked \land \nodelabelOfInt{\interface'_n} = \unmarked \land b}$
      } else var b := false; $\annot{\boxed{\nodePred(l, \interface_l)
          \sepincl \Phi} * \bracket{n \mapsto r, \nullVal}_{\interface_n}
        \land \neg b}$
    }
    if (!b) { $\annot{\boxed{\nodePred(l, \interface_l) \sepincl \Phi} * \bracket{n \mapsto r, \_}_{\interface_n} \land \neg b}$
      unmark(n); $\annot{\boxed{\Phi} * n \mapsto r, \_}$
      free(n); $\annot{\boxed{\Phi}}$
      insert(); $\annot{\boxed{\Phi}}$
    }
  }
}

procedure delete() {
  $\annot{\boxed{\Phi}}$
  var l := mh; $\annot{\boxed{\nodePred(l, \interface_l) \sepincl \Phi} \land l = \mainListHead}$
  var r := getUnmarked(l.next);
  while (r != null && nondet()) { $\annot{\boxed{(\nodePred(l, \interface_l) * \nodePred(r, \interface_r)) \sepincl \Phi}}$
    l := r; r := getUnmarked(l.next); $\annot{\boxed{\nodePred(l, \interface_l) \sepincl \Phi} \land (r \neq \nullVal \impl \boxed{(\nodePred(l, \interface_l) * \nodePred(r, \interface_r)) \sepincl \Phi})}$
  }
  if (r == null) return;
  var x := r.next;
  if (isMarked(x)) delete();
  $\annot{\boxed{(\nodePred(l, \interface_l) * \nodePred(r, \interface_r)) \sepincl \Phi} \land \neg M(x)}$
  atomic {  // CAS(r.next, x, getMarked(x))
    if (r.next == x) { $\annot{\paren{\nodePred(l, \interface_l) * \paren{\bracket{r \mapsto x, \_}_{\interface_r} \land \nodePred(r, \interface_r)}} \sepincl \Phi \land \nodelabelOfInt{\interface_r} = \unmarked \land \inflowClassOfInt{\interface_r} = (1, 0)}$
      r.next := getMarked(x); $\annot{\paren{\nodePred(l, \interface_l) * \bracket{r \mapsto x, \_ \land \nodePred(r, \interface'_r)}_{\interface_r}} \sepincl \Phi \land \interface_r \intEquiv \interface'_r \land \nodelabelOfInt{\interface'_r} = t \land \inflowClassOfInt{\interface'_r} = (1, 0)}$
      var I$'_r$ :| $\bracket{\nodePred(r, \interface'_r)}_{\interface_r} \land \set{r \goesto \_} \in \inflowClassOfInt{\interface_r}$;
      sync(I$'_r$); $\annot{\paren{\nodePred(l, \interface_l) * \paren{\bracket{r \mapsto x, \_}_{\_} \land \nodePred(r, \interface'_r)}} \sepincl \Phi \land \nodelabelOfInt{\interface'_r} = t \land \inflowClassOfInt{\interface'_r}(r) = (1, 0)}$
      var b := true;
    } else var b := false;
  }
  if (b) { $\annot{\boxed{\paren{\bracket{r \mapsto x, \_}_{\_} \land \nodePred(r, \interface_r)} \sepincl \Phi} \land l \in \interface \land \nodelabelOfInt{\interface_r} = t \land \inflowClassOfInt{\interface_r}(r) = (1, 0) \land r \neq \freeListTail}$
    while (true) { $\annot{\boxed{\paren{\nodePred(\freeListTail, \interface_f) * \paren{\bracket{r \mapsto x, \_}_{\_} \land \nodePred(r, \interface_r)}} \sepincl \Phi} \land l \in \interface \land \nodelabelOfInt{\interface_r} = t \land \inflowClassOfInt{\interface_r}(r) = (1, 0)}$
      atomic {  // CAS(ft.fnext, null, r)
        if (ft.fnext == null) {
          $\annot{\paren{\paren{\bracket{\freeListTail \mapsto \_, \nullVal}_{\_} \land \nodePred(\freeListTail, \interface_f)} * \paren{\bracket{r \mapsto x, \_}_{\_} \land \nodePred(r, \interface_r)}} \sepincl \Phi \land l \in \interface \land \nodelabelOfInt{\interface_r} = t \land \inflowClassOfInt{\interface_r}(r) = (1, 0)}$
          ft.fnext := r; $\annot{\bracket{\nodePred(\freeListTail, \interface'_f) * \paren{\bracket{r \mapsto x, \_}_{\_} \land \nodePred(r, \interface'_r)}}_{\interface_1} \sepincl \Phi \land l \in \interface \land \interface_1 = \interface_f \intComp \interface_r \intEquiv \interface'_f \intComp \interface'_r \land \inflowClassOfInt{\interface'_r} = (1, 1)}$
          var I$'_1$ :| $\bracket{\graphPred(\interface'_1)}_{\interface_1} \land \set{\freeListTail \goesto \_, r \goesto \_} \in \inflowClassOfInt{\interface_1}$;
          sync(I$'_1$); $\annot{\paren{\nodePred(\freeListTail, \interface'_f) * \paren{\bracket{r \mapsto x, \_}_{\_} \land \nodePred(r, \interface'_r)}} \sepincl \Phi \land l \in \interface \land \inflowClassOfInt{\interface'_r} = (1, 1)}$
          var c := true;
        } else var c := false;
      }
      if (c) { $\annot{\boxed{\paren{\nodePred(\freeListTail, \interface_f) * \paren{\bracket{r \mapsto x, \_}_{\_} \land \nodePred(r, \interface_r)}} \sepincl \Phi} \land l \in \interface \land \inflowClassOfInt{\interface_r} = (1, 1)}$
        ft := r; $\annot{\boxed{\paren{\bracket{r \mapsto x, \_}_{\_} \land \nodePred(r, \interface_r)} \sepincl \Phi} \land l \in \interface}$
        break;
      }
    } $\annot{\boxed{\paren{\bracket{r \mapsto x, \_}_{\_} \land \nodePred(r, \interface_r)} \sepincl \Phi} \land l \in \interface}$
    atomic {  // CAS(l.next, r, x)
      if (l.next == r) { $\annot{\paren{\paren{\bracket{l \mapsto r, \_}_{\_} \land \nodePred(l, \interface_l)} * \paren{\bracket{r \mapsto x, \_}_{\_} \land \nodePred(r, \interface_r)}} \sepincl \Phi}$
        l.next := x; $\annot{\bracket{\nodePred(l, \interface'_l) * \nodePred(r, \interface'_r)}_{\interface_1} \sepincl \Phi \land \interface_1 = \interface_l \intComp \interface_r \intEquiv \interface'_l \intComp \interface'_r}$
        var I$'_1$ :| $\bracket{\graphPred(\interface'_1)}_{\interface_1} \land \set{l \goesto \_, r \goesto \_} \in \inflowClassOfInt{\interface_1}$;
        sync(I$'_1$);  $\annot{\paren{\nodePred(l, \interface'_l) * \nodePred(r, \interface'_r)} \sepincl \Phi}$
      }
    } $\annot{\boxed{\Phi}}$
  } else { $\annot{\boxed{\Phi}}$
    delete(); $\annot{\boxed{\Phi}}$
  }
}
\end{lstlisting}


\section{The B+ Tree}
\label{sec-btree}

\tikzset{%
  array/.style={matrix of nodes,nodes={draw, minimum size=5mm, anchor=center},column sep=-\pgflinewidth, row sep=-\pgflinewidth, nodes in empty cells,anchor=center},
  ptr/.style={*->, shorten <=-(1.8pt+1.4\pgflinewidth)},
  edge/.style={->},
  btnodeold/.style={circle, draw=black, thick, minimum size=5mm},
  btnode/.style={draw=black, rounded corners, minimum width=3.6cm},
  phantomNode/.style={circle, fill=none, inner sep=0pt, minimum size=0pt}
}

\begin{figure}[t]
  \begin{minipage}{1.0\linewidth}
    \centering
    \begin{tikzpicture}[>=stealth, font=\footnotesize, scale=0.8, every node/.style={scale=0.8}]


      \matrix[array] (rkeys) {3 & & \\};
      \node (rran) at ($(rkeys.east) + (.6, 0)$) {$(-\infty, \infty)$};
      \matrix[array] (rptrs) at ($(rkeys.south) + (0, -.2)$) { & & & \\};
      \node[minimum size=5mm]  (rlen) at ($(rptrs.east) + (.4, 0)$) {$1$};
      \node[btnode, fit={(rkeys) (rptrs) (rlen) (rran)}] (r) {};

      \matrix[array, below=2 of rkeys] (ckeys) {5 & 7 & \\};
      \node (cran) at ($(ckeys.east) + (.6, 0)$) {$(3, \infty)$};
      \matrix[array] (cptrs) at ($(ckeys.south) + (0, -.2)$) { & & & \\};
      \node[minimum size=5mm]  (clen) at ($(cptrs.east) + (.4, 0)$) {$2$};
      \node[btnode, fit={(ckeys) (cptrs) (clen) (cran)}] (c) {};

      \matrix[array, below=2 of ckeys] (nkeys) {5 & & \\};
      \node (nran) at ($(nkeys.east) + (.6, 0)$) {$(5, 7)$};
      \matrix[array] (nptrs) at ($(nkeys.south) + (0, -.2)$) { & & & \\};
      \node[minimum size=5mm]  (nlen) at ($(nptrs.east) + (.4, 0)$) {$1$};
      \node[btnode, fit={(nkeys) (nptrs) (nlen) (nran)}] (n) {};

      \matrix[array, left=2.5 of nkeys] (y0keys) {3 & 4 & \\};
      \node (y0ran) at ($(y0keys.east) + (.6, 0)$) {$(3, 5)$};
      \matrix[array] (y0ptrs) at ($(y0keys.south) + (0, -.2)$) { & & & \\};
      \node[minimum size=5mm]  (y0len) at ($(y0ptrs.east) + (.4, 0)$) {$2$};
      \node[btnode, fit={(y0keys) (y0ptrs) (y0len) (y0ran)}] (y0) {};

      \matrix[array, right=2.5 of nkeys] (y2keys) {7 & 8 & \\};
      \node (y2ran) at ($(y2keys.east) + (.6, 0)$) {$(7, \infty)$};
      \matrix[array] (y2ptrs) at ($(y2keys.south) + (0, -.2)$) { & & & \\};
      \node[minimum size=5mm]  (y2len) at ($(y2ptrs.east) + (.4, 0)$) {$2$};
      \node[btnode, fit={(y2keys) (y2ptrs) (y2len) (y2ran)}] (y2) {};

      \node[phantomNode] (cl) at (y0 |- c.north) {$\dotsc$};

      \draw[ptr] (rptrs-1-2.center) to (c);
      \draw[ptr] (rptrs-1-1.center) to (cl);
      \draw[ptr] (cptrs-1-1.center) to (y0.north);
      \draw[ptr] (cptrs-1-2.center) to (n);
      \draw[ptr] (cptrs-1-3.center) to (y2.north);
    \end{tikzpicture}
    \subcaption{}\label{fig-b-tree-a}
  \end{minipage}

    \vspace{1cm}

    \begin{minipage}{1.0\linewidth}
    \centering
    \begin{tikzpicture}[>=stealth, font=\footnotesize, scale=0.8, every node/.style={scale=0.8}]
      \tikzstyle{gnode}=[circle, draw=black, thick, minimum size=1cm]

      \node[gnode] (r) {$\set{}$};
      \node[gnode, below= of r] (c) {$\set{}$};
      \node[gnode, below=1.2cm of c ] (n) {$\set{5}$};
      \node[gnode, left=1.3cm of n] (y0) {$\set{3, 4}$};
      \node[gnode, right=1.3cm of n] (y2) {$\set{7, 8}$};
      \node[phantomNode] (cl) at (y0 |- c.north) {$\dotsc$};

      \draw[edge] (r) to node[right] {$\set{3 \leq k < \infty}$} (c);
      \draw[edge] (r) to (cl);
      \draw[edge] (c) to node[above left] {$\set{3 \leq k < 5}$} (y0);
      \draw[edge] (c) to node[fill=white] {$\set{5 \leq k < 7}$} (n);
      \draw[edge] (c) to node[above right] {$\set{7 \leq k < \infty}$} (y2);
    \end{tikzpicture}
    \subcaption{}\label{fig-b-tree-b}
    \end{minipage}
  \caption{A B+ tree \subref{fig-b-tree-a} and its representation as a flow graph \subref{fig-b-tree-b}. The nodes in the B+ tree contain an array of keys on the top row, an array of pointers in the bottom row, the range in the top right, and $l$ (number of keys) in the bottom right. The flow graph edges are labelled with the edgesets and nodes with their contents. Lock information is omitted from this figure.}
  \label{fig-b-tree}
\end{figure}
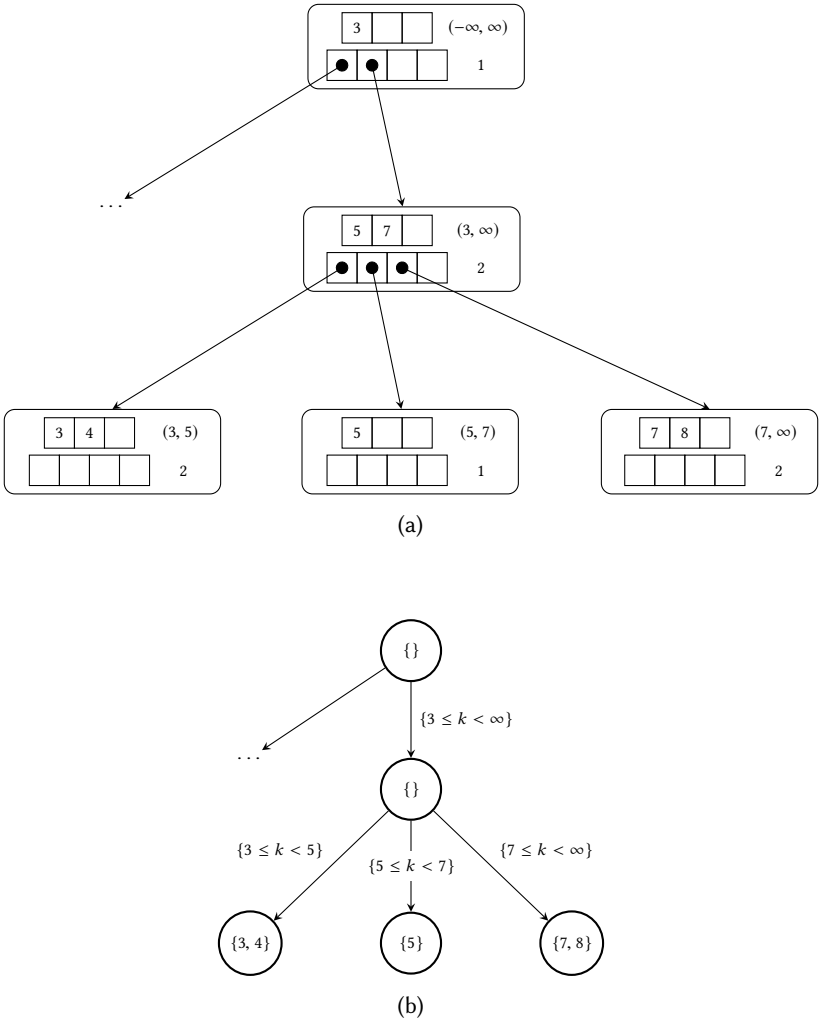

In this section we describe the B+ tree data structure and show how it is an instantiation of the give-up template, hence ensuring its memory safety and linearizability.

A B+ tree is a generalization of a binary search tree that implements a dictionary data structure.
In a binary search tree (BST), each node contains a value $k$ and up to two pointers $y_l$ and $y_r$.
An operation takes the left branch if its query key is less than $k$ and the right branch otherwise.
A B+ tree generalizes this by having $l$ values $k_0, \dots, k_{l-1}$
and $l+1$ pointers $y_0, \dots, y_{l}$ at each node, such that $l+1$
is between $B$ and $2B$ for some constant $B$.
At internal nodes, an operation takes the branch $y_i$ if its query key is between $k_{i-1}$ and $k_{i}$.
From the perspective of the edgeset framework, the edgeset of an edge to $y_i$ is $\setcomp{k}{k_{i-1} \leq k < k_{i}}$.
We store the inset of each node in a range field, shown on the top right of each node as two values $(r_0, r_1)$ that denote the set $\setcomp{k}{r_{0} \leq k < r_{1}}$.
The values stored at a leaf node are keys and make up the contents of the B+ tree.
Nodes also contain $l$, the number of values in a node, and a lock field to store the thread ID of the thread that holds a lock on the node.
\rF{fig-b-tree-a} shows an example of a B+ tree with $B = 2$, and the corresponding flow graph is shown in \rF{fig-b-tree-b}.

In our code, we assume that a B+ tree node is implemented by the following struct definition:
\begin{lstlisting}
struct Node {
  Int lock, len;
  Key range[2];
  Key keys[2B];
  Node ptrs[2B];
}
\end{lstlisting}
This node structure is described in our logic by the following instantiations of the $\gamma_g$ and $\gamma_b$ predicates:
\begin{align*}
  \gamma_g(x, \inflow, C, t, \flow) & \defeq \exists l, r_0, r_1, \vec{k}, \vec{y}.\; x \mapsto t, l, r_0, r_1, k_0, \dots, k_{2B-1}, y_0, \dots, y_{2B-1} \\
  & \quad \land 0 \leq l < 2B \land \inflow(x) = \setcomp{k}{r_0 \leq k < r_1} \land r_0 \leq k_0 \land k_{l-1} < r_1 \\
  & \quad \land C = \ite(y_0 \neq \nullVal, \emptyset, \setcomp{k_i}{0 \leq i < l}) \\
  & \quad \land \flow = \setcomp{(x, y_i) \goesto \setcomp{k}{\ite(i \leq 0, -\infty, k_{i-1}) \leq k < \ite(i \geq l, \infty, k_{i})}}{y_i \neq \nullVal} \\
  & \quad \land \forall 0 < i < 2B.\; (i < l \impl k_{i-1} < k_i) \land (i \geq l \impl k_i = \nullVal) \\
  & \quad \land \exists l'.\;  (l' = 0 \lor l' = l + 1) \land \forall i < 2B.\; (l' \leq i \iff y_i = \nullVal) \\
  \\
  \gamma_b(x, t) &\defeq x \mapsto t, \_, \dots, \_
\end{align*}
We also re-write the global data structure invariant as follows, to simplify notation in our proofs:
\[\Phi \defeq \exists \interface.\; \graphPred(\interface) \land \varphi(\interface) \qquad \varphi(\interface) \defeq \exists \inflow \in \inflowClassOfInt{\interface}.\; \inflow \intPlus \zerofun = \set{r \goesto \KS} \intPlus \zerofun \land \flowmapOfInt{\interface} = \emptyFn\]

We now show the instantations of the helper functions for the give-up template algorithm, along with the implementation of the delete operation (the insert and member operations are similar).
The code is annotated with intermediate assertions that all imply the global data structure invariant $\Phi$.
We assume that $t$ is the thread ID of the current thread.
We have omitted the stability proofs, but these are syntactic and are similar to the example stability proof shown in \rSc{sec-stability-proof}.

\begin{lstlisting}
procedure lock(Node c) { $\annot{\boxed{\nodePred(c, \interface_c) \sepincl \Phi}}$
  while (true) { $\annot{\boxed{\nodePred(c, \interface_c) \sepincl \Phi}}$
    atomic {  // Action: @(\ref{eqn-action-lock})@
      if (c.lock == 0) { $\annot{\paren{\bracket{c \mapsto 0, \_, \dotsc, \_}_{\_} \land \nodePred(c, \interface_c)}  * \graphPred(\interface_2) \land \interface = \interface_c \intComp \interface_2 \land \varphi(\interface)}$
        c.lock := t; $\annot{\bracket{\nodePred(c, \interface'_c)}_{\interface_c} * \graphPred(\interface_2) \land \interface = \interface_c \intComp \interface_2 \land \varphi(\interface) \land \interface_c \intEquiv \interface'_c \land \nodelabelOfInt{\interface_c} = (C, 0) \land \nodelabelOfInt{\interface'_c} = (C, t)}$
        var I$'_c$ :| $\bracket{\graphPred(\interface'_c)}_{\interface_c} \land \set{c \goesto \_} \in \inflowClassOfInt{\interface_c}$;
        sync(I$'_c$); $\annot{\nodePred(c, \interface'_c) * \graphPred(\interface_2) \land \interface' = \interface'_c \intComp \interface_2 \land \varphi(\interface') \land \nodelabelOfInt{\interface'_c} = (C, t)}$
        break;
      }
    }
  } $\annot{\boxed{\nodePred(c, \interface'_c) \sepincl \Phi} \land \nodelabelOfInt{\interface'_c} = (\_, \set{t})}$
}

procedure unlock(Node c) { $\annot{\boxed{\nodePred(c, \interface_c) \sepincl \Phi} \land \nodelabelOfInt{\interface_c} = (\_, \set{t})}$
  atomic {  // Action: @(\ref{eqn-action-sync})@
    $\annot{\paren{\bracket{c \mapsto t, \_, \dotsc, \_}_{\_} \land \nodePred(c, \interface_c)} * \graphPred(\interface_2) \land \interface = \interface_c \intComp \interface_2 \land \varphi(\interface)}$
    c.lock := 0; $\annot{\bracket{\nodePred(c, \interface'_c)}_{\interface_c} * \graphPred(\interface_2) \land \interface = \interface_c \intComp \interface_2 \land \varphi(\interface) \land \interface_c \intEquiv \interface'_c \land \nodelabelOfInt{\interface_c} = (C, t) \land \nodelabelOfInt{\interface'_c} = (C, 0)}$
    var I$'_c$ :| $\bracket{\graphPred(\interface'_c)}_{\interface_c} \land \set{c \goesto \_} \in \inflowClassOfInt{\interface_c}$;
    sync(I$'_c$); $\annot{\nodePred(c, \interface'_c) * \graphPred(\interface_2) \land \interface' = \interface'_c \intComp \interface_2 \land \varphi(\interface') \land \nodelabelOfInt{\interface'_c} = (C, 0)}$
  } $\annot{\boxed{\nodePred(c, \interface'_c) \sepincl \Phi} \land \nodelabelOfInt{\interface'_c} \not\nodelabelLeq (\KS, \set{t})}$
}

procedure inRange(Node c, Key k) { $\annot{\boxed{\nodePred(c, \interface_c) \sepincl \Phi} \land \nodelabelOfInt{\interface_c} = (\_, \set{t})}$
  $\annot{\boxed{\bracket{\gamma_g(c, \_, \_, t, \_)}_{\interface_c} \sepincl \Phi} \land \nodelabelOfInt{\interface_c} = (\_, \set{t})}$
  if (c.range[0] <= k && k < c.range[1]) {
    return true; $\annot{\boxed{\nodePred(c, \interface_c) \sepincl \Phi} \land \nodelabelOfInt{\interface_c} = (\_, \set{t}) \land k \in \inflowClassOfInt{\interface_c}(c)}$
  }
  return false; $\annot{\boxed{\nodePred(c, \interface_c) \sepincl \Phi} \land \nodelabelOfInt{\interface_c} = (\_, \set{t}) \land k \not\in \inflowClassOfInt{\interface_c}(c)}$
}

procedure findNext(Node c, Key k) { $\annot{\boxed{\nodePred(c, \interface_c) \sepincl \Phi} \land \nodelabelOfInt{\interface_c} = (\_, \set{t})}$
  $\annot{\boxed{\bracket{c \mapsto t, l, \_, \_, \vec{k}, \vec{y}}_{\interface_c} \sepincl \Phi} \land \nodelabelOfInt{\interface_c} = (\_, \set{t})}$
  var i := 0;
  while (i < c.len && k >= c.keys[i]) {
    i := i + 1;
  } $\annot{\boxed{\bracket{c \mapsto t, l, \_, \_, \vec{k}, \vec{y}}_{\interface_c} \sepincl \Phi} \land \nodelabelOfInt{\interface_c} = (\_, \set{t}) \land \ite(i \leq 0, -\infty, k_{i-1}) \leq k < \ite(i \geq l, \infty, k_i) }$
  if (i == c.len) {
    return null; $\annot{\boxed{\nodePred(c, \interface_c) \sepincl \Phi} \land \nodelabelOfInt{\interface_c} = (\_, \set{t}) \land n = \nullVal \land \forall x \in \flowmapOfInt{\interface_c}.\; k \not\in \flowmapOfInt{\interface_c}(c, x)}$
  } else {
    return c.ptrs[i]; $\annot{\boxed{\nodePred(c, \interface_c) \sepincl \Phi} \land \nodelabelOfInt{\interface_c} = (\_, \set{t}) \land n \neq \nullVal \land k \in \flowmapOfInt{\interface_c}(c, n)}$
  }
}

procedure delete(Node c, Key k) { $\annot{\boxed{\nodePred(c, \interface_c) \sepincl \Phi} \land \nodelabelOfInt{\interface_c} = (C, \set{t}) \land k \in \inflowClassOfInt{\interface_c}(c) \land \forall x \in \flowmapOfInt{\interface_c}.\; k \not\in \flowmapOfInt{\interface_c}(c, x)}$
  $\annot{\boxed{\nodePred(c, \interface_c) \sepincl \Phi} \land k \in \inflowClassOfInt{\interface_c}(c) \land \flowmapOfInt{\interface_c} = \emptyFn}$  // Because union of outgoing edgesets is $\KS$
  var i := 0;
  while (i < c.len && k >= c.keys[i]) {
    i := i + 1;
  } $\annot{\boxed{\bracket{c \mapsto \set{t, l, \_, \_, \vec{k}, \_}}_{\interface_c} \sepincl \Phi} \land k \in \inflowClassOfInt{\interface_c}(c) \land \flowmapOfInt{\interface_c} = \emptyFn \land 0 \leq i \leq l \land (i = l \lor k < k_i)}$
  if (i == c.len || c.keys[i] != k) {
    return false; $\annot{\boxed{\nodePred(c, \interface_c) \sepincl \Phi} \land \nodelabelOfInt{\interface_c} = (C, \set{t}) \land k \not\in C}$
  } else { $\annot{\boxed{\bracket{c \mapsto \set{t, l, \_, \_, \vec{k}, \_}}_{\interface_c} \sepincl \Phi} \land 0 \leq i < l \land k = k_i}$
    // First change node label to $\overline{t}$ to allow breaking good condition
    var I$'_c$ :| $\graphPred(\interface_c) \land \set{c \goesto \_} \in \inflowClassOfInt{\interface_c} \land \interface_c \intEquiv \interface'_c \land \nodelabelOfInt{\interface_c} = (C, \set{t}) \land \nodelabelOfInt{\interface'_c} = (C, \{\overline{t}\})$;
    sync(I$'_c$);  // Action: @(\ref{eqn-action-sync})@
    $\annot{\boxed{\bracket{c \mapsto \set{t, l, \_, \_, \vec{k}, \_}}_{\interface'_c} \sepincl \Phi} \land 0 \leq i < l \land k = k_i \land \nodelabelOfInt{\interface'_c} = (\_, \{\overline{t}\})}$
    while (i < c.len - 1) {  // All these actions are also @(\ref{eqn-action-sync})@s, with unchanged interfaces
      c.keys[i] := c.keys[i + 1];
      i := i + 1;
    }
    c.keys[i] := null;
    $\annot{\boxed{\paren{\bracket{\nodePred(c, \interface''_c)}_{\interface'_c} \land \nodePred(c, \interface'_c)} \sepincl \Phi} \land \interface'_c \intEquiv \interface''_c \land \nodelabelOfInt{\interface'_c} = (C, \{\overline{t}\}) \land \nodelabelOfInt{\interface''_c} = (C', \set{t}) \land C' = C \setminus \set{k}}$
    var I$''_c$ :|  $\bracket{\graphPred(\interface''_c)}_{\interface_c} \land \set{c \goesto \_} \in \inflowClassOfInt{\interface_c}$;
    sync(I$''_c$);  // Action: @(\ref{eqn-action-sync})@
    $\annot{\boxed{\nodePred(c, \interface''_c) \sepincl \Phi} \land \nodelabelOfInt{\interface''_c} = (C', \set{t}) \land k \not\in C'}$
    return true;
  }
}
\end{lstlisting}


\section{Example Stability Proof}
\label{sec-stability-proof}

In $\rgsep$, to show that an assertion $S$ is stable under an action $P \rightsquigarrow Q$, we need to show that $(P \septract S) * Q \impl S$, where $\septract$ is the \emph{septraction} operator.
Septraction is the dual operator to magic wand ($P \septract Q \iff \neg (P \magicwand \neg Q)$), and intuitively $(P \septract S)$ is the state obtained when we remove a state satisfying $P$ from a state satisfying $S$.

As an example, we now prove that the intermediate assertion at line~\ref{give-up-inRange} of \rF{fig-give-up-template} is stable under the action (\ref{eqn-action-sync}).
This shows that if a thread has a reference to a node $c$ that it has locked and it knows that $k$ is in the inset of $c$, then this knowledge is preserved after any combination of interference by other threads.
Moreover, the global invariant of the data structure described by the good node condition is maintained.
Note that this proof does not depend on the concrete good condition predicates, and so holds for any concrete implementation.
Proving stability of all intermediate assertions is similarly easy, and furthermore the proofs are purely syntactic, and use generic lemmas that are not specific to the inset flow.

\begin{lemma}
  The formula $\nodePred(c, \interface_c) \sepincl \Phi \land \nodelabelOfInt{\interface_c} = (\_, \set{t}) \land k \in \inflowClassOfInt{\interface_c}(c)$ is stable under action (\ref{eqn-action-sync}).
\end{lemma}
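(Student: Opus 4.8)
The plan is to discharge the $\rgsep$ stability obligation $(P \septract S) * Q \models S$ directly, instantiating the action (\ref{eqn-action-sync}) with the rely choice $T = \Nat \setminus \set{t}$, so that $P = \graphPred(\interface) \land \nodelabelOfInt{\interface} \nodelabelLeq (\_, \set{t', \overline{t'}})$ and $Q = \graphPred(\interface') \land \nodelabelOfInt{\interface'} \nodelabelLeq (\_, \set{0, t', \overline{t'}}) \land \interface \intLessEquiv \interface'$ for an arbitrary interfering thread $t' \neq t$. Unfolding the septraction, a state satisfying $(P \septract S) * Q$ splits as a frame $\sigma_f$ composed with a region $\sigma_Q \models Q$, where $\sigma_f \cdot \sigma_P \models S$ for some $\sigma_P \models P$; semantically $\sigma_P$ is the region the interfering thread overwrites (old interface $\interface$), $\sigma_Q$ its replacement (new interface $\interface'$), and $\sigma_f$ the common frame. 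The whole argument hinges on the fact that the node $c$, locked by the current thread $t$, lies inside $\sigma_f$ and is left untouched, so that its local data and its inset are preserved, while the global invariant $\Phi$ survives because the modification is a contextual extension.

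First I would show that $c$ belongs to the frame rather than to the modified region. From $\sigma_f \cdot \sigma_P \models S$ we have $\nodePred(c, \interface_c)$ with $\nodelabelOfInt{\interface_c} = (\_, \set{t})$, while $\sigma_P \models P$ gives $\nodelabelOfInt{\interface} \nodelabelLeq (\_, \set{t', \overline{t'}})$. Since $t$ is an unbarred thread id with $t \neq t'$, we have $(\_, \set{t}) \not\nodelabelLeq (\_, \set{t', \overline{t'}})$, so by (\ref{rule-disj}) the node $c$ is disjoint from the $\interface$-region and hence lives in the frame. Consequently one and the same frame flow graph $\graphIn_f$ (with $c \in \dom(\graphIn_f)$) occurs both in the old composition $\graphIn_f \graphInComp \graphIn_\interface$ and in the new one $\graphIn_f \graphInComp \graphIn_{\interface'}$.

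The heart of the proof is that $\interface_c$ is reproduced verbatim. Because $\graphIn_f$ is literally the same graph in both compositions, the flow graph composition rule forces its inflow equivalence class to agree in both cases (the composite inflow must project into the frame's fixed class), so $\netflow$ over $\graphIn_f$, and in particular the flow at $c$, is identical before and after. By \rL{lem-projection-impl-inset-equal} this flow equals the inflow recorded in $\interface_c$, and since $\interface_c$'s node label and flow map are local to $c$ they are also unchanged; hence $\nodelabelOfInt{\interface_c} = (\_, \set{t})$ and $k \in \inflowClassOfInt{\interface_c}(c)$ persist. It then remains to re-establish $\Phi$ for the new global graph: by \rT{thm-replacement} applied to $\interface \intLessEquiv \interface'$ the new global interface $\interface_g'$ satisfies $\interface_g \intLessEquiv \interface_g'$, whence (\ref{rule-repl-in}) keeps the root inflow $\set{r \goesto \KS}$ in $\inflowClassOfInt{\interface_g'}$ and (\ref{rule-repl-f}) keeps $\flowmapOfInt{\interface_g'} = \emptyFn$, giving $\varphi(\interface_g')$; goodness of $\interface_g'$ follows from (\ref{rule-gr-comp}) since $\sigma_Q \models \graphPred(\interface')$ and the frame is good. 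Collecting these facts yields $\nodePred(c, \interface_c) \sepincl \Phi$ together with the two extra conjuncts, i.e. $S$.

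The main obstacle is the flow-preservation step above: the inset $\inflowClassOfInt{\interface_c}(c)$ is a global quantity accumulated over all paths out of the root, so a priori a modification anywhere could perturb it. The crux is to reduce this to the purely algebraic fact that the untouched frame carries a fixed inflow equivalence class across both compositions, which is exactly what the flow graph algebra (\rT{thm-flow-graphs-separation-algebra}) and the capacity-preservation property underlying contextual extension guarantee. Some care is also needed to reuse the existential witness $\interface_c$ in $S$ unchanged, and to note that the argument never requires $c \neq r$ — it uses only that $c$ resides in the unchanged frame.
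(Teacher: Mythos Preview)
Your proposal is correct and shares the paper's key insights: using the node-label mismatch ($t$ versus $\set{t',\overline{t'}}$) together with (\ref{rule-disj}) to conclude that $c$ lies outside the modified region, and then invoking the Replacement machinery (\rT{thm-replacement}, (\ref{rule-repl-in}), (\ref{rule-repl-f})) to carry $\varphi$ across to the new global interface. The difference is in style. The paper works entirely at the syntactic level: it manipulates the septraction formula using the SL identity $P \septract Q \equiv P \septract (Q \land (P*\true))$, uses (\ref{rule-gr-decomp}) repeatedly to expose a decomposition $\interface = \interface_1 \intComp \interface_c \intComp \interface_3$, then applies the precise-septraction law $P \septract (P * Q) \models Q$ to strip out the old region and splice in $\graphPred(\interface'_1)$. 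You instead unfold septraction semantically into a frame $\sigma_f$ and a replaced region $\sigma_P \leadsto \sigma_Q$, and argue directly from the fixed inflow class of $\graphIn_f$ that the singleton interface $\interface_c$ is reproduced. Your route is arguably more illuminating about \emph{why} stability holds (the frame's flow graph literally does not change), but the paper's route is the one the framework is engineered for: the paper emphasises that stability is a ``syntactic check'' dischargeable with the generic lemmas of \rF{fig-proof-rules}, and its proof exhibits exactly that. One small point: your appeal to \rL{lem-projection-impl-inset-equal} to recover $\interface_c$ is slightly loose; the cleaner justification is associativity of $\graphInComp$ (part of \rT{thm-flow-graphs-separation-algebra}), which ensures that the singleton interface of $c$ pulled from the global graph agrees with the one pulled from the frame alone.
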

\begin{proof}
  
\begin{enumerate}
\item We start with
  \[ \paren{\paren{\graphPred(\interface_1) \land \nodelabelOfInt{\interface_1} \nodelabelLeq (\_, \{t', \overline{t'}\})} \septract \paren{\nodePred(c, \interface_c) \sepincl \Phi \land \nodelabelOfInt{\interface_c} = (\_, \set{t}) \land k \in \inflowClassOfInt{\interface_c}(c)}} \]
  \[ {} * \graphPred(\interface'_1)  \land \nodelabelOfInt{\interface'_1} \nodelabelLeq (\_, \set{0, t'}) \land \interface_1 \intLessEquiv \interface'_1\]
\item By expanding $\Phi$, the spatial part of this formula is
  \[ \paren{\graphPred(\interface_1) \septract \paren{\nodePred(c, \interface_c) \sepincl \graphPred(\interface)}} * \graphPred(\interface'_1) \]
\item $\nodePred(c, \interface_c) \sepincl \graphPred(\interface) \equiv (\nodePred(c, \interface_c) * \true) \land \graphPred(\interface)$, so we use (\ref{rule-gr-decomp}) to get
  \[ \paren{\graphPred(\interface_1) \septract \paren{\paren{\nodePred(c, \interface_c) * \graphPred(\interface_2)} \land \graphPred(\interface)}} * \graphPred(\interface'_1) \]
\item We use the SL identity $P \septract Q \equiv P \septract (Q \land (P * \true))$ to get
  \[ \paren{\graphPred(\interface_1) \septract \paren{\paren{\graphPred(\interface_1) * \true} \land \paren{\nodePred(c, \interface_c) * \graphPred(\interface_2)} \land \graphPred(\interface)}} * \graphPred(\interface'_1) \]
\item By (\ref{rule-gr-decomp})
  \[ \paren{\graphPred(\interface_1) \septract \paren{\paren{\graphPred(\interface_1) * \graphPred(\_)} \land \paren{\nodePred(c, \interface_c) * \graphPred(\interface_2)} \land \graphPred(\interface)}} * \graphPred(\interface'_1) \]
\item As $\nodelabelOfInt{\interface_c} = (\_, \set{t}) \land \nodelabelOfInt{\interface_1} \nodelabelLeq (\_, \{t', \overline{t'}\}) \land t \neq t' \models \nodelabelOfInt{\interface_c} \not\nodelabelLeq \nodelabelOfInt{\interface_1}$, use (\ref{rule-disj})
  \[ \paren{\graphPred(\interface_1) \septract \paren{\paren{\graphPred(\interface_1) * \nodePred(c, \interface_c) * \graphPred(\interface_3)} \land \graphPred(\interface)}} * \graphPred(\interface'_1) \]
\item \label{step-new-int} Using (\ref{rule-gr-comp}) and (\ref{rule-uniq}) we get $\interface = \interface_1 \intComp \interface_c \intComp \interface_3$.
\item \label{step-new-state} By $P \septract (P * Q) \impl Q$ for precise $P$,
  \[ \nodePred(c, \interface_c) * \graphPred(\interface_3) * \graphPred(\interface'_1) \]
\item As $\interface_1 \intLessEquiv \interface'_1$, we use (\ref{rule-repl}) to get $\interface = \interface_1 \intComp \interface_c \intComp \interface_3 \intLessEquiv \interface'_1 \intComp \interface_c \intComp \interface_3 = \interface'$.
\item The pure parts of $\Phi$ were: $\exists \inflow \in \inflowClassOfInt{\interface}.\; \inflow \intPlus \zerofun = \set{r \goesto \KS} \intPlus \zerofun \land \flowmapOfInt{\interface} = \emptyFn$.
  By (\ref{rule-repl-in}) and (\ref{rule-repl-f}) on $\interface \intLessEquiv \interface'$, we get $\inflow \in \inflowClassOfInt{\interface'} \land \flowmapOfInt{\interface'} = \emptyFn$.
\item Now use (\ref{rule-gr-comp}) on (\ref{step-new-state}), and by (\ref{step-new-int}) we have
  \[ \paren{\nodePred(c, \interface_c) * \graphPred(\interface_3) * \graphPred(\interface'_1)} \land \graphPred(\interface') \]
\item This implies
  \[ \nodePred(c, \interface_c) \sepincl \Phi \land \nodelabelOfInt{\interface_c} = (\_, \set{t}) \land k \in \inflowClassOfInt{\interface_c}(c) \]
\end{enumerate}
\end{proof}


}

\end{document}
